\newcommand{\ii}{\mathrm{i}}
\newcommand{\cH}{\mathcal{H}}
\newcommand{\ran}{\mathrm{ran}}
\newcommand{\ud}{\mathrm{d}}
\title{ On point interactions realised as Ter-Martirosyan--Skornyakov Hamiltonians }
\author{ Alessandro Michelangeli, Andrea Ottolini \thanks{ \;\; Supported by a 2014-2015 ``\emph{INdAM grant Progetto Giovani}'', by the 2014-2017 MIUR-FIR grant ``\emph{Cond-Math: Condensed Matter and Mathematical Physics}'' code RBFR13WAET, and by a visiting research fellowship at the International Center for Mathematical Research CIRM, Trento 2015.}\\ SISSA -- International School for Advanced Studies \\
				Via Bonomea 265, 34136 Trieste (Italy) \\ e-mail: alemiche@sissa.it, aottolini@sissa.it \\[2ex]
	}
\begin{document}

\maketitle
\begin{abstract}
      For quantum systems of zero-range interaction we discuss the mathematical scheme within which modelling the two-body interaction by means of the physically relevant ultra-violet asymptotics known as the ``Ter-Martirosyan--Skornyakov condition'' gives rise to a self-adjoint realisation of the corresponding Hamiltonian. This is done within the self-adjoint extension scheme of Kre{\u\i}n, Vi\v{s}ik, and Birman. We show that the Ter-Martirosyan--Skornyakov asymptotics is a condition of self-adjointness only when is imposed in suitable functional spaces, and not just as a point-wise asymptotics, and we discuss the consequences of this fact on a model of two identical fermions and a third particle of different nature.
\end{abstract}

\noindent
{\bf Keywords:} Point interactions, self-adjoint extensions, Kre{\u\i}n-Vi\v{s}ik-Birman theory, Ter-Martirosyan--Skornyakov operators.

\section{Introduction}

According to a nomenclature that has emerged in various physical and mathematical contexts,  one refers to the so-called Ter-Martirosyan--Skornyakov (henceforth  TMS)  operators  as a distinguished class of quantum Hamiltonians for systems of non-relativistic particles with two-body ``\emph{zero-range}'' (or ``\emph{contact}'', or ``\emph{point}'') interaction. This terminology stems from early works in nuclear physics, where it was the nucleon-nucleon coupling to be initially modelled as a ``contact'' interaction. Nowadays the typical experimental realisation is that of ultra-cold atom systems where, by Feshbach resonance methods, the two-body scattering length is tuned to a magnitude that exceeds by many orders its nominal value, and the effective range of the interaction shrinks correspondingly to a very small scale, so that to an extremely good approximation the interaction can be considered to be of  infinite scattering length and/or zero range. In Section \ref{sec:history_TMS} we will provide a more diffuse context and references.

Informally speaking,  TMS Hamiltonians are qualified by the two characteristics of acting as the $N$-body $d$-dimensional \emph{free} Hamiltonian on functions that are supported \emph{away} from the ``coincidence hyperplanes'' $\{x_i=x_j\}$, and of having a domain that consists of square-integrable functions $\Psi(x_1,\dots,x_N)$, possibly with  fermionic or bosonic exchange symmetry, which satisfy specific asymptotics when $|x_i-x_j|\to 0$ for some or for all particle couples $i,j$. This models an interaction supported only on the hyperplanes $\{x_i=x_j\}$. It is customary to refer to this ultra-violet asymptotics as the ``\emph{TMS condition}''.

The explicit form for the TMS condition has various versions (see, e.g., \eqref{eq:TMS-generic}, \eqref{eq:BP-generic}, \eqref{eq:Berezin-Faddeev-2}, \eqref{eq:TMS_recovered_1+1}, \eqref{eq:TMS_cond_asymptotics_1}, or \eqref{eq:TMS_cond_asymptotics_2+1} below), all essentially equivalent to each other. Noticeably, such asymptotics emerge from different contexts and languages: on the one side the physical heuristics for an ``effective'' low-energy two-body scattering due to an interaction of very short range, on the other side the mathematical theory of self-adjoint extensions of symmetric operators on Hilbert space.
This is a fascinating history of reciprocal influence and mutual inspiration between ``early days'' nuclear physics, modern condensed matter physics, and mathematical operator theory and self-adjoint extension theory. Section \ref{sec:history_TMS} below will partially survey it.

TMS Hamiltonians represent the modern operator-theoretic approach to multi-particle quantum systems with two-body point interaction, and have an intimate connection to the alternative approach based on energy quadratic forms. 
They arise as natural effective models, based on stringent physical heuristics on the behaviour of the many-body wave-function when \emph{two} particles come on top of each other. In many circumstances, however, which depend essentially on the mass of the particles and on possible additional symmetries of the system, a formal TMS Hamiltonian fails to be self-adjoint and each of its self-adjoint extensions accounts for a different behaviour of the system when \emph{three} particles get closer and closer to the same point. The question is then to identify these extensions (if more than one) and to study their stability and spectral properties.

This is even more so since quantum systems with zero-range interactions may exhibit two somewhat exotic phenomena, as compared with the case of ordinary finite range potentials: the so-called ``\emph{Thomas effect}'', namely the emergence of an infinite discrete sequence of bound states with negative energy diverging to $-\infty$ and eigenfunction collapsing onto the barycentre, and the ``\emph{Efimov effect}'', that consists of an infinite sequence of bound states with negative energy arbitrarily close to zero and eigenfunctions extending on a larger and larger spatial scale, with a non-square-integrable limit. The TMS Hamiltonians account for such phenomena as well.


In this work we put the emphasis on the mathematical scheme within which a two-body interaction modelled by means of  a TMS condition (a \emph{physical} requirement) does correspond to a self-adjoint Hamiltonian (a \emph{mathematical} constraint that ensures the well-posedness and the correct interpretation of the quantum model).

Here is how the general problem is posed, as we shall elaborate further in the historical review of  Section \ref{sec:history_TMS} and then in the concrete settings of Sections \ref{sec:2body-point} and \ref{sec:2+1}
\begin{itemize}
 \item[1.] One starts (mostly in $d=3$ dimensions) with the operator $\mathring{H}$ obtained by \emph{restricting} the $N$-body free Hamiltonian to the regular wave-functions that are supported away from the coincidence hyperplanes $\Gamma_{ij}:=\{x_i=x_j\}$. Additional partial or global symmetries for $\mathring{H}$ are possible, such as rotational symmetries, or exchange bosonic or fermionic symmetries for some or all of the $N$ particles. $\mathring{H}$ is clearly densely defined, symmetric, and positive, and any its self-adjoint extension is naturally interpreted as a model for an interaction supported at $\Gamma_{ij}$.
%
 \item[2.] The self-adjoint extensions of $\mathring{H}$ are restrictions of $\mathring{H}^*$. Thus, one first characterises the domain and the action of $\mathring{H}^*$ and then one selects a special class of restrictions of $\mathring{H}^*$, which are obtained by reducing the domain of $\mathring{H}^*$ to only those functions $\Psi$ satisfying, for some or all couples of variables $x_i,x_j$, the condition that, if $y_{ij}:=x_i-x_j$ and $(x_1,x_2,\dots,x_n)\mapsto (y_{ij},y_2,\dots,y_N)$ is a regular change of variables, then
 \begin{equation}\label{eq:TMS-generic}
 \!\!\int_{\substack{p_{ij}\in\mathbb{R}^d \\ |p_{ij}|\leqslant R}} \;\widehat{\Psi}(p_{ij},p_2,\dots,p_N)\,\ud p_{ij}\;=\;(R-\frac{1}{a_{ij}})\,\xi_{ij}(p_2,\dots,p_N)+o(1)\,,\;\; R\to +\infty\,.
 \end{equation}
 In \eqref{eq:TMS-generic} $(p_{ij},p_2,\dots,p_N)$ are the conjugate Fourier variables to $(y_{ij},y_2,\dots,y_N)$,  the function $\xi_{ij}$ depends on $\Psi$ and on the considered couple $i,j$, but not on $R$, and the constant $a_{ij}\in\mathbb{R}\cup\{\infty\}$ is prescribed and is $R$-independent too. 
 Equation \eqref{eq:TMS-generic} above is one version of the so-called Ter-Martirosyan--Skornyakov condition. It is based on stringent physical heuristics (see the discussion in Section \ref{sec:history_TMS} and \eqref{eq:BP-generic}-\eqref{eq:BP-generic-xi} below) that allow one to interpret it as an interaction in the $(i,j)$-channel with \emph{zero range} and $s$-wave scattering length equal to $-a_{ij}^{-1}$.
 \item[3.] The TMS condition selects a ``physical'' extension $H_\alpha$ of $\mathring{H}$ labelled by the conventional parametrisation $\alpha\equiv(-a_{ij}^{-1})_{ij}$. The mathematical problem is then to recognise $H_\alpha$ as a self-adjoint extension of $\mathring{H}$, or to identify and classify its self-adjoint extensions, and then to investigate its spectral and stability properties.
 \end{itemize}

The scheme above, which was made explicit for the first time by Minlos \cite{Minlos-1987} in 1987 (based on an old seminal but very concise work of Minlos and Faddeev \cite{Minlos-Faddeev-1961-1} in 1961), has been since then the object of numerous investigations that we will quote in Section \ref{sec:history_TMS} The general question of the self-adjoint realisation of $H_\alpha$ and of its stability and spectral properties is still open. An amount of partial information is available for special cases of $N$-body systems.

In this work, beyond placing the problem into a historical perspective (see Section \ref{sec:history_TMS}), we discuss how the TMS condition \eqref{eq:TMS-generic} can be proved to be a \emph{self-adjointness condition}, based on the extension theory specifically tailored for semi-bounded symmetric operators, as developed by Kre{\u\i}n, Vi\v{s}ik, and Birman. (We review the main results of this theory in Appendix \ref{app:KVB} and we refer to \cite{M-KVB2015} for a comprehensive discussion.)

The use of the Kre{\u\i}n-Vi\v{s}ik-Birman theory is in fact at the basis of all recent studies on the self-adjoint realisation of the operator $H_\alpha$ selected by the TMS condition \eqref{eq:TMS-generic}: in this context, it allows one to reduce the problem of the self-adjointness of $H_\alpha$ on the Hilbert space $L^2(\mathbb{R}^{Nd})$ (with possible symmetries) to the problem of the self-adjointness of a suitable integral operator on the space of the functions $\xi_{ij}$'s appearing in the asymptotics \eqref{eq:TMS-generic}. What we do here is on the first place to elaborate in full detail the precise application of the Kre{\u\i}n-Vi\v{s}ik-Birman theory to the TMS condition, with a discussion and through intermediate results that to our knowledge are not present in the literature. Further, we put for the first time the emphasis on the crucial difference between the TMS condition as a \emph{point-wise} identity, and the same condition interpreted as a suitable \emph{functional} identity, as we now explain.

In Section \ref{sec:2body-point}, which contains the first group of our results, we discuss the TMS condition for the simplest composite system possible, consisting of two particles with point interaction. This is a well-studied quantum system that is completely understood within the standard self-adjoint extension theory a la von Neumann. First we re-obtain the well-known Hamiltonian of the system solely by means of the Kre{\u\i}n-Vi\v{s}ik-Birman theory, a procedure that we did not manage to find elsewhere in the literature. This yields an alternative equivalent characterisation of the whole class of self-adjoint extensions of the ``away-from-hyperplanes'' free Hamiltonian $\mathring{H}$ that represented the starting point of the analysis -- see step 1 of the general scheme above. We then show that imposing the TMS condition to functions in the domain of $\mathring{H}^*$ reproduces, for all possible values of the scattering length $a$, \emph{all} the self-adjoint extensions of $\mathring{H}$. In particular this shows that the TMS condition is a self-adjointness condition.

In Section \ref{sec:2+1}, where the second group of our results is presented, we follow the same approach for a more complicated system consisting of three particles with point interaction -- we develop our explicit discussion for the so-called ``2+1''-fermionic system, two identical fermions coupled with a third particle of different nature. Exploiting again the general results of the Kre{\u\i}n-Vi\v{s}ik-Birman theory, we find that the TMS condition is a self-adjointness condition, which selects a sub-class of extensions of $\mathring{H}^*$, only if it is given as a suitable functional identity. As a generic point-wise identity, instead, the TMS condition is in general not even a condition for a symmetric extension of $\mathring{H}$. Furthermore, we show that the issue of the self-adjoint realisation of the TMS condition does indeed boil down to the self-adjointness problem of a simpler integral operator (that acts on the so-called ``space of charges'') as normally given for granted in the literature, \emph{but} on different (more regular) functional spaces than those considered so far.

This brings us to the natural follow-up of the present analysis, which was in fact our original motivation and that we intend now to develop in a future work. There are indeed not completely understood discrepancies in the literature between the ranges of the particle masses in which the TMS condition for three-body systems is shown to be a self-adjointness condition, or instead to give rise to a symmetric extension of $\mathring{H}$ with its own family of self-adjoint extensions. Such discrepancies emerge between the operator-theoretic approach sketched above and an alternative approach through quadratic forms (i.e., the construction of a closed and semi-bounded quadratic form such that the function in the domain of self-adjoint operator that realises it display the TMS asymptotics \eqref{eq:TMS-generic}). Re-visiting the operator-theoretic approach in view of our present findings is likely to account for an explanation. We briefly elaborate on this point in the final Section \ref{sec:developments}



\section{A retrospective on Ter-Martirosyan--Skornyakov Hamiltonians for point interactions}\label{sec:history_TMS}

In this Section we present the historical emergence of TMS operators in the physics and mathematics  of point interactions. It therefore should \emph{not} be regarded as a complete review on the history of point interactions!

In the 1930's Quantum Mechanics began to be applied to the newly observed nuclear phenomena. At first, the decrease by a factor $10^{-5}$ from the atomic to the nuclear scale made it plausible to model the interaction among nucleons as a delta-like interaction. 

In 1932 Wigner \cite{Wigner-1933} calculated that the nuclear forces interaction must be of very short range and very strong magnitude. This led three years later first Bethe and Peierls \cite{Bethe_Peierls-1935,Bethe_Peierls-1935-np} and then Thomas \cite{Thomas1935} to describe the neutron-proton scattering by means of the two-body Schr\"{o}dinger equation in the approximation of a potential of very short range, an approach subsequently developed by Fermi \cite{Fermi-1936} and Breit  \cite{Breit-1947} with the introduction of the so-called ``delta pseudo-potential''. Formulated in modern terms, the celebrated ``\emph{Bethe-Peierls contact condition}'', which is still today ubiquitous in many formal physical treatments, prescribes on the basis of physical heuristics that the wave-function $\Psi(x_1,\dots,x_N)$ of $N$ three-dimensional particles subject to a two-body zero-range interaction of scattering length $a_{ij}$ among particles $i$ and $j$ behaves asymptotically as 
\begin{equation}\label{eq:BP-generic}
\Psi(x_1,\dots,x_N)\;\approx\;\Big(\frac{1}{|x_i-x_j|}-\frac{1}{a_{ij}}\Big)\qquad\textrm{ as }|x_i-x_j|\to 0
\end{equation}
where the point-wise limit \eqref{eq:BP-generic} is meant as
\begin{equation}\label{eq:BP-generic-xi}
\Psi(x_1,\dots,x_N)\;=\;\Big(\frac{1}{|x_i-x_j|}-\frac{1}{a_{ij}}\Big)\,\xi(Z)+o(1)
\end{equation}
for some function $\xi$ of the variable $Z$ in the hyperplane $\{x_i=x_j\}$. Clearly, what makes this approximation appealing, and computationally advantageous, is its dependence on few parameters only (the $a_{ij}$'s), instead of the complete knowledge of the interaction.

While Bethe and Peierls had studied the problem of \emph{two} low-energy nucleons with contact interaction and obtained \eqref{eq:BP-generic} for the two-body problem, Thomas had considered the \emph{three-body} problem showing that as the range of the two-body forces tends to zero the ground state of the three-body system can approach $-\infty$, even when the ground state energies of all two-body subsystems remain constant. This effect, referred to since then as the ``\emph{Thomas effect}'', was the first evidence that the deceptively simple three-body problem with zero-range interaction has a much richer (and potentially much more complicated) phenomenology than the analogous two-body problem.

The next extensive study of a system of \emph{three} low-energy nucleons appeared some 20 years later, in 1955, due to Ter-Martirosyan and Skornyakov \cite{TMS-1956}, two nuclear physicists who credited Landau for the ideas they exploited. They assumed that the Bethe-Peierls condition remains valid in \emph{each} two-body channel and they used it as boundary condition for solving the eigenvalue problem for the three-body Schr\"{o}dinger equation with formal delta-like two-body potentials. For brevity let us revisit their conclusion in the simplest case of spinless identical particles, for which they made use of standard centre-of-mass Jacobi coordinates $y_1=x_1-\frac{1}{2}(x_2+x_3)$, $y_{23}=x_2-x_3$ and introduced a formal delta-like potential $\delta(y_1)\delta(y_{23})$ for wave-functions $\Psi(y_1,y_{23})$. By suitably expressing $\widehat{\Psi}(p,q)$ in terms of an auxiliary function $\widehat{\xi}(p)$ (here $(p,q)$ and $(y_1,y_{23})$ are Fourier conjugate variables), they found that (in units $\hbar=1$ and particle mass $=1$) $\Psi$ is a bound state of energy $-E<0$ whenever $(\widehat{\xi},E)$ is a solution to the integral equation
\begin{equation}\label{eq:TMS-eqn-3bosons}
\alpha+2\pi^2\sqrt{\,\frac{3}{4}p^2+E\,}\,\widehat{\xi}(p)+2\int_{\mathbb{R}^3}\frac{\widehat{\xi}(q)}{p^2+q^2+p\cdot q+E}\,\ud q\;=\;0\,,
\end{equation}
where $\alpha:=-1/a$  and $a$ is the $s$-wave scattering length in each two-body channel.

Analogues to equation \eqref{eq:TMS-eqn-3bosons} were later found for other systems with point interactions, among which, to mention those that have received the largest attention, three-body systems with different symmetries (three distinguishable particles, two identical fermions plus a third particle of different nature, etc.), four-body systems with two distinct couples of identical fermions, and more generally $N+M$ systems with $N$ identical fermions of one type  plus $M$ identical fermions of another type. Each equation of this class is today referred to as a ``\emph{Ter-Martirosyan--Skornyakov equation}''.

The subject proved soon to be worth a deeper understanding, despite the effectiveness of the description. On the one hand the manipulations of Ter-Martirosyan and Skornyakov (as well as the previous ones by Bethe, Peierls, Thomas, Fermi, and Breit) were rather formal for mathematical standards and called for a more rigorous justification. On the other hand, an evidence of some sort of indeterminate physical description emerged when Danilov \cite{Danilov-1961} in 1961 observed that the TMS equation \eqref{eq:TMS-eqn-3bosons} has a solution $\widehat{\xi}$ for arbitrary values of $E$, with large momentum asymptotics
\begin{equation}\label{eq:Danilov-many}
\widehat{\xi}(p)\;=\;\frac{1}{p^2}\,\big(A_E\sin(s_0 \ln|p|)+B_E\cos(s_0\ln|p|)\big)+o\Big(\frac{1}{p^{2}}\Big)\,,
\end{equation}
where $s_0>0$ is an explicit universal constant and $A_E,B_E>0$ are two further constants that depend on $E$.
Inspired by ideas (mainly of Gribov) by which some additional ``experimental'' parameter that cannot be computed  using only two-body experimental data was needed for the full description of the three-body system, Danilov  proposed an ad hoc removal of this non-physical continuum of eigenvalues on $(-\infty,0)$ by constraining the solutions to \eqref{eq:TMS-eqn-3bosons} to have the form \eqref{eq:Danilov-many} with
\begin{equation}\label{eq:Danilov-restriction}
A_E\;=\;\beta B_E
\end{equation}
for some additionally prescribed parameter $\beta\in\mathbb{R}$. $\beta$ was in some vague sense given the meaning of a three-body parameter, as opposite to $\alpha$ in \eqref{eq:TMS-eqn-3bosons} which  is a parameter of the two-body problem for each couple of bosons. Under the restriction \eqref{eq:Danilov-restriction}, equation \eqref{eq:TMS-eqn-3bosons} has a discrete and infinite set of solutions $(\widehat{\xi}_n,E_n)$, with energies $-E_n\to -\infty$ as $n\to+\infty$ according to the asymptotics
\begin{equation}\label{eq:3bosons_EV_asymptotics_Thomas}
-E_n\;=-3 \exp\Big(\frac{2\pi n}{s_0}-\frac{2}{s_0}\arctan\frac{1}{\beta}\Big)(1+o(1))\,,
\end{equation}
a quantitative manifestation of the Thomas effect.

In modern mathematical terms, the phenomenon noted by Danilov is understood as follows: the three-body point-interaction Hamiltonian implicitly identified by Ter-Martirosyan and Skornyakov by means of the condition \eqref{eq:TMS-eqn-3bosons} for its eigenstates at given two-body scattering length $-\alpha^{-1}$, is \emph{not} a self-adjoint operator and it admits a one-parameter family of self-adjoint extensions, labelled by $\beta\in\mathbb{R}$; for each $\beta$, the corresponding self-adjoint Hamiltonian has a countable discrete spectrum accumulating exponentially to $-\infty$ with corresponding eigenfunctions that collapse onto the barycentre; the union of the negative spectra of all such self-adjoint extensions is the whole negative real line.

Motivated by the scheme of Ter-Martirosyan and Skornyakov for the three-body problem with point interaction and by Danilov's observation, Minlos and Faddeev \cite{Minlos-Faddeev-1961-1,Minlos-Faddeev-1961-2} in the same year 1961 provided essentially the whole explanation above, including the asymptotics \eqref{eq:Danilov-many} and \eqref{eq:3bosons_EV_asymptotics_Thomas}, in the form of two beautiful short announcements, albeit with no proofs or further elaborations. Theirs can be considered as the beginning of the mathematics of quantum systems with zero-range interactions. This is even more so because for the first time the problem was placed within a general mathematical framework, the theory of self-adjoint extensions of semi-bounded symmetric operator, that 
Kre{\u\i}n, Vi\v{s}ik, and Birman had developed between the mid 1940's and the mid 1950's (see Appendix \ref{app:KVB}).

A somewhat different approach characterised the start of the mathematical study of the \emph{two-body} problem. In 1960-1961, a few months before the works of  Minlos and Faddeev on the three-body problem, Berezin and Faddeev \cite{Berezin-Faddeev-1961} published the first rigorous analysis of a three-dimensional model with two particles coupled by a delta-like interaction. The emphasis was put in realising the formal Hamiltonian $-\Delta+\delta(x)$ as a self-adjoint extension of the  restriction $-\Delta|_{C^\infty_0(\mathbb{R}^3\setminus\{0\})}$ (in the relative variable $x=x_1-x_2$ between the two particles). Working in Fourier transform, they recognised that the latter operator  has deficiency indices $(1,1)$ and they characterised the whole family $\{H_\alpha\,|\,\alpha\in\mathbb{R}\}$ of its self-adjoint extensions as the operators 
\begin{equation}\label{eq:Berezin-Faddeev-1}
\widehat{(H_\alpha\psi)}(p)\;=\;p^2\widehat{\psi}(p)-\lim_{R\to\infty}\frac{1}{4\pi  R}\int_{\substack{\,p\in\mathbb{R}^3 \\ \! |p|<R}}\widehat{\psi}(q)\,\ud q
\end{equation}
defined on the domain of $L^2(\mathbb{R}^3)$-functions $\psi$ such that, as $R\to\infty$,
\begin{equation}\label{eq:Berezin-Faddeev-2}
\int_{\substack{\,p\in\mathbb{R}^3 \\ \! |p|<R}}\widehat{\psi}(q)\,\ud q\;=\;c \,(R+2\pi^2\alpha)+o(1)\quad\textrm{ and }\quad \int_{\mathbb{R}^3}|H_\alpha\psi|^2\ud x<\infty\,.
\end{equation}
(For a more direct comparison -- see \eqref{eq:TMS_cond_asymptotics_1} in the following -- we have replaced here the parameter $\alpha$ of the notation of \cite{Berezin-Faddeev-1961} with $-(8\pi^3\alpha)^{-1}$.)

As \eqref{eq:Berezin-Faddeev-1}-\eqref{eq:Berezin-Faddeev-2} were only announced with no derivation, with a sole reference to the monograph \cite{Akhiezer-Glazman-1961-1993} of Akhiezer and Glazman on linear operators in Hilbert space, we are to understand that Berezin and Faddeev came to their conclusion by methods of von Neumann's self-adjoint extension theory, as presented in \cite[Chapter VII]{Akhiezer-Glazman-1961-1993}, combined with explicit calculations in Fourier transform. This leaves the question open on why they did not approach the extension problem within the same language of Kre{\u\i}n, Vi\v{s}ik, and Birman, as used by Minlos and Faddeev for the three-body case. In this language, as we work out in Section \ref{sec:2body-point}, \eqref{eq:Berezin-Faddeev-1}-\eqref{eq:Berezin-Faddeev-2} would have emerged as a very clean application of the general theory and, most importantly, the asymptotics in \eqref{eq:Berezin-Faddeev-2} would have arisen with a natural and intimate connection with the TMS equation \eqref{eq:TMS-eqn-3bosons}. 
Berezin and Faddeev rather focused on re-interpreting the action of the Hamiltonian $H_\alpha$ as a renormalised rank-one perturbation of the free Laplacian, re-writing \eqref{eq:Berezin-Faddeev-1} in position coordinates as
\begin{equation}
H_\alpha\psi\;=\;-\Delta\psi-\frac{1}{4\pi\alpha}\,\lim_{R\to\infty}\frac{1}{\,2\pi^2+R/\alpha\,}\frac{\sin R|x|}{|x|}\int_{\mathbb{R}^3}\frac{\sin R|y|}{|y|}\psi(y)\,\ud y\,.
\end{equation}
We conjecture that they did not know the old work of Bethe and Peierls for two nucleons, or they did not consider it relevant in their context, for no word is spent in \cite{Berezin-Faddeev-1961} to derive the singularity $\psi(x)\sim |x|^{-1}$ as $|x|\to 0$ from their asymptotics \eqref{eq:Berezin-Faddeev-2}.

With the subsequent theoretical and experimental advances in nuclear physics -- the initial playground for models of point interactions -- it became clear that the assumption of zero range was only a crude simplification of no fundamental level. 
The lack of a physically stringent character for the idealisation of zero range in experimentally observed quantum-mechanical systems, and the somewhat obscure emergence of the unboundedness from below for the self-adjoint realisations of the three-body Hamiltonian, decreased the physical interest towards point interactions and left their rigorous study in a relatively marginal position, and the approach of Ter-Martirosyan and Skornyakov quiescent. 
Moreover, after Faddeev published in 1963 his fundamental work \cite{Faddeev-1963-eng-1965-3body} on the three-body problem with regular two-body forces, the concern of the physicists switched over to the numerical solutions of the corresponding Faddeev equations.
In the Russian physical literature, mainly under the input of Faddeev, methods and models of point interactions, albeit not fully rigorous, moved their applicability  to atomic and molecular physics, a mainstream that ideally culminates with the late 1970's monograph of Demkov and Ostrovskii \cite{Demkov-Ostrovskii-book} on the ``zero-range potentials'' and their application to atomic physics. 
The use of formal delta-like potentials remained for some decades   as a tool for a formal first-order perturbation theory; in addition, the Kre{\u\i}n-Vi\v{s}ik-Birman self-adjoint extension theory  lost ground to von Neumann's theory in the literature in English language on the mathematics for quantum mechanics -- it rather evolved in more modern forms in application to boundary value problems for partial differential equations, mainly in the modern theory of boundary triplets.

It is the merit of Albeverio, Gesztesy, and H\o{}egh-Krohn, and their collaborators (among whom, Streit and Wu), in the end of the 1970's and throughout the 1980's, to have unified an amount of previous investigations by establishing a proper mathematical branch on rigorous models of point interactions, with a systematic study of \emph{two-body} Hamiltonians  and of \emph{one-body} Hamiltonians with finite or infinitely many \emph{fixed centers} of point interaction. We refer to the monograph \cite{albeverio-solvable} for a comprehensive overview on this production, and especially to the end-of-chapter notes in \cite{albeverio-solvable} for a detailed account of the previous contributions.
The main tools in this new mainstream were: von Neumann's extension theory on the first place (hence with no reference any longer to the methods of Kre{\u\i}n-Vi\v{s}ik-Birman), by which point interaction Hamiltonians were constructed as self-adjoint extensions of the restriction of the free Laplacian to functions that vanish in a neighbourhood of the point where the interaction is supported; resolvent identities (of Kre{\u\i}n and of  Konno-Kuroda type, see \cite[Appendices A and B]{albeverio-solvable}) by which these self-adjoint extensions were recognised to be finite-rank perturbations of the free Laplacian, in the resolvent sense, and were also re-obtained by  resolvent limits of Schr\"{o}dinger Hamiltonians with shrinking potentials; plus an amount of additional methods (Dirichlet quadratic forms, non-standard analysis methods, renormalisation methods) for specific problems.

Let us emphasize, in particular, that the original heuristic arguments of Bethe and Peierls and their two-body contact condition  find a rigorous ground based on the fact, which can be proved within von Neumann's extension theory (see, e.g., \cite[Theorems I.1.1.1 and I.1.1.3]{albeverio-solvable}), that any self-adjoint extension of $\Delta|_{C^\infty_0(\mathbb{R}^3\setminus\{0\})}$ on $L^2(\mathbb{R}^3)$ has a domain whose elements behave as $\psi(x)\sim(|x|^{-1}+\alpha)$ as $|x|\to 0$, as an $s$-wave (hence a ``low-energy'') boundary conditions, for some $\alpha\in(-\infty,+\infty]$.

As for the initial three-body problem with two-body point interaction, it finally re-gained centrality from the mathematical point of view (while physically a stringent experimental counterpart was still lacking) around the end of the 1980's and throughout the 1990's. This was  first due to Minlos and his school \cite{Minlos-1987,Minlos-Shermatov-1989,mogilner-shermatov-PLA-1990,Menlikov-Minlos-1991,Menlikov-Minlos-1991-bis,Minlos-TS-1994,Shermatov-2003} (among which Melnikov, Mogilner, and Shermatov), by means of the operator-theoretic approach used for three identical bosons by Minlos and Faddeev, and slightly later due to Dell'Antonio and his school \cite{Teta-1989,dft-Nparticles-delta,DFT-proc1995} (among which Figari and Teta), with an approach based on  quadratic forms, where the ``physical'' energy form  is first regularised by means of an ultra-violet cut-off and a suitable renormalisation procedure, and then is shown to be realised by a self-adjoint Hamiltonian. An alternative direction was started further later by Pavlov and a school that included Kuperin, Makarov, Melezhik, Merkuriev, and Motovilov, \cite{Kuperin-Makarov-Merk-Motovilov-Pavlov-1989-JMP1990,Makarov-Melezhik-Motovilov-1995}, by indroducing internal degrees of freedom, i.e., a spin-spin contact interaction, so as to realise semi-bounded below three-body Hamiltonians.

After a further period of relative quiescence, the subject has been experiencing a new boost, due to the last decade's rapid progress in the manipulation techniques for ultra-cold atoms and, in particular, for tuning the effective $s$-wave scattering length by means of a magnetically induced Feshbach resonance \cite[Section 5.4.2]{pethick02}.
This has made it possible, among others, to prepare and study ultra-cold gases in   the so-called ``\emph{unitary regime}'' \cite{Castin-Werner-2011_-_review}, i.e., the case of negligible two-body interaction range and huge, virtually infinite, two-body scattering length (both lengths being compared to a standard reference length such as the Bohr radius). 
In such a regime, unitary gases show properties, including superfluidity, that have the remarkable feature of being universal in several respects \cite{Braaten-Hammer-2006}, and are under active experimental and theoretical investigation. As we do not have space here for an  outlook on such an active field,  we refer to the overview given in the introductory sections of the works \cite{michelangeli-schmidbauer-2013,MP-2015-2p2} and to the references therein. Let us only underline that from the experimental point of view, zero-range interactions in ultra-cold atom physics are today  far from being just an idealisation of real-world two-body potentials with small support and in many realisations the zero-range, delta-like character of the interaction turns out to be an extremely realistic and in fact an unavoidable description.

In turn, all this has brought new impulse and motivations to the already developing  mathematical research on the subject, with a series of fundamental contributions in the last few years  \cite{Minlos-2011-preprint_May_2010,Minlos-2010-bis,Minlos-2012-preprint_30sett2011,Finco-Teta-2012,CDFMT-2012,Minlos-2012-preprint_1nov2012,Minlos-RusMathSurv-2014,CDFMT-2015}, many of which provide rigorous ground to experimental or numerical evidence on the physical side.

\section{Two-body point interaction \`a la Ter-Martirosyan--Skornyakov}\label{sec:2body-point}

The Hamiltonian of point interaction between two particles in three dimension is well known since the first rigorous attempt \cite{Berezin-Faddeev-1961} by Berezin and Faddeev  in 1961, which we have already mentioned in Section \ref{sec:history_TMS}, and the seminal work \cite{AHK-1981-JOPTH} by Albeverio and H\o{}egh-Krohn  in 1981. In 
\cite[Chapter I.1]{albeverio-solvable} one can find the complete discussion of the self-adjoint realisation of this operator, its explicit domain and action,  its resolvent, its spectral properties, its approximation by short-range potentials, and its scattering theory.

In the first part of this Section we shall re-obtain this Hamiltonian and its main properties within the self-adjoint extension scheme of 
Kre{\u\i}n-Vi\v{s}ik-Birman, as opposite to von Neumann's scheme used in the above works. We follow this line both for general reference, because to our knowledge this approach has never been  worked out in the literature, and above all because we need to establish the grounds for the second part of this Section, where we shall realise the point interaction \`a la Ter-Martirosyan--Skornyakov. For the tools from the Kre{\u\i}n-Vi\v{s}ik-Birman theory we shall make use of, we refer to the Appendix \ref{app:KVB} and, more diffusely, to the work \cite{M-KVB2015}.

\subsection{Point interaction Hamiltonian through the Kre{\u\i}n-Vi\v{s}ik-Birman theory}\label{subsec:2delta}

The starting point is the  operator
\begin{equation}\label{eq:def_Hdot_start}
\mathring{H}\;=\;-\Delta\,,\qquad\mathcal{D}(\mathring{H})\;=\;H^2_0(\mathbb{R}^3\!\setminus\!\{0\})\,,
\end{equation}
which is clearly a densely defined, symmetric, closed, and positive operator on the Hilbert space $L^2(\mathbb{R}^3)$. The variable $x\in\mathbb{R}^3$ has the meaning of relative variable between the two particles: after removing the centre of mass of the two-body system, the only relevant problem is in the relative variable. $\mathring{H}$ is the closure of the negative Laplacian restricted to the smooth functions compactly supported  away from the origin, and
\begin{equation}
H^2_0(\mathbb{R}^3\!\setminus\!\{0\})\;=\;\overline{\,C^\infty_0(\mathbb{R}^3\!\setminus\!\{0\})\,}^{\|\,\|_{H^2}}.
\end{equation}
The space above is clearly a closed subspace of $H^2(\mathbb{R}^2)$, and it is also proper, as shown in \eqref{eq:actionHdot_to_f} and \eqref{eq:H20approx} below.

The free Hamiltonian on $C^\infty_0(\mathbb{R}^3\!\setminus\!\{0\})$ is the natural starting point when one aims at constructing a singular interaction supported only at $x=0$, and any self-adjoint extension of this operator has the natural interpretation of a ``candidate'' Hamiltonian for the point interaction.

Throughout this discussion it will be convenient to work in Fourier transform. We therefore re-write \eqref{eq:def_Hdot_start} by means of the following simple Lemma:

\begin{lemma}{Lemma}\label{lemma:dom_Hdot_Ftransform}
For the operator $\mathring{H}$ defined in \eqref{eq:def_Hdot_start} one has
\begin{itemize}
 \item[(i)]
 the domain and the action of $\mathring{H}$ are given by
\begin{equation}
\mathcal{D}(\mathring{H})\;=\;\Big\{f\in H^2(\mathbb{R}^3)\,,\int_{\mathbb{R}^3}\widehat{f}(p)\,\ud p=0\Big\}\,,\qquad \widehat{(\mathring{H}f)}(p)\;=\;p^2\widehat{f}(p)\,;  \label{eq:actionHdot_to_f}
\end{equation}
\item[(ii)] the Friedrichs extension of $\mathring{H}$ is given by
\begin{equation}\label{eq:HringF}
\mathcal{D}(\mathring{H}_F)\;=H^2(\mathbb{R}^3)\;, \qquad \widehat{(\mathring{H}_Ff)}(p)\;=\;p^2\widehat{f}(p)\,.
\end{equation}
\end{itemize}
\end{lemma}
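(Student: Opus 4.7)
For part (i), the action statement is immediate since $\mathring{H}$ is by construction the closure of $-\Delta|_{C^\infty_0(\mathbb{R}^3\setminus\{0\})}$ inside $L^2(\mathbb{R}^3)$ and acts on $\mathcal{D}(\mathring{H})\subseteq H^2(\mathbb{R}^3)$ as $-\Delta$, whose Fourier multiplier is $p^2$. The substance lies in the characterisation of the domain. First I would record the Sobolev embedding $H^2(\mathbb{R}^3)\hookrightarrow C_b(\mathbb{R}^3)$ via the Fourier-variable estimate
\[
\int_{\mathbb{R}^3}|\widehat f(p)|\,\ud p \;\leq\; \|(1+p^2)\widehat f\|_{L^2}\,\|(1+p^2)^{-1}\|_{L^2}\,,
\]
where the second factor is finite precisely because $(1+p^2)^{-2}$ is integrable in dimension three. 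Hence $\widehat f \in L^1$, so by Fourier inversion $f$ has a continuous representative with $f(0)$ a constant multiple of $\int \widehat f\,\ud p$. The inclusion $\mathcal{D}(\mathring{H})\subseteq\{f\in H^2:f(0)=0\}$ follows at once, since evaluation at $0$ is $H^2$-continuous and every approximating sequence in $C^\infty_0(\mathbb{R}^3\setminus\{0\})$ vanishes there.

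For the reverse inclusion I would argue by duality. Set $F:=\overline{C^\infty_0(\mathbb{R}^3\setminus\{0\})}^{\|\cdot\|_{H^2}}$ and suppose $F$ is strictly smaller than the closed hyperplane $\{f\in H^2:f(0)=0\}$. Then Hahn--Banach supplies a nonzero $\ell\in (H^2)^*$ vanishing on $F$ but not on the full hyperplane. Riesz representation gives $\ell(f)=\langle f,h\rangle_{H^2}$ with $h\in H^2$, which in distributional pairing reads $\ell(f)=\langle f,(1-\Delta)^2 h\rangle$ with $(1-\Delta)^2 h\in H^{-2}(\mathbb{R}^3)$. Vanishing on $C^\infty_0(\mathbb{R}^3\setminus\{0\})$ forces this distribution to be supported at the origin, and by the Schwartz structure theorem it is a finite combination $\sum c_\alpha\partial^\alpha\delta_0$, whose Fourier transform is a polynomial $P(p)$. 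The $H^{-2}$ condition $\int|P(p)|^2(1+p^2)^{-2}\,\ud p<\infty$ is satisfied in three dimensions only when $\deg P=0$ (one needs $\deg P+3/2<2$). Thus $(1-\Delta)^2 h=c\,\delta_0$ and $\ell(f)=c\,\overline{f(0)}$, which kills the entire hyperplane, a contradiction.

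For part (ii), I would invoke the characterisation from the Kre\u{\i}n--Vi\v{s}ik--Birman framework (Appendix \ref{app:KVB}): $\mathring{H}_F$ is the self-adjoint operator associated to the closure of $q[f]=\|\nabla f\|^2_{L^2}$ initially defined on $\mathcal{D}(\mathring{H})$, whose form norm is equivalent to $\|\cdot\|_{H^1}$. It suffices then to show that $C^\infty_0(\mathbb{R}^3\setminus\{0\})$ is dense in $H^1(\mathbb{R}^3)$. Running the previous duality scheme one order lower, any $\ell\in(H^1)^*$ annihilating this test space corresponds to $h\in H^1$ with $(1-\Delta)h\in H^{-1}(\mathbb{R}^3)$ supported at the origin; but $\int|P(p)|^2(1+p^2)^{-1}\,\ud p$ diverges already for the constant polynomial (one would need $\deg P+3/2<1$, impossible), so $(1-\Delta)h=0$ and hence $h=0$. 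The form closure is therefore the standard Dirichlet form of the free Laplacian, with form domain $H^1(\mathbb{R}^3)$ and associated self-adjoint operator $-\Delta$ on $H^2(\mathbb{R}^3)$, acting in Fourier as $p^2\widehat f(p)$.

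The main obstacle I anticipate lies in the distributional bookkeeping of the two duality arguments: one must combine the Schwartz structure theorem for distributions supported at a point with the precise Sobolev scale in $d=3$. It is exactly the crossover between the \emph{supercritical} regime of $H^2$ (where $\delta_0$ still survives) and the \emph{subcritical} regime of $H^1$ (where even $\delta_0$ is too singular) that accounts for the different outcomes of (i) and (ii), producing a genuine codimension-one restriction in the first case but a vacuous one in the second.
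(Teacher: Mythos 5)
Your argument is correct, but it proceeds differently from the paper. For the two key facts — the identity \eqref{eq:H20approx} behind part (i) and the identity/inclusion \eqref{useful_inclusion} behind part (ii) — the paper argues constructively (Appendices \ref{app:approx} and \ref{app:inclusions}): it removes the origin with scaled cut-offs $\chi(n|x|)$, controls the commutator terms $\|(\nabla\phi_n)\cdot(\nabla f)\|_{L^2}$ and $\|f\Delta\phi_n\|_{L^2}$ via H\"older, the Sobolev embedding and Morrey's inequality (the latter using $f(0)=0$), and then mollifies and cuts off at infinity; part (ii) is then concluded not by closing the quadratic form but by noting that $-\Delta$ on $H^2(\mathbb{R}^3)$ is a self-adjoint extension whose operator domain lies in $\mathcal{D}[\mathring{H}]=H^1_0(\mathbb{R}^3\setminus\{0\})=H^1(\mathbb{R}^3)$, and invoking the uniqueness characterisation of the Friedrichs extension. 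You instead settle both density questions by duality: a functional annihilating $C^\infty_0(\mathbb{R}^3\setminus\{0\})$ corresponds to a distribution in $H^{-2}$ (resp.\ $H^{-1}$) supported at the origin, and the Schwartz structure theorem plus the count $\int|P(p)|^2(1+p^2)^{-s}\,\ud p$ shows that only $c\,\delta_0$ survives at the $H^{-2}$ level while nothing survives at the $H^{-1}$ level; this is sound (including the Hahn--Banach step producing a functional vanishing on the closure but not on the hyperplane, which your classification then contradicts), and it isolates very transparently why (i) yields a codimension-one constraint while (ii) yields none. What the paper's more pedestrian route buys is that the same cut-off machinery is reused verbatim in Appendix \ref{app:inclusions} for the three-body identity \eqref{useful_inclusion_2+1}, where the removed set is a hyperplane $\Gamma_1\cup\Gamma_2$ rather than a point and a point-support structure-theorem argument would not transfer as directly; what your route buys is brevity and a conceptual explanation of the $H^2$ versus $H^1$ dichotomy in $d=3$. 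Your identification of $\mathring{H}_F$ as the operator of the closed Dirichlet form on $H^1(\mathbb{R}^3)$ is an equally valid, slightly more form-theoretic way to finish part (ii).
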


\begin{proof}
By suitable approximation arguments (see Appendix \ref{app:approx}), we have
\begin{equation}\label{eq:H20approx}
\mathcal{D}(\mathring{H})\;\equiv\;{H^2_0(\mathbb{R}^3\!\setminus\!\{0\})}\;=\;\big\{ f\in H^2(\mathbb{R}^3)\,\big|\, f(0)=0\big\}\,.
\end{equation}
Moreover, since $(1+p^2)\widehat f$ and $(1+p^2)^{-1}$ are in $L^2(\mathbb{R}^3)$, then  $\widehat{f}\in L^1(\mathbb{R}^3)$. Hence
\begin{equation}
0\;=\;f(0)\;=\;\int_{\mathbb{R}^3}{\widehat f(p) \ud p}
\end{equation}
which, together with $\widehat{(-\Delta f)}=p^2\widehat{f}\in H^2(\mathbb{R}^3)$, proves part (i). As for part (ii), we first observe that the form domain of $\mathring{H}$, which is the completion of $\mathcal{D}(\mathring{H})$ in the $H^1$-norm, is precisely
\begin{equation}
\mathcal{D}[\mathring{H}]\;=\overline{\,H^2_0(\mathbb{R}^3\!\setminus\!\{0\})\;}^{\,\|\,\|_{H^2}}=\;\;H^1_0({\mathbb{R}}^3\!\setminus\!\{0\})\,.
\end{equation}
Since 
\begin{equation}\label{useful_inclusion}
H^2(\mathbb{R}^3)\;\subset\; H^1_0(\mathbb{R}^3\!\setminus\!\{0\})\;=\;H^1(\mathbb{R}^3)
\end{equation}
(see Appendix \ref{app:inclusions}) and 
$H^2(\mathbb{R}^3)$ is the domain of a self-adjoint extension of $\mathring{H}$, namely the self-adjoint $-\Delta$  on $\mathbb{R}^3$, we conclude that $-\Delta$ must be the Friedrichs extension $\mathring{H}_F$ of $\mathring{H}$, owing to the characterisation of  $\mathring{H}_F$ as the unique self-adjoint extension of $\mathring{H}$ whose operator domain is contained in $\mathcal{D}[\mathring{H}]$.
\end{proof}

As every semi-bounded and densely defined symmetric operator, $\mathring{H}$ admits self-adjoint extensions.
We are after the family of such extensions.
The first step is to determine the adjoint of $\mathring{H}$.

\begin{theorem}{Proposition}\label{prop:D_Hdostar}
Let $\lambda>0$.
\begin{itemize}
 \item[(i)] One has
 \begin{equation}\label{eq:ker_hring*}
 \begin{split}
 \ker (\mathring{H}^*+\lambda\mathbbm{1})\;&=\;\Big\{ u_\xi\in L^2(\mathbb{R}^3)\textrm{ of the form }\,\widehat{u}_\xi(p)=\frac{\xi}{p^2+\lambda}\,\Big|\,\xi\in\mathbb{C}\Big\} \\
 &=\;\mathrm{span}\,\big\{\big(\,p^2+\lambda)^{-1}\big){\textrm{\LARGE$\check{\,}$\normalsize}}\, \big\}
 \end{split}
 \end{equation}
 \item[(ii)] The domain and the action of the adjoint of $\mathring{H}$ are given by
 \begin{eqnarray}
  \mathcal{D}(\mathring{H}^*)&\!=\!&\left\{g\in L^2(\mathbb{R}^3)\left|\!
  \begin{array}{c}
  \widehat{g}(p)=\displaystyle\widehat{f}(p)+\frac{\eta}{(p^2+\lambda)^2}+\frac{\xi}{p^2+\lambda} \\
  f\in\mathcal{D}(\mathring{H})\,,\quad \eta,\xi\in\mathbb{C}
  \end{array}\!\!\!\right.\right\} \label{eq:decompositionD_Hdostar}\\
  (\widehat{(\mathring{H}^*+\lambda) g)}\,(p)&\!=\!& (p^2+\lambda) \,\Big(\widehat{f}(p)+\frac{\eta}{(p^2+\lambda)^2}\Big) \label{eq:actionHdotstar_to_g-f}\\
  \widehat{(\mathring{H}^* g)}(p)&\!=\!& p^2 \widehat{g}(p)-\xi\,. \label{eq:actionHdotstar_to_g}
 \end{eqnarray}
\end{itemize}
\end{theorem}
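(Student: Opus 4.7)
The plan is to derive both claims from the Kreĭn-Višik-Birman (KVB) direct-sum decomposition recalled in Appendix \ref{app:KVB}: for any closed semi-bounded symmetric operator $S$ with Friedrichs extension $S_F$ and any $\lambda>0$ such that $-\lambda \in \rho(S_F)$,
\[
   \mathcal{D}(S^*) \;=\; \mathcal{D}(S) \,\dotplus\, (S_F+\lambda\mathbbm{1})^{-1}\ker(S^*+\lambda\mathbbm{1}) \,\dotplus\, \ker(S^*+\lambda\mathbbm{1})\,,
\]
with $S^*+\lambda\mathbbm{1}$ acting as $S_F+\lambda\mathbbm{1}$ on the first two summands and as zero on the third. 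Applied to $S=\mathring{H}$, by Lemma \ref{lemma:dom_Hdot_Ftransform}(ii) the resolvent $(\mathring{H}_F+\lambda\mathbbm{1})^{-1}$ is multiplication by $(p^2+\lambda)^{-1}$ in Fourier; thus the KVB formula will produce exactly \eqref{eq:decompositionD_Hdostar}-\eqref{eq:actionHdotstar_to_g} once $\ker(\mathring{H}^*+\lambda\mathbbm{1})$ is explicitly identified.

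For part (i), I would translate $u\in\ker(\mathring{H}^*+\lambda\mathbbm{1})$ into Fourier through Lemma \ref{lemma:dom_Hdot_Ftransform}(i), obtaining
\[
   \int_{\mathbb{R}^3}\overline{\widehat{u}(p)}\,(p^2+\lambda)\,\widehat{f}(p)\,\ud p \;=\; 0 \qquad \forall\,f\in\mathcal{D}(\mathring{H})\,.
\]
The change of unknown $h(p):=(p^2+\lambda)\widehat{f}(p)$ carries $\mathcal{D}(\mathring{H})$ bijectively onto the closed hyperplane $\mathcal{V}_\lambda:=\{h\in L^2(\mathbb{R}^3):\int h(p)/(p^2+\lambda)\,\ud p=0\}$, with inverse $h\mapsto h/(p^2+\lambda)$, which lies in $H^2(\mathbb{R}^3)\cap L^1(\mathbb{R}^3)$ by a three-dimensional Cauchy-Schwarz argument against the $L^2$-functions $(p^2+\lambda)^{-1}$ and $p^2(p^2+\lambda)^{-1}$. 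Hence $\widehat{u}$ must span the one-dimensional orthogonal complement of $\mathcal{V}_\lambda$, namely $\mathrm{span}\{(p^2+\lambda)^{-1}\}$, yielding \eqref{eq:ker_hring*}.

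For part (ii), inserting the kernel from part (i) into the KVB decomposition turns the middle summand into $\{(\eta(p^2+\lambda)^{-2})^\vee:\eta\in\mathbb{C}\}$, which is \eqref{eq:decompositionD_Hdostar}. The action formulas \eqref{eq:actionHdotstar_to_g-f}-\eqref{eq:actionHdotstar_to_g} then follow by linearity from three building-block identities: $\mathring{H}^*f=\mathring{H}_Ff$ for $f\in\mathcal{D}(\mathring{H})\subset H^2(\mathbb{R}^3)$; $(\mathring{H}^*+\lambda\mathbbm{1})v_\eta=u_\eta$ for the middle mode $v_\eta$ with $\widehat{v}_\eta(p)=\eta(p^2+\lambda)^{-2}$; and $(\mathring{H}^*+\lambda\mathbbm{1})u_\xi=0$ for the kernel mode. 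The two equivalent forms \eqref{eq:actionHdotstar_to_g-f} and \eqref{eq:actionHdotstar_to_g} are then reconciled by the one-line algebraic identity $p^2/(p^2+\lambda)=1-\lambda/(p^2+\lambda)$.

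The technical checkpoints requiring care are the surjectivity of $\widehat{f}\mapsto h$ in part (i) (already addressed above) and the algebraic directness of the three-fold sum in \eqref{eq:decompositionD_Hdostar}. The latter reduces to three elementary integrability facts in three dimensions: $(p^2+\lambda)^{-1}\in L^2(\mathbb{R}^3)\setminus H^2(\mathbb{R}^3)$, $(p^2+\lambda)^{-2}\in H^2(\mathbb{R}^3)$, and $\int_{\mathbb{R}^3}(p^2+\lambda)^{-2}\,\ud p\neq 0$. Together they force any null combination $\widehat{f}+\eta(p^2+\lambda)^{-2}+\xi(p^2+\lambda)^{-1}=0$ to have $\xi=0$ by $H^2$-regularity, hence $\eta=0$ by integrating over $p$, and finally $f=0$. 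This is the only place where the dimension $d=3$ is genuinely used, and it is exactly the ingredient that makes the deficiency indices equal to $(1,1)$ here.
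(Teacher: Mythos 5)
Your proof is correct and takes essentially the same route as the paper: part (i) computes $\ker(\mathring{H}^*+\lambda\mathbbm{1})=\ran(\mathring{H}+\lambda\mathbbm{1})^{\perp}$ in Fourier variables (your hyperplane bijection simply makes explicit what the paper calls a ``standard localisation argument''), and part (ii) is read off from the Kre{\u\i}n decomposition \eqref{eq:DomS*} applied to $\mathring{H}+\lambda\mathbbm{1}$ together with \eqref{eq:HringF}, exactly as in the paper, with your directness check merely re-proving by hand what the lemma already guarantees. One slip of wording: $p^{2}(p^{2}+\lambda)^{-1}$ is bounded but \emph{not} in $L^{2}(\mathbb{R}^{3})$, so the $H^{2}$-membership of $h/(p^{2}+\lambda)$ should be argued from the boundedness of $(1+p^{2})(p^{2}+\lambda)^{-1}$ (Cauchy--Schwarz against $(p^{2}+\lambda)^{-1}\in L^{2}$ is only needed for the $L^{1}$ part); the conclusion is unaffected.
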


\begin{remark}{Remark}
 The decomposition \eqref{eq:decompositionD_Hdostar} of the generic element $g\in\mathcal{D}(\mathring{H}^*)$ depends on the chosen $\lambda$, but of course $\mathcal{D}(\mathring{H}^*)$ does not, nor does the outcome of $\mathring{H}^*$ applied to $g$, as one sees from  \eqref{eq:actionHdotstar_to_g}.
\end{remark}

\begin{proof}[Proof of Proposition \ref{prop:D_Hdostar}]
In order to apply the general decomposition formulas of Lemma \ref{lemma:krein_decomp_formula} we need to deal with an operator with positive bottom. To this aim we introduce the auxiliary operator
\begin{equation}
\mathring{H}_{\lambda}\;:=\;\mathring{H}+\lambda\mathbbm{1}
\end{equation}
which is by construction densely defined, symmetric, and closed, and with bottom $m(S)=\lambda$. Clearly, $\mathcal{D}(\mathring{H}_{\lambda})=\mathcal{D}(\mathring{H})$ and $\mathring{H}_{\lambda}^*=\mathring{H}^*+\lambda\mathbbm{1}$. Since $\ker (\mathring{H}^*_{\lambda})\;=\ran(\mathring{H}_{\lambda})^\perp$, then $u\in\ker (\mathring{H}^*_{\lambda})\;$ if and only if
\begin{equation*}
0\;=\;\int_{\mathbb{R}^3}{(\mathring{H}_{\lambda}f) \,u\, \ud x}\;=\;\int_{\mathbb{R}^3}{\widehat{(\mathring{H}_{\lambda}f)}\,\widehat u \,\ud p}=\int_{\mathbb{R}^3}{\widehat f\, (p^2+\lambda)\,\widehat{u} \,\ud p}\qquad\forall f\in\mathcal{D}(\mathring{H})\,,
\end{equation*}
which by \eqref{eq:actionHdot_to_f} and a standard localisation argument yields \eqref{eq:ker_hring*}.
Because of $(\mathring{H}_\lambda)_F=\mathring{H}_F+\lambda\mathbbm{1}$ and \eqref{eq:HringF}, we have that
\begin{equation}\label{eq:HringF-1}
\widehat{(\mathring{H}_\lambda)_F^{-1}u}\;=\;(p^2+\lambda)^{-1}\widehat{u}\,.
\end{equation}
This, together with the decomposition formula \eqref{eq:DomS*} discussed in Appendix \ref{app:KVB} (Lemma \ref{lemma:krein_decomp_formula}), and the characterisation  \eqref{eq:ker_hring*} of $\ker \mathring{H}_\lambda^*$ yield immediately \eqref{eq:decompositionD_Hdostar}.
The decomposition \eqref{eq:DomS*} also implies that the action of $\mathring{H}_\lambda^*$ on a generic element $f+(\mathring{H}_\lambda)_F^{-1}u_\eta+u_\xi\in\mathcal{D}(\mathring{H}_\lambda^*)$ is the same as the action of $(\mathring{H}_\lambda)_F$ on the component $f+(\mathring{H}_\lambda)_F^{-1}u_\eta\in\mathcal{D}((\mathring{H}_\lambda)_F)$, while $\mathring{H}_\lambda^* u_\xi=0$: this is precisely  \eqref{eq:actionHdotstar_to_g-f}. As for \eqref{eq:actionHdotstar_to_g}, it follows directly from  \eqref{eq:decompositionD_Hdostar} and \eqref{eq:actionHdotstar_to_g-f}.
\end{proof}

\begin{remark}{Remark}
The decomposition formula \eqref{eq:DomSF}, together with \eqref{eq:HringF-1} above, gives
\begin{equation}\label{eq:Hring_lambda_F}
\mathcal{D}((\mathring{H}_\lambda)_F)\;=\;\left\{g\in L^2(\mathbb{R}^3)\left|\!
  \begin{array}{c}
  \widehat{g}(p)=\displaystyle\widehat{f}(p)+(p^2+\lambda)^{-2}\eta \\
  f\in\mathcal{D}(\mathring{H})\,,\quad \eta\in\mathbb{C}
  \end{array}\!\!\!\right.\right\} 
\end{equation}
Therefore, expression \eqref{eq:decompositionD_Hdostar} shows that a generic $g\in\mathcal{D}(\mathring{H}^*)$ 
is in general less regular than $H^2(\mathbb{R}^3)$, for only the component $\mathcal{F}^{-1}(\widehat{f}+(p^2+\lambda)^{-2}\eta)=f+(\mathring{H}_\lambda)_F^{-1}u_\eta$ is in $\mathcal{D}((\mathring{H}_\lambda)_F)=\mathcal{D}(\mathring{H}_F)=H^2(\mathbb{R}^3)$, whereas the component $u_\xi$ is not. Related to that, \eqref{eq:actionHdotstar_to_g} shows that whereas $\mathring{H}^*g\in L^2(\mathbb{R}^3)$, none of the two \emph{distributions} $-\Delta g$ and $(2\pi)^{3/2}\xi\delta(x)$ whose difference gives precisely $\mathring{H}^*g$ is realised as a square-integrable function (thus, in the difference the two non-square-integrable singularities cancel out).
\end{remark}

With the above knowledge of  $\mathring{H}^*+\lambda\mathbbm{1}$  and  $\ker (\mathring{H}^*+\lambda\mathbbm{1})$ the Kre{\u\i}n-Vi\v{s}ik-Birman theory provides an explicit prescription to restrict $\mathring{H}_\lambda^*$ so as to find the whole family of self-adjoint extensions of $\mathring{H}_\lambda$, and hence of $\mathring{H}$.  

\begin{theorem}{Theorem}\label{thm:1p1_extensions_Birman}~
\begin{itemize}
 \item[(i)] The self-adjoint extensions of the operator $\mathring{H}$ on $L^2(\mathbb{R}^3)$ constitute the one-parameter family $\{\mathring{H}^{(\tau)}\,|\,\tau\in\mathbb{R}\cup\{\infty\}\}$ where $\mathring{H}^{(\infty)}$ is  the Friedrichs extension $\mathring{H}_F$, that is, 
 \begin{equation}\label{eq:Hring-inf}
\mathcal{D}(\mathring{H}^{(\infty)})\;=H^2(\mathbb{R}^3)\;, \qquad \widehat{(\mathring{H}^{(\infty)} f)}(p)\;=\;p^2\widehat{f}(p)\,,
\end{equation}
whereas, for $\tau\in\mathbb{R}$,
 \begin{eqnarray}
   \mathcal{D}(\mathring{H}^{(\tau)})\!&=&\!\!\left\{g\in L^2(\mathbb{R}^3)\!\left|\!
  \begin{array}{c}
  \widehat{g}(p)=\displaystyle\widehat{f}(p)+\frac{\tau\,\xi}{(p^2+\lambda)^2}+\frac{\xi}{p^2+\lambda} \\
  \xi\in\mathbb{C}\,,\quad f\in\mathcal{D}(\mathring{H}) 
  \end{array}\!\!\!\!\right.\right\} \label{eq:D_HringTau}\\
  ((\mathring{H}^{(\tau)}+\lambda\mathbbm{1}) \,g)^{\widehat{\;}}\,(p)\!\!&=&\!\! (p^2+\lambda) \,\Big(\widehat{f}(p)+\tau\,\frac{\xi}{(p^2+\lambda)^2}\Big) \label{eq:Hring-tau_action1}\\
  \widehat{(\mathring{H}^{(\tau)} g)}(p)\!\!&=&\!\! p^2 \widehat{g}(p)-\xi\,, \label{eq:Hring-tau_action2}
 \end{eqnarray}
 where $\lambda>0$ is arbitrary.
 \item[(ii)] Each extension $\mathring{H}^{(\tau)}$ is semi-bounded below. In particular, for the bottom $m(\mathring{H}^{(\tau)})$ of $\mathring{H}^{(\tau)}$ one has
  \begin{equation}\label{eq:positiveHring-tau_iff_positve_tau}
 \begin{split}
 m(\mathring{H}^{(\tau)})\;\geqslant \;0\quad&\Leftrightarrow\quad \tau\;\geqslant\; 0 \\
 m(\mathring{H}^{(\tau)})\;> \;0\quad&\Leftrightarrow\quad \tau\;>\; 0\,.
 \end{split}
 \end{equation}
 \item[(iii)] For  each $\tau\in\mathbb{R}$ the quadratic form of the extension $\mathring{H}^{(\tau)}$ is given by
\begin{eqnarray}
   \mathcal{D}[\mathring{H}^{(\tau)}]\!&=&\!\! H^1(\mathbb{R}^3)\dotplus \ker(\mathring{H}^*+\lambda\mathbbm{1})  \label{eq:Hring-tau_form1}\\
  \mathring{H}^{(\tau)}[\phi+u_\xi]\!\!&=&\!\! \|\nabla\phi\|_2^2-\lambda\|\phi+u_\xi\|_2^2+\lambda\|\phi\|_2^2+\tau\frac{\;\pi^2}{\sqrt{\lambda}}|\xi|^2 \label{eq:Hring-tau_form2}
 \end{eqnarray}
 for any $\phi\in H^1(\mathbb{R}^3)$ and any $\widehat{u}_\xi(p)=(p^2+\lambda)^{-1}\xi$, $\xi\in\mathbb{C}$, 
where $\lambda>0$ is arbitrary.
\end{itemize}
\end{theorem}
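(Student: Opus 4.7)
The plan is to apply the Kre{\u\i}n-Vi\v{s}ik-Birman theorem (Appendix~\ref{app:KVB}) to the shifted operator $\mathring{H}_\lambda:=\mathring{H}+\lambda\mathbb{1}$ for any fixed $\lambda>0$, building on Proposition~\ref{prop:D_Hdostar}, which has already described $\mathring{H}_\lambda^*$ and identified $\ker\mathring{H}_\lambda^*$ as the one-dimensional span of $u$ with $\widehat{u}(p)=(p^2+\lambda)^{-1}$. Since self-adjoint operators on a one-dimensional Hilbert space reduce to multiplication by a real scalar $\tau$, together with the ``trivial'' operator with empty domain, the KVB classification immediately produces the announced parametrisation $\tau\in\mathbb{R}\cup\{\infty\}$, the value $\tau=\infty$ corresponding to the trivial operator and hence to the Friedrichs extension $\mathring{H}_F$.

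For part (i), I substitute the relation $\eta=\tau\xi$ dictated by the abstract parameter $T=\tau\mathbb{1}$ into the general decomposition \eqref{eq:decompositionD_Hdostar}, obtaining \eqref{eq:D_HringTau}; the action formulas \eqref{eq:Hring-tau_action1}--\eqref{eq:Hring-tau_action2} are then immediate restrictions of \eqref{eq:actionHdotstar_to_g-f}--\eqref{eq:actionHdotstar_to_g} to this sub-domain. The case $\tau=\infty$ enforces $\xi=0$ and collapses the domain to $\mathcal{D}(\mathring{H}_F)=H^2(\mathbb{R}^3)$, recovering \eqref{eq:HringF} and \eqref{eq:Hring-inf}.

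For part (iii), I invoke the KVB quadratic-form formula, by which the form of the extension corresponding to $T=\tau\mathbb{1}$ splits as the Friedrichs form on the regular component plus the form of $T$ on the deficiency component:
\begin{equation*}
(\mathring{H}^{(\tau)}+\lambda)[\phi+u_\xi]\;=\;\|\nabla\phi\|_2^2+\lambda\|\phi\|_2^2+\tau\,\|u\|_2^2\,|\xi|^2\,,
\end{equation*}
where $\phi\in H^1(\mathbb{R}^3)=\mathcal{D}[\mathring{H}_F]$ and $u_\xi\in\ker\mathring{H}_\lambda^*$; \eqref{eq:Hring-tau_form2} then follows after subtracting $\lambda\|\phi+u_\xi\|_2^2$. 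The only nontrivial quantitative input is the standard radial computation
\begin{equation*}
\|u\|_2^2\;=\;\int_{\mathbb{R}^3}\frac{\ud p}{(p^2+\lambda)^2}\;=\;\frac{\pi^2}{\sqrt{\lambda}}\,,
\end{equation*}
which pins down the coefficient $\tau\pi^2/\sqrt{\lambda}$ in \eqref{eq:Hring-tau_form2}. That \eqref{eq:Hring-tau_form1} really is a \emph{direct} (non-orthogonal) sum follows from the elementary observation $u\notin H^1(\mathbb{R}^3)$.

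Finally, part (ii) is read off directly from \eqref{eq:Hring-tau_form2}: for $\tau\geq 0$, expanding $-\lambda\|\phi+u_\xi\|_2^2+\lambda\|\phi\|_2^2=-2\lambda\,\mathrm{Re}\langle\phi,u_\xi\rangle-\lambda\|u_\xi\|_2^2$ and completing the square against the non-negative terms $\|\nabla\phi\|_2^2$ and $\tau\pi^2|\xi|^2/\sqrt{\lambda}$ yields $\mathring{H}^{(\tau)}\geq 0$, while for $\tau<0$ a suitable trial function concentrated on the deficiency direction produces a negative form value. The strict versus non-strict dichotomy in \eqref{eq:positiveHring-tau_iff_positve_tau} is then pinned down by checking whether the infimum is attained or only reached as a limit. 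The step I expect to be the chief technical obstacle is precisely this form bookkeeping, because the explicit decomposition depends on the auxiliary parameter $\lambda$ whereas the sign of $m(\mathring{H}^{(\tau)})$ does not; one must check that the estimate is $\lambda$-independent, which the KVB framework guarantees \emph{a priori} but which is not entirely transparent from \eqref{eq:Hring-tau_form2} alone.
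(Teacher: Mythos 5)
Your parts (i) and (iii) coincide in substance with the paper's own argument: you classify the extensions of $\mathring{H}+\lambda\mathbbm{1}$ via Theorem \ref{thm:VB-representaton-theorem_Tversion} using the one-dimensional kernel from Proposition \ref{prop:D_Hdostar}, identify $\tau=\infty$ with the Friedrichs extension (Proposition \ref{prop:parametrisation_SF_SN_Tversion}), substitute $\eta=\tau\xi$ into \eqref{eq:decompositionD_Hdostar} to get \eqref{eq:D_HringTau}--\eqref{eq:Hring-tau_action2}, and obtain \eqref{eq:Hring-tau_form1}--\eqref{eq:Hring-tau_form2} from the form decomposition \eqref{eq:decomposition_of_form_domains_Tversion} together with $\int_{\mathbb{R}^3}(p^2+\lambda)^{-2}\,\ud p=\pi^2/\sqrt{\lambda}$. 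There is nothing to object to in those two parts.

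The genuine gap is in part (ii), where you depart from the paper. First, you never establish that $\mathring{H}^{(\tau)}$ is bounded below when $\tau<0$: exhibiting a trial function with negative form value shows $m(\mathring{H}^{(\tau)})<0$, not semi-boundedness; the paper obtains lower boundedness of \emph{every} extension from the finite deficiency index, Proposition \ref{cor:finite_deficiency_index} (or Proposition \ref{cor:finite-dimensional}). Second, and more seriously, the equivalence \eqref{eq:positiveHring-tau_iff_positve_tau} cannot be read off \eqref{eq:Hring-tau_form2} by completing the square at fixed $\lambda$. Taking $\phi=0$ and $\xi\neq 0$ in \eqref{eq:Hring-tau_form2} gives
\begin{equation*}
\mathring{H}^{(\tau)}[u_\xi]\;=\;-\lambda\|u_\xi\|_2^2+\tau\,\frac{\pi^2}{\sqrt{\lambda}}\,|\xi|^2\;=\;(\tau-\lambda)\,\frac{\pi^2}{\sqrt{\lambda}}\,|\xi|^2\,,
\end{equation*}
which is negative for all $0\leqslant\tau<\lambda$ (and optimising also over $\phi$ pushes the threshold of positivity of the fixed-$\lambda$ form to $\tau=2\lambda$, consistently with \eqref{eq:alpha-tau}, i.e.\ with $\alpha\geqslant 0$, the familiar no-bound-state condition for $H_\alpha$; note also that $\tau=0$ is the Kre{\u\i}n--von Neumann extension of $\mathring{H}+\lambda\mathbbm{1}$ shifted down by $\lambda$). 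What the abstract criterion \eqref{eq:positiveSBiffpositveB-1_Tversion} delivers at fixed $\lambda$, with $T=\tau$, is $m(\mathring{H}^{(\tau)}+\lambda\mathbbm{1})\geqslant 0\Leftrightarrow\tau\geqslant 0$, and the paper passes from there to \eqref{eq:positiveHring-tau_iff_positve_tau} only via the shift identity $m(\mathring{H}^{(\tau)}+\lambda\mathbbm{1})=m(\mathring{H}^{(\tau)})+\lambda$ together with the arbitrariness of $\lambda$. So the $\lambda$-dependence you flag at the end is precisely the crux, and it is not something "the KVB framework guarantees a priori'': at fixed $\tau$ the operator $\mathring{H}^{(\tau)}$ itself changes with $\lambda$ (the same extension is represented by a $\lambda$-dependent $\tau$, cf.\ \eqref{eq:tau-alpha}), and a fixed-$\lambda$ estimate of the explicit form cannot produce the stated equivalence. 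This step has to be replaced by the paper's argument through \eqref{eq:positiveSBiffpositveB-1_Tversion} and the variation of $\lambda$.
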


\begin{proof}
Fixed $\lambda>0$, by Theorem \ref{thm:VB-representaton-theorem_Tversion} the self-adjoint extensions of $\mathring{H}+\lambda\mathbbm{1}$ are one-to-one with the self-adjoint operators on Hilbert subspaces of $\ker(\mathring{H}+\lambda\mathbbm{1})^*$, which is the one-dimensional space found in \eqref{eq:ker_hring*}. 
The generic case is that of a self-adjoint operator acting on the whole $\ker(\mathring{H}+\lambda\mathbbm{1})^*$, that is, the map $T_\tau: u_\xi\mapsto\tau u_\xi$ of multiplication by the scalar $\tau\in\mathbb{R}$. In this case the expression \eqref{eq:ST} for $\mathcal{D}(\mathring{H}^{(\tau)}+\lambda\mathbbm{1})$ ($=\mathcal{D}(\mathring{H}^{(\tau)})$) reads, by means of \eqref{eq:ker_hring*} and \eqref{eq:HringF-1}, precisely as \eqref{eq:D_HringTau}. Then 
\eqref{eq:Hring-tau_action1} follows from \eqref{eq:D_HringTau} and the fact that $\mathring{H}^{(\tau)}+\lambda\mathbbm{1}=(\mathring{H}^*+\lambda\mathbbm{1})\!\upharpoonright_{\mathcal{D}(\mathring{H}^{(\tau)})}$, and \eqref{eq:Hring-tau_action2} is an immediate consequence of \eqref{eq:Hring-tau_action1}. 
The case of the subspace $\{0\}$ of $\ker(\mathring{H}+\lambda\mathbbm{1})^*$ corresponds by Proposition \ref{prop:parametrisation_SF_SN_Tversion} to the Friedrichs extension $(\mathring{H}+\lambda\mathbbm{1})_F=\mathring{H}_F+\lambda\mathbbm{1}$, where $\mathring{H}_F$ has been determined in \eqref{eq:HringF} of Lemma \ref{lemma:dom_Hdot_Ftransform}. Thus,  re-writing the generic $g\in\mathcal{D}(\mathring{H}^{(\tau)})$ as $\widehat{g}=\widehat{f}+(p^2+\lambda)^{-2}\widetilde{\xi}+\tau^{-1}(p^2+\lambda)^{-1}\widetilde{\xi}$ for arbitrary $f\in\mathcal{D}(\mathring{H})$ and $\widetilde{\xi}\in\mathbb{C}$, and comparing it with \eqref{eq:Hring_lambda_F}, one recognises that  $\mathcal{D}(\mathring{H}^{(\tau)}+\lambda\mathbbm{1})=\mathcal{D}(\mathring{H}_F+\lambda\mathbbm{1})$ when $\tau=\infty$.
Therefore, $\mathring{H}^{(\infty)}=\mathring{H}_F$ and \eqref{eq:Hring-inf} follows from \eqref{eq:HringF}.
This concludes the proof of part (i). 
The semi-boundedness of each $\mathring{H}^{(\tau)}$ follows 
by Proposition \ref{cor:finite_deficiency_index}, or also Proposition \ref{cor:finite-dimensional}.
Conditions \eqref{eq:positiveHring-tau_iff_positve_tau} for $\mathring{H}^{(\tau)}$ follow by the general conditions \eqref{eq:positiveSBiffpositveB-1_Tversion}, using the fact that $m(\mathring{H}^{(\tau)}+\lambda\mathbbm{1})=m(\mathring{H}^{(\tau)})+\lambda$ and that $\lambda>0$ is arbitrary, thus also part (ii) is proved. Last, we observe that $\mathcal{D}[\mathring{H}^{(\tau)}]=\mathcal{D}[\mathring{H}^{(\tau)}+\lambda\mathbbm{1}]$ and $\mathring{H}^{(\tau)}[g]=(\mathring{H}^{(\tau)}+\lambda\mathbbm{1})[g]-\lambda\|g\|_2^2$, thus one deduces \eqref{eq:Hring-tau_form1}-\eqref{eq:Hring-tau_form2} from \eqref{eq:decomposition_of_form_domains_Tversion} of Theorem \ref{thm:semibdd_exts_form_formulation_Tversion} applied to $\mathring{H}^{(\tau)}+\lambda\mathbbm{1}$: formula \eqref{eq:Hring-tau_form1} is an immediate consequence of \eqref{eq:decomposition_of_form_domains_Tversion};
concerning \eqref{eq:Hring-tau_form2}, formula \eqref{eq:decomposition_of_form_domains_Tversion} prescribes the contributions $(\mathring{H}_F+\lambda\mathbbm{1})[\phi+u_\xi]$ and $-\lambda\|\phi+u_\xi\|_2^2$ to $\mathring{H}^{(\tau)}[\phi+u_\xi]$, which are the first three summands in the r.h.s.~of \eqref{eq:Hring-tau_form2}, plus the term
\[
T_\tau[u_\xi]\;=\;\tau\int_{\mathbb{R}^3}\Big|\frac{\xi}{p^2+\lambda}\Big|^2\,\ud p\;=\;\tau\,\frac{\;\pi^2}{\sqrt{\lambda}}\,|\xi|^2
\]
which is the fourth summand. This  completes the proof of part (iii).
\end{proof}

\begin{theorem}{Corollary}\label{cor:delta3D_xcoordinates}
For each self-adjoint extension $\mathring{H}^{(\tau)}$ and for arbitrary $\lambda>0$ one has
\begin{eqnarray}
\mathcal{D}(\mathring{H}^{(\tau)})\!\!\!&=&\!\!\!\Big\{g=\phi+\phi(0)\,\frac{8\pi\sqrt{\lambda}}{\tau}\,G_\lambda\,\Big|\,\phi\in H^2(\mathbb{R}^3)\,,\;G_\lambda(x)=\frac{e^{-\sqrt{\lambda}|x|}}{4\pi|x|}\Big\} \label{eq:D_HringTau-xversion}\\
(\mathring{H}^{(\tau)}+\lambda\mathbbm{1})\,g\!\!\!&=&\!\!\! (-\Delta+\lambda)\,\phi \label{eq:Hring-tau_action1-xversion}
\end{eqnarray}
and
\begin{eqnarray}
\mathcal{D}[\mathring{H}^{(\tau)}]\!\!\!&=&\!\!\!\Big\{g=\phi+\eta\,G_\lambda\,\Big|\,\phi\in H^1(\mathbb{R}^3)\,,\;\eta\in\mathbb{C}\,,\;G_\lambda(x)=\frac{e^{-\sqrt{\lambda}|x|}}{4\pi|x|}\Big\} \label{eq:Hring-tau_form1_xversion}\\
\mathring{H}^{(\tau)}[\phi+\eta\,G_\lambda]\!\!\!&=&\!\!\! -\lambda\|\phi+\eta\,G_\lambda\|_2^2+\|\nabla\phi\|_2^2+\lambda\|\phi\|_2^2+\frac{\tau}{8\pi\sqrt{\lambda}}\,|\eta|^2\,. \label{eq:Hring-tau_form2_xversion}
\end{eqnarray}
\end{theorem}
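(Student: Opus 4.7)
The plan is to pull back each of the Fourier-space formulas of Theorem \ref{thm:1p1_extensions_Birman} to position space, exploiting the fact that $(p^2+\lambda)^{-1}$ is, up to the constant $(2\pi)^{3/2}$, the Fourier transform of the Yukawa potential $G_\lambda(x)=e^{-\sqrt\lambda|x|}/(4\pi|x|)$, the Green's function of $-\Delta+\lambda$. In particular $u_\xi=\mathcal{F}^{-1}(\xi(p^2+\lambda)^{-1})=(2\pi)^{3/2}\xi\,G_\lambda$, and $(-\Delta+\lambda)G_\lambda=\delta$ in the distributional sense.

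For the operator domain, I would start from \eqref{eq:D_HringTau} (taking $\tau\in\mathbb{R}\setminus\{0\}$; the cases $\tau=\infty$ and $\tau=0$ are handled separately, the former by \eqref{eq:HringF}, the latter by a limit-type rewriting with $\phi(0)=0$) and group the two regular summands by setting $\widehat\phi:=\widehat f+\tau\xi(p^2+\lambda)^{-2}$. Comparing with \eqref{eq:Hring_lambda_F}, $\phi$ ranges precisely over $H^2(\mathbb{R}^3)$. To relate $\xi$ to $\phi(0)$, I would integrate in $p$: using $\int_{\mathbb{R}^3}\widehat f\,\ud p=(2\pi)^{3/2}f(0)=0$ together with the elementary computation
\[
\int_{\mathbb{R}^3}\frac{\ud p}{(p^2+\lambda)^2}\;=\;\frac{\pi^2}{\sqrt\lambda}\,,
\]
one finds $(2\pi)^{3/2}\phi(0)=\tau\xi\,\pi^2/\sqrt\lambda$, hence $\xi=(2\pi)^{3/2}\sqrt\lambda\,\phi(0)/(\pi^2\tau)$, and therefore $u_\xi=\frac{8\pi\sqrt\lambda}{\tau}\phi(0)\,G_\lambda$. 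This gives \eqref{eq:D_HringTau-xversion}. The action formula \eqref{eq:Hring-tau_action1-xversion} then follows directly by inverse Fourier transform of \eqref{eq:Hring-tau_action1}, since $\widehat{(\mathring H^{(\tau)}+\lambda)g}=(p^2+\lambda)\widehat\phi$, i.e., $(\mathring H^{(\tau)}+\lambda)g=(-\Delta+\lambda)\phi$.

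For the quadratic form I would invoke \eqref{eq:Hring-tau_form1}--\eqref{eq:Hring-tau_form2}. The form domain identity \eqref{eq:Hring-tau_form1_xversion} is immediate upon writing the generic kernel element as $u_\xi=\eta\,G_\lambda$ with $\eta:=(2\pi)^{3/2}\xi$: since $\ker(\mathring H^*+\lambda)=\mathrm{span}\{G_\lambda\}$ by \eqref{eq:ker_hring*}, the sum $H^1(\mathbb{R}^3)\dotplus\mathrm{span}\{G_\lambda\}$ is direct (as $G_\lambda\notin H^1$). To get \eqref{eq:Hring-tau_form2_xversion}, the first three terms in \eqref{eq:Hring-tau_form2} transcribe verbatim, whereas the last term becomes, via $|\xi|^2=|\eta|^2/(8\pi^3)$,
\[
\tau\,\frac{\pi^2}{\sqrt\lambda}\,|\xi|^2\;=\;\frac{\tau}{8\pi\sqrt\lambda}\,|\eta|^2\,.
\]

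The main obstacle here is essentially bookkeeping: keeping track of the $(2\pi)^{3/2}$ factors arising from the chosen Fourier convention, and computing the two explicit integrals $\int(p^2+\lambda)^{-2}\ud p$ and $\int|p^2+\lambda|^{-2}\ud p$, both of which reduce to the same $\pi^2/\sqrt\lambda$ by a standard residue or trigonometric substitution. Once these constants are pinned down, the whole corollary is just a translation between the $(f,\xi)$ parametrisation of Theorem \ref{thm:1p1_extensions_Birman} and the more conventional $(\phi,\eta)$ parametrisation in position space.
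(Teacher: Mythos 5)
Your proposal is correct and follows essentially the same route as the paper: you pull the Fourier-space formulas of Theorem \ref{thm:1p1_extensions_Birman} back to position space via $G_\lambda$, identify $\widehat\phi=\widehat f+\tau\xi(p^2+\lambda)^{-2}$ as the $H^2$ regular part, and pin down the constant linking $\xi$ to $\phi(0)$ (your evaluation of $\phi(0)$ through $\int\widehat\phi\,\ud p$ with $\int(p^2+\lambda)^{-2}\ud p=\pi^2/\sqrt\lambda$ is equivalent to the paper's use of the explicit inverse transform $\sqrt{\pi/(8\lambda)}\,e^{-\sqrt\lambda|x|}$). The form statements are likewise handled exactly as in the paper, by rewriting \eqref{eq:Hring-tau_form1}--\eqref{eq:Hring-tau_form2} with $\eta=(2\pi)^{3/2}\xi$.
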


\begin{proof}
Since
\begin{equation}\label{eq:Ftransforms_a}
\Big(\frac{1}{(p^2+\lambda)^2}\Big){\textrm{\LARGE$\check{\,}$\normalsize}}(x)\;=\;\sqrt{\frac{\pi}{8\lambda}}\,e^{-\sqrt{\lambda}|x|}\quad\textrm{ and }\quad\Big(\frac{1}{p^2+\lambda}\Big){\textrm{\LARGE$\check{\,}$\normalsize}}(x)\;=\;(2\pi)^{3/2}\,\frac{\,e^{-\sqrt{\lambda}|x|}}{4\pi|x|}\,,
\end{equation}
the inverse Fourier transform of a generic $\widehat{g}$ of the form \eqref{eq:D_HringTau} reads
\begin{equation}\label{eq:g-inv-transf-1}
g=\phi+\xi\,(2\pi)^{3/2}\,G_\lambda
\end{equation}
where
\begin{equation}\label{eq:G-phi-x}
G_\lambda(x)\;:=\;\frac{e^{-\sqrt{\lambda}|x|}}{4\pi|x|}\,,\quad\widehat{\phi}\;:=\;\widehat{f}+\frac{\tau\,\xi}{(p^2+\lambda)^2}\,,\quad\textrm{ i.e., }\quad \phi\;=\;f+\tau\,\xi\,\sqrt{\frac{\pi}{8\lambda}}\,e^{-\sqrt{\lambda}|x|}\,.
\end{equation}
From \eqref{eq:G-phi-x} one computes
\begin{equation}\label{eq:phi0_xi}
\phi(0)\;=\;\tau\,\xi\,\sqrt{\frac{\pi}{8\lambda}}
\end{equation}
and plugging this into  \eqref{eq:g-inv-transf-1} one obtains  \eqref{eq:D_HringTau-xversion}.  \eqref{eq:Hring-tau_action1-xversion} is the inverse Fourier transform of \eqref{eq:Hring-tau_action1}.  \eqref{eq:Hring-tau_form1_xversion} follows by taking the inverse Fourier transform in \eqref{eq:Hring-tau_form1}, using the characterisation \eqref{eq:ker_hring*} for $\ker(\mathring{H}^*+\lambda\mathbbm{1})$ and formulas \eqref{eq:Ftransforms_a} and \eqref{eq:G-phi-x}. \eqref{eq:Hring-tau_form2_xversion} is a straightforward re-writing of \eqref{eq:Hring-tau_form2}.
\end{proof}

\begin{remark}{Remark}\label{rem:Htau_Halpha}
In Corollary \ref{cor:delta3D_xcoordinates} above we have re-derived, apart from an obvious re-definition of the extension parameter $\tau$, the well-known formulas for $\mathring{H}^{(\tau)}$ previously obtained in the literature by means of von Neumann's extension theory. Indeed, each $\mathring{H}^{(\tau)}$ is precisely the extension $H_\alpha$ discussed in \cite[Chapter I.1]{albeverio-solvable}, where
\begin{equation}\label{eq:alpha-tau}
\alpha\;=\;\frac{\tau-2\lambda}{\,8\pi\sqrt{\lambda}\,}\,;
\end{equation}
by means of \eqref{eq:alpha-tau}, the expression \eqref{eq:D_HringTau-xversion} for $\mathcal{D}(\mathring{H}^{(\tau)})$ takes the form of \cite[eq.~(I.1.1.27)]{albeverio-solvable}. In particular, the analysis of the extension parameter $\alpha$ done in \cite[Chapter I.1]{albeverio-solvable} shows that the two-body point interaction modelled by the Hamiltonian $H_\alpha$ has $s$-wave scattering length equal to $-(4\pi\alpha)^{-1}$.
\end{remark}

\begin{remark}{Remark}
The function $\phi$ in the decompositions \eqref{eq:D_HringTau-xversion} and \eqref{eq:Hring-tau_form1_xversion} is customarily referred to as the ``\emph{regular part}'' of the given $g$ of the operator domain or the form domain of $\mathring{H}^{(\tau)}$, the difference $g-\phi$ taking the name of the ``\emph{singular part}'' of $g$. Formulas \eqref{eq:D_HringTau-xversion} and \eqref{eq:Hring-tau_form1_xversion}  give, for a generic 
$g=\phi+\phi(0) (8\pi\sqrt{\lambda})/\tau\,G_\lambda\in\mathcal{D}(\mathring{H}^{(\tau)})$,
\[
\langle\, g,(\mathring{H}^{(\tau)}+\lambda\mathbbm{1})\,g\rangle\;=\;\langle\, \phi,(-\Delta+\lambda\mathbbm{1})\,\phi\rangle + \frac{8\pi\sqrt{\lambda}}{\tau}\,|\phi(0)|^2
\]
which provides once more the interpretation of the point-like character of the interaction modelled by $\mathring{H}^{(\tau)}$ at $x=0$.
\end{remark}

\begin{remark}{Remark}
Owing to the decomposition \eqref{eq:D_HringTau-xversion}, a generic $g\in\mathcal{D}(\mathring{H}^{(\tau)})$ displays the characteristic asymptotics of TMS-type (see \eqref{eq:BP-generic-xi} above) when $x\to 0$. Indeed, using the continuity of $\phi$, one derives from \eqref{eq:D_HringTau-xversion}
\[
g(x)\;=\;\phi(0)\,\frac{2\sqrt{\lambda}}{\tau}\Big(\frac{1}{|x|}+\frac{\tau}{2\sqrt{\lambda}}\Big)+o(1)\qquad\textrm{ as }x\to 0\,,
\]
and replacing $\phi(0)$ with $\xi$ according to \eqref{eq:phi0_xi} and $\tau$ with $\alpha$ given by \eqref{eq:alpha-tau} one obtains
\begin{equation*}
g(x)\;=\;\xi\sqrt{\frac{\pi}{2}\,}\,\Big(\frac{1}{|x|}+4\pi\alpha+\sqrt{\lambda}\Big)+o(1)\qquad\textrm{ as }x\to 0\,.
\end{equation*}
In the expression above $\lambda>0$ is arbitrary (and at fixed $g$ the charge $\xi$ is implicitly $\lambda$-dependent, see \eqref{eq:D_HringTau}) and one can therefore read the asymptotics as $\lambda\to 0$; thus, in terms of the scattering length $a=-(4\pi\alpha)^{-1}$ of the interaction, one has
\begin{equation}\label{eq:TMS_recovered_1+1}
g(x)\;=\;\xi\sqrt{\frac{\pi}{2}\,}\Big(\frac{1}{|x|}-\frac{1}{a}\Big)+o(1)\qquad\textrm{ as }x\to 0\,,
\end{equation}
\end{remark}
which has the form of the TMS asymptotics \eqref{eq:BP-generic-xi}.

\begin{remark}{Remark}\label{rem:renormalis1}
Another customary and equivalent expression for the action of $\mathring{H}^{(\tau)}$ in spatial coordinates is obtained by taking the inverse Fourier transform in \eqref{eq:Hring-tau_action2} (while the inverse Fourier transform in \eqref{eq:Hring-tau_action1} yielded \eqref{eq:Hring-tau_action1-xversion}). One finds
\begin{equation}\label{eq:Hring-tau_action1-xversion2}
\mathring{H}^{(\tau)}g\;=\;-\Delta g-(2\pi)^{3/2}\xi\,\delta\,,\qquad g\in\mathcal{D}(\mathring{H}^{(\tau)})\,,
\end{equation}
where $\delta$ is the Dirac distribution. The l.h.s.~of \eqref{eq:Hring-tau_action1-xversion2} is an $L^2$-function for each $g\in\mathcal{D}(\mathring{H}^{(\tau)})$ and the r.h.s.~expresses this $L^2$-function as the difference of two distributions. In general an element $g\in\mathcal{D}(\mathring{H}^{(\tau)})$ does not belong to $H^2(\mathbb{R}^3)$, in which case $-\Delta g$ is only meant as a distributional derivative: this term has an $L^2$-part plus a distributional (non-square-integrable) part $(2\pi)^{3/2}\xi\,\delta$ which is \emph{cancelled} in the difference in the r.h.s.~of  \eqref{eq:Hring-tau_action1-xversion2}. This is consistent with \eqref{eq:TMS_recovered_1+1} above, where formally (as $x\to 0$) one obtains a distributional contribution in $-\Delta g$ given by $-\Delta |x|^{-1}$, which is precisely a $\delta$-distribution. The cancellation occurring in the r.h.s.~of  \eqref{eq:Hring-tau_action1-xversion2} can be regarded as the \emph{renormalisation} of $-\Delta g$ needed to give meaning to $\mathring{H}^{(\tau)}g$ when $g\in\mathcal{D}(\mathring{H}^{(\tau)})$. In Remark \ref{rem:renormalis2} below we shall complete this comment by showing that \eqref{eq:Hring-tau_action1-xversion2} expresses the very same normalisation \eqref{eq:Berezin-Faddeev-1} announced by Berezin and Faddeev.
\end{remark}

\subsection{Ter-Martirosyan--Skornyakov construction for the point interaction}\label{subsec:TMS-1+1}

The functions in $\mathcal{D}(\mathring{H}^*)$ have the following asymptotic behaviour.

\begin{lemma}{Lemma}\label{lemma:asymptotic_integral}
Let $g$ be an arbitrary function in $\mathcal{D}(\mathring{H}^*)$. For a fixed $\lambda>0$ let $\widehat{g}=\widehat{f}+(p^2+\lambda)^{-2}\eta+(p^2+\lambda)^{-1}\xi$ be the decomposition of $g$ obtained in Proposition \ref{prop:D_Hdostar} 
for some $f\in\mathcal{D}(\mathring{H})$ and some $\eta,\xi\in\mathbb{C}$.
Then
\begin{equation}\label{eq:g_*_asymptotics}
\int_{\substack{ \\ \,p\in\mathbb{R}^3 \\ \! |p|<R}}{\:\widehat g(p) \,\ud p}\;=\;4\pi\xi R+\Big(\!-2\pi^2\sqrt{\lambda}\,\xi+\frac{\;\pi^2}{\sqrt{\lambda}}\,\eta\Big)+o(1)\qquad\textrm{as}\qquad R\to +\infty\,.
\end{equation}
\end{lemma}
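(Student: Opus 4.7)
The plan is to split the integral $\int_{|p|<R}\widehat{g}(p)\,\ud p$ according to the three summands in the decomposition of $\widehat{g}$ provided by Proposition \ref{prop:D_Hdostar} and to analyse each contribution separately as $R\to+\infty$. The three contributions come from $\widehat{f}$, from $\eta(p^2+\lambda)^{-2}$, and from $\xi(p^2+\lambda)^{-1}$, respectively; only the last one will produce the divergent term in $R$, while the second will produce the $\lambda$-dependent constant via $\eta$, and the first will vanish in the limit.

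For the contribution from $\widehat{f}$, I would recall the observation made in the proof of Lemma \ref{lemma:dom_Hdot_Ftransform} that $\widehat{f}\in L^1(\mathbb{R}^3)$ for every $f\in\mathcal{D}(\mathring{H})$ (since both $(1+p^2)\widehat{f}$ and $(1+p^2)^{-1}$ belong to $L^2(\mathbb{R}^3)$ and Cauchy--Schwarz applies), together with the vanishing integral condition $\int_{\mathbb{R}^3}\widehat{f}(p)\,\ud p=0$ from \eqref{eq:actionHdot_to_f}. Dominated convergence then gives
\[
\int_{|p|<R}\widehat{f}(p)\,\ud p\;=\;-\int_{|p|>R}\widehat{f}(p)\,\ud p\;=\;o(1)\qquad\textrm{as }R\to+\infty.
\]
For the $\eta$-term I would pass to spherical coordinates and observe that $(p^2+\lambda)^{-2}$ is integrable over $\mathbb{R}^3$, with the explicit computation $\int_{\mathbb{R}^3}(p^2+\lambda)^{-2}\,\ud p=\pi^2/\sqrt{\lambda}$ (via the substitution $r=\sqrt{\lambda}\tan\theta$); the tail $\int_{|p|>R}(p^2+\lambda)^{-2}\,\ud p$ is $O(R^{-1})=o(1)$, so this contribution converges to $\pi^2\eta/\sqrt{\lambda}$.

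The only genuinely $R$-sensitive piece is the $\xi$-term. Here I would compute in spherical coordinates
\[
\int_{|p|<R}\frac{\ud p}{p^2+\lambda}\;=\;4\pi\int_0^R\frac{r^2}{r^2+\lambda}\,\ud r\;=\;4\pi\!\left(R-\sqrt{\lambda}\,\arctan\tfrac{R}{\sqrt{\lambda}}\right),
\]
and then expand $\arctan(R/\sqrt{\lambda})=\tfrac{\pi}{2}+o(1)$ as $R\to+\infty$, which gives
\[
\xi\int_{|p|<R}\frac{\ud p}{p^2+\lambda}\;=\;4\pi\xi R-2\pi^2\sqrt{\lambda}\,\xi+o(1).
\]
Summing the three asymptotics yields exactly \eqref{eq:g_*_asymptotics}. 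There is no real technical obstacle: the computation is essentially a one-dimensional integration in the radial variable after spherical symmetrisation, the only subtle point being the justification of the $L^1$ character of $\widehat{f}$, which however has already been established in the proof of Lemma \ref{lemma:dom_Hdot_Ftransform} and can be invoked directly.
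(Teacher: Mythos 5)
Your proposal is correct and follows essentially the same route as the paper: split $\widehat{g}$ into its three components, use $\int_{\mathbb{R}^3}\widehat{f}\,\ud p=0$ (with $\widehat{f}\in L^1$) to discard the regular part, and evaluate the two radial integrals explicitly, the only cosmetic difference being that you obtain the $\eta$-contribution from the total integral $\pi^2/\sqrt{\lambda}$ plus a tail estimate instead of the paper's closed-form antiderivative. All constants match the statement.
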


\begin{proof}
Owing to Lemma \ref{lemma:dom_Hdot_Ftransform}, equation \eqref{eq:actionHdot_to_f}, $\int_{|p|<R}\widehat{f}(p)\,\ud p\to 0$ as $R\to\infty$, so this integral is a $o(1)$ contribution in \eqref{eq:g_*_asymptotics}. As for the two other summands in the decomposition of $\widehat{g}$, we have 
\begin{equation}\label{eq:Regularpart0}
\begin{split}
\int_{\substack{\,p\in\mathbb{R}^3 \\ \! |p|<R}}{\frac{\eta}{(p^2+\lambda)^2} \,\ud p}&\;=\;4\pi\eta \int_{0}^R{\frac{\rho^2}{(\rho^2+\lambda)^2} \,\ud \rho}\;=\;-\frac{2\pi\eta R}{R^2+\lambda}+\frac{2\pi\eta\,}{\sqrt{\lambda}}\arctan\Big(\frac{R}{\sqrt{\lambda}}\Big) \\
&\;=\;\frac{\;\pi^2}{\sqrt{\lambda}}\,\eta+o(1)\qquad\textrm{as}\qquad R\to +\infty
\end{split}
\end{equation}
and
\begin{equation}
\begin{split}
\int_{\substack{\,p\in\mathbb{R}^3 \\ \! |p|<R}}{\frac{\xi}{p^2+\lambda} \,\ud p}&\;=\;4\pi\xi \int_{0}^R{\frac{\rho^2}{\rho^2+\lambda} \,\ud \rho}\;=\;4\pi\xi \Big(R-\sqrt{\lambda}\arctan\Big(\frac{R}{\sqrt{\lambda}}\Big)\Big)  \\
&\;=\;4\pi\xi R-2\pi^2\sqrt{\lambda}\,\xi +o(1)\qquad\textrm{as}\qquad R\to +\infty\,,
\end{split}
\end{equation}
which complete the computation of the r.h.s.~of \eqref{eq:g_*_asymptotics}.
\end{proof}

One \emph{defines} the TMS extension $\mathring{H}_{\langle\alpha\rangle}$ of $\mathring{H}$, $\alpha\in\mathbb{R}\cup\{\infty\}$, to be the restriction of $\mathring{H}^*$ to those $g$'s of $\mathcal{D}(\mathring{H}^*)$ for which in the asymptotics \eqref{eq:g_*_asymptotics} the coefficients of the $O(R)$ term and of the $O(1)$ term are proportional by a factor $\alpha$, more precisely
\begin{equation}\label{eq:TMS_cond_alpha_xi_eta_1}
8\pi^3\alpha\,\xi\;=\;-2\pi^2\sqrt{\lambda}\,\xi+\frac{\;\pi^2}{\sqrt{\lambda}}\,\eta\,.
\end{equation}
Thus, $\mathcal{D}(\mathring{H}_{\langle\alpha\rangle})$ consists of all $g$'s of $\mathcal{D}(\mathring{H}^*)$ for which
\begin{equation}\label{eq:TMS_cond_asymptotics_1}
\int_{\substack{\,p\in\mathbb{R}^3 \\ \! |p|<R}}\,{\widehat g(p) \,\ud p}\;=\;4\pi\xi \,(R+2\pi^2\alpha)+o(1) \qquad\textrm{as}\qquad R\to +\infty\,.
\end{equation}
Both \eqref{eq:TMS_cond_alpha_xi_eta_1} and \eqref{eq:TMS_cond_asymptotics_1} express a \emph{TMS condition}; it  is a constraint on the singular part of the $g$'s of $\mathcal{D}(\mathring{H}^*)$ that restricts the choice of $\xi,\eta\in\mathbb{C}$ to those such that
\begin{equation}\label{eq:eta-xi-1}
\eta\;=\;2\sqrt{\lambda}\,(4\pi\alpha+\sqrt{\lambda})\,\xi\,.
\end{equation}

Condition \eqref{eq:eta-xi-1} above is precisely of the form $\eta=\tau\xi$ with
\begin{equation}\label{eq:tau-alpha}
\tau\;=\;2\sqrt{\lambda}\,(4\pi\alpha+\sqrt{\lambda})\,.
\end{equation}
Owing to \eqref{eq:D_HringTau} of Theorem \ref{thm:1p1_extensions_Birman}, 
this implies at once that the operator $\mathring{H}_{\langle\alpha\rangle}$ selected by the TMS condition \eqref{eq:TMS_cond_alpha_xi_eta_1}-\eqref{eq:TMS_cond_asymptotics_1}-\eqref{eq:eta-xi-1} above is precisely the self-adjoint extension $\mathring{H}^{(\tau)}$ of $\mathring{H}$ qualified by the parameter $\tau$ given by \eqref{eq:tau-alpha} in terms of $\alpha$. We also observe that \eqref{eq:tau-alpha} is the inverse of \eqref{eq:alpha-tau}: therefore, by what recalled in Remark \ref{rem:Htau_Halpha}, the TMS extension $\mathring{H}_{\langle\alpha\rangle}$ models a two-body point interaction with $s$-wave scattering length equal to $-(4\pi\alpha)^{-1}$.

Thus, 
\begin{itemize}
 \item imposing the TMS condition on $\mathring{H}^*$ with parameter $\alpha$ produces the self-adjoint extension of $\mathring{H}$ that gives the point interaction with scattering length $-(4\pi\alpha)^{-1}$;
 \item the collection of all the TMS extensions obtained this way cover the whole family of self-adjoint extensions of $\mathring{H}$: \emph{all self-adjoint extensions of $\mathring{H}$ are of TMS type}. 
\end{itemize}

\begin{remark}{Remark}\label{rem:renormalis2}
With the analysis of this Subsection we can supplement Remark \ref{rem:renormalis1} above with the following observation. First, one recognises that the condition \eqref{eq:Berezin-Faddeev-2}  identified by Berezin and Faddeev for each self-adjoint extension of $\mathring{H}$ is precisely of the TMS form \eqref{eq:TMS_cond_asymptotics_1}. Furthermore, the asymptotics \eqref{eq:TMS_cond_asymptotics_1} shows that for each $g\in\mathcal{D}(\mathring{H}_{\langle\alpha\rangle})$ the corresponding charge $\xi$ is given by
\begin{equation}
\xi\;=\;\lim_{R\to+\infty}\frac{1}{4\pi R}\int_{\substack{\,p\in\mathbb{R}^3 \\ \! |p|<R}}\,{\widehat g(p) \,\ud p}
\end{equation}
which plugged into \eqref{eq:Hring-tau_action2} yields
\begin{equation}
\widehat{(\mathring{H}_{\langle\alpha\rangle} g)}(p)\;=\;p^2 \widehat{g}(p)-\lim_{R\to+\infty}\frac{1}{4\pi R}\int_{\substack{\,p\in\mathbb{R}^3 \\ \! |p|<R}}\,{\widehat g(p) \,\ud p}\,.
\end{equation}
This is exactly the property \eqref{eq:Berezin-Faddeev-1} announced by Berezin and Faddeev, and it is the Fourier-transformed version of the identity \eqref{eq:Hring-tau_action1-xversion2} that expresses $\mathring{H}_{\langle\alpha\rangle} g$ by means of a suitable renormalisation of $-\Delta g$.
\end{remark}

\section{TMS Hamiltonians for the three-body problem with point interaction}\label{sec:2+1}

As discussed in Section \ref{sec:history_TMS}, the problem of a three-particle quantum system with two-body point interaction has been studied since long. However, one still has a relatively limited knowledge of the corresponding Hamiltonians, primarily their self-adjoint realisation (whereas the information on their stability and spectral properties is only partial). The novel difficulty, as opposed to the two-body case, is due to the fact that the Ter-Martirosyan--Skornyakov condition does not select in general a domain of self-adjointness, and one has to further study the self-adjoint extension of the resulting TMS operator.

We restrict our discussion to the most studied case, that of two identical fermions in interaction with a third particle of different nature (the so called ``\emph{2+1 fermionic system}'').

\subsection{The ``2+1 fermionic system''}

After removing the centre of mass, the free Hamiltonian of a three-dimensional system of two identical fermions of unit mass in relative positions $x_1,x_2$ with respect to a third particle of different species and with mass $m$ is the operator $-\Delta_{x_1}-\Delta_{x_2}-\frac{2}{m+1}\nabla_{x_1}\cdot\nabla_{x_2}$ acting on the Hilbert space
\begin{equation}
\cH\;=\;L^2_\mathrm{f}(\mathbb{R}^3\times\mathbb{R}^3,\ud x_1\ud x_2)\,,
\end{equation}
the subscript `f' standing for the fermionic sector of the $L^2$-space, i.e, the square-integrable functions that are anti-symmetric under exchange $x_1\leftrightarrow x_2$. Following the same path as in Subsection \ref{subsec:2delta}, one therefore starts with the  operator
\begin{equation}\label{eq:Hring_2+1}
\begin{split}
\mathring{H}\;&:=\;-\Delta_{x_1}-\Delta_{x_2}-\frac{2}{m+1}\nabla_{x_1}\cdot\nabla_{x_2} \\
\mathcal{D}(\mathring{H})\;&:=\;H^2_0((\mathbb{R}^3\times\mathbb{R}^3)\!\setminus\!(\Gamma_1\cup\Gamma_2))\cap\cH\,,
\end{split}
\end{equation}
where 
\begin{equation}
\Gamma_j\;:=\;\{(x_1,x_2)\in\mathbb{R}^3\times\mathbb{R}^3\,|\,x_j=0\}\,,\qquad j=1,2,
\end{equation}
and 
\begin{equation}
H^2_0((\mathbb{R}^3\times\mathbb{R}^3)\!\setminus\!(\Gamma_1\cup\Gamma_2))\;=\;\overline{C^\infty_0((\mathbb{R}^3\times\mathbb{R}^3)\!\setminus\!(\Gamma_1\cup\Gamma_2))}^{\|\,\|_{H^2}}\,.
\end{equation}

$\mathring{H}$ is a densely defined, closed, positive, and symmetric operator on $\cH$.\ As such, $\mathring{H}$ has equal deficiency indices, and in Proposition \ref{prop:D_Hdostar_2+1} below we shall see that they are infinite. Any self-adjoint extension of $\mathring{H}$ has a natural interpretation of Hamiltonian of point interaction between each fermion and the third particle.

It is convenient first to characterise $\mathcal{D}(\mathring{H})$ in Fourier transform.

\begin{lemma}{Lemma}\label{lemma:dom_Hdot_Ftransform_2+1}~
\begin{itemize}
 \item[(i)] The domain of $\mathring{H}$ is given by
 \begin{equation}\label{eq:dom_H_dot}
 \begin{split}
 \mathcal{D}(\mathring{H})\;&=\;\left\{f\in H^2_\mathrm{f}(\mathbb{R}^3\times\mathbb{R}^3)\left|\!
 \begin{array}{c}
 \iint\widehat{f}(p_1,p_2)\,\widehat{\eta}(p_1)\,\ud p_1\ud p_2=0 \\
 \forall\eta\in H^{-1/2}(\mathbb{R}^3)
 \end{array}\!\!\right.\right\} \\
 &=\;\big\{\,f\in H^2_\mathrm{f}(\mathbb{R}^3\times\mathbb{R}^3)\,\big|\,\langle\widehat{f},h\rangle=0\;\;\forall h\in\mathcal{X}\,\} \,,
 \end{split}
 \end{equation}
 where
 \begin{equation}
 \mathcal{X}\;:=\;\big\{\,h\,|\,h(p_1,p_2)=\widehat{\eta}(p_1)-\widehat{\eta}(p_2)\textrm{ for some }\eta\in H^{-1/2}(\mathbb{R}^3)\,\big\}
 \end{equation}
 and the duality product is meant  between the spaces $L^2(\mathbb{R}^3\times\mathbb{R}^3,(1+p_1^2+p_2^2)^2\ud p_1\ud p_2)$ and $L^2(\mathbb{R}^3\times\mathbb{R}^3,(1+p_1^2+p_2^2)^{-2}\ud p_1\ud p_2)$. The action of $\mathring{H}$ is given by
 \begin{equation}\label{eq:Hring_ftransf}
 \widehat{(\mathring{H}f)}(p_1,p_2)\;=\;(p_1^2+p_2^2+\mu\,p_1\cdot p_2)\,\widehat{f}(p_1,p_2)\,,
 \end{equation}
 where
 \begin{equation}\label{eq:mu}
  \mu\;:=\;\frac{2}{m+1}\,.
 \end{equation}
 In other words, $\mathcal{D}(\mathring{H})$ consists of the $H^2_\mathrm{f}$-functions $f$ such that $\int_{\mathbb{R}^3}\widehat{f}(p_1,p_2)\ud p_j=0$ \linebreak in $H^{1/2}(\mathbb{R}^3)$, where $j\in\{1,2\}$, and $\mathring{H}$ acts on such $f$'s as the free two-body (negative) Laplacian.
 \item[(ii)] The Friedrichs extension of $\mathring{H}$ is given by
 \begin{equation}\label{eq:Friedrichs_2+1}
 \mathcal{D}(\mathring{H}_F)\;=\;H^2_\mathrm{f}(\mathbb{R}^3\times \mathbb{R}^3)\,,\qquad \widehat{(\mathring{H}_Ff)}(p_1,p_2)\;=\;(p_1^2+p_2^2+\mu\,p_1\cdot p_2)\,\widehat{f}(p_1,p_2)\,.
 \end{equation}
\end{itemize}
\end{lemma}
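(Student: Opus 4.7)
The plan is to mirror the argument of Lemma~\ref{lemma:dom_Hdot_Ftransform}, with the essential modification that the coincidence hyperplanes $\Gamma_1,\Gamma_2\subset\mathbb{R}^6$ now have \emph{codimension three}, so that pointwise evaluation on $\Gamma_j$ must be replaced by a genuine Sobolev trace in $H^{1/2}(\mathbb{R}^3)$. Exactly as recalled from Appendix~\ref{app:approx}, an approximation argument identifies
\begin{equation*}
H^2_0((\mathbb{R}^3\!\times\!\mathbb{R}^3)\!\setminus\!(\Gamma_1\cup\Gamma_2))\;=\;\{\,f\in H^2(\mathbb{R}^6)\,|\,f|_{\Gamma_1}=0=f|_{\Gamma_2}\,\}\,,
\end{equation*}
with the equalities understood in the sense of traces, since each $\Gamma_j$ is a codimension-three smooth linear subspace and the trace map $H^2(\mathbb{R}^6)\to H^{1/2}(\Gamma_j)\simeq H^{1/2}(\mathbb{R}^3)$ is bounded.

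For part (i), I would translate the vanishing of the trace on $\Gamma_1$ into Fourier variables: the Fourier transform of $x_2\mapsto f(0,x_2)$ is (up to constants) the partial integral $p_2\mapsto\int_{\mathbb{R}^3}\widehat{f}(p_1,p_2)\,\ud p_1$, and vanishing of this in $H^{1/2}(\mathbb{R}^3)$ is equivalent, by duality, to the identity $\iint\widehat{f}(p_1,p_2)\widehat{\eta}(p_2)\,\ud p_1\ud p_2=0$ for every $\eta\in H^{-1/2}(\mathbb{R}^3)$, and the analogous identity with $\widehat{\eta}(p_1)$ encodes vanishing on $\Gamma_2$. The fermionic antisymmetry $\widehat{f}(p_1,p_2)=-\widehat{f}(p_2,p_1)$ makes these two conditions equivalent and allows them to be unified into the single duality condition against the antisymmetrised function $h(p_1,p_2)=\widehat{\eta}(p_1)-\widehat{\eta}(p_2)$, recovering \eqref{eq:dom_H_dot}. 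One should verify that $h$ indeed lies in the weighted $L^2$-space dual to the one containing $H^2(\mathbb{R}^6)$-functions; this is a straightforward Fubini calculation based on $\int_{\mathbb{R}^3}(1+p_1^2+p_2^2)^{-2}\,\ud p_2=\pi^2/\sqrt{1+p_1^2}$, which yields $\|h\|_{H^{-2}(\mathbb{R}^6)}\sim\|\eta\|_{H^{-1/2}(\mathbb{R}^3)}$. Formula \eqref{eq:Hring_ftransf} for the action is then a direct symbol computation for the constant-coefficient differential operator $-\Delta_{x_1}-\Delta_{x_2}-\mu\nabla_{x_1}\!\cdot\!\nabla_{x_2}$.

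For part (ii), I would apply the Friedrichs characterisation used in Lemma~\ref{lemma:dom_Hdot_Ftransform}(ii): $\mathring{H}_F$ is the unique self-adjoint extension whose operator domain is contained in the form domain $\mathcal{D}[\mathring{H}]=\overline{\mathcal{D}(\mathring{H})}^{\|\cdot\|_{H^1}}$. Since $\Gamma_1\cup\Gamma_2$ is a union of codimension-three smooth closed subsets of $\mathbb{R}^6$, it has vanishing $H^1$-capacity, and hence (by the same capacity argument referenced in Appendix~\ref{app:inclusions} and \eqref{useful_inclusion}) one has $H^1_0(\mathbb{R}^6\setminus(\Gamma_1\cup\Gamma_2))=H^1(\mathbb{R}^6)$, so $\mathcal{D}[\mathring{H}]=H^1_{\mathrm{f}}(\mathbb{R}^6)\supset H^2_{\mathrm{f}}(\mathbb{R}^6)$. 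The Fourier symbol factors as $p_1^2+p_2^2+\mu\,p_1\!\cdot\!p_2=\bigl(p_1+\tfrac{\mu}{2}p_2\bigr)^2+\bigl(1-\tfrac{\mu^2}{4}\bigr)p_2^2$, which is strictly positive off the origin whenever $0<\mu<2$, and this is ensured by $m>0$; therefore the corresponding multiplication operator, realised on $H^2_{\mathrm{f}}(\mathbb{R}^6)$, is a positive self-adjoint extension of $\mathring{H}$ whose operator domain sits in $\mathcal{D}[\mathring{H}]$, and must accordingly coincide with $\mathring{H}_F$. The main obstacle relative to the two-body setting is purely conceptual: the characterisation of $\mathcal{D}(\mathring{H})$ can no longer be formulated pointwise (as in \eqref{eq:H20approx}) but only through a distributional identity in $H^{1/2}(\mathbb{R}^3)$, which is precisely why \eqref{eq:dom_H_dot} must be phrased as duality against the space $\mathcal{X}$ of antisymmetrised single-variable ``charges''.
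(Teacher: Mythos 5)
Your proposal follows essentially the same route as the paper's proof: the codimension-three trace characterisation of $H^2_0((\mathbb{R}^3\times\mathbb{R}^3)\setminus(\Gamma_1\cup\Gamma_2))$, the passage to Fourier variables identifying the trace with the partial integral of $\widehat{f}$, the duality against $H^{-1/2}(\mathbb{R}^3)$ with antisymmetrisation into the space $\mathcal{X}$ (including the same weighted-$L^2$ verification via $\int(1+p_1^2+p_2^2)^{-2}\ud p_2=\pi^2(1+p_1^2)^{-1/2}$), and for part (ii) the identification of the form domain with $H^1_{\mathrm{f}}$ and the characterisation of the Friedrichs extension as the unique self-adjoint extension with operator domain inside the form domain. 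The argument is correct; the only cosmetic differences (tracing on $\Gamma_1$ instead of $\Gamma_2$, the explicit completion of the square in the symbol) do not change the structure.
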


\begin{proof}
Clearly, for $j\in\{1,2\}$ one has $\Gamma_j=\cap_{\nu=1}^3\Gamma_{j,\nu}$, where $x_j\equiv(x_{j,1},x_{j,2},x_{j,3})\in\mathbb{R}^3$ and  $\Gamma_{j,\nu}$ is the hyperplane $x_{j,\nu}=0$. For a generic $f\in H^2(\mathbb{R}^3\times\mathbb{R}^3)$ a standard trace theorem \cite[Lemma 16.1]{Trtar-SobSpaces_interp} 
asserts that $f|_{\Gamma_{1,1}}\in H^{2-\frac{1}{2}}(\mathbb{R}^2\times\mathbb{R}^3)$, and if in addition
$f\in H^2_0((\mathbb{R}^3\times\mathbb{R}^3)\!\setminus\!\Gamma_{1,\nu})$, then $f|_{\Gamma_{1,1}}=0$.
Thus, by repeated application of the trace theorem to a function  $f\in H^2_0((\mathbb{R}^3\times\mathbb{R}^3)\!\setminus\!\Gamma_{1})$, one finds $f|_{\Gamma_1}=0$ in $H^{2-\frac{3}{2}}(\mathbb{R}^3)$. Summarising,
\begin{equation}\label{eq:trace_on_hyperplanes}
f\in\mathcal{D}(\mathring{H})\qquad\Leftrightarrow\qquad f|_{\Gamma_j}\in H^{\frac{1}{2}}(\mathbb{R}^3)\quad\textrm{and}\quad f|_{\Gamma_j}=0\,,\quad j\in\{1,2\}\,.
\end{equation}
One therefore has that $f\in\mathcal{D}(\mathring{H})$ is equivalent to
\begin{equation}\label{eq:finDHring_iff}
0\;=\;\langle \eta,f|_{\Gamma_2}\rangle_{H^{-\frac{1}{2}},H^{\frac{1}{2}}}\;=\;\int_{\mathbb{R}^3}\overline{\widehat{\eta}(p_1)}\,(\widehat{f|_{\Gamma_2})}(p_1)\,\ud p_1\qquad\forall\eta\in H^{-\frac{1}{2}}(\mathbb{R}^3)\,.
\end{equation}
Let now $f\in\mathcal{D}(\mathring{H})$. The Fourier transforms that follow are all of $L^2$-functions, therefore the corresponding integral expressions are meant as $L^2$-norm limits. 
From 
\[
f|_{\Gamma_2}(x_1)\;=\;f(x_1,0)\;=\;\frac{1}{\;(2\pi)^{3/2}}\iint_{\mathbb{R}^3\times\mathbb{R}^3}\widehat{f}(p_1,p_2) \,e^{\ii x_1 p_1} \ud p_1\,\ud p_2
\]
for a.e.~$x_1\in\mathbb{R}^3$, and from the distributional identity $\delta(k)=(2\pi)^{-3}\int_{\mathbb{R}^3}\ud x\,e^{\ii x k}$, one finds
\[
\begin{split}
\widehat{(f|_{\Gamma_2})}(p_1)\;&=\;\frac{1}{\;(2\pi)^{3/2}}\int_{\mathbb{R}^3}f|_{\Gamma_2}(x_1)\,e^{-\ii p_1 x_1}\,\ud x_1 \\
&=\;\frac{1}{\;(2\pi)^3}\int_{\mathbb{R}^3}\ud x_1 \,e^{\ii x_1 (q_1-p_1)}\iint_{\mathbb{R}^3\times\mathbb{R}^3}\widehat{f}(q_1,q_2) \,\ud q_1\,\ud q_2 \;=\;\int_{\mathbb{R}^3}\widehat{f}(p_1,p_2)\,\ud p_2\,.
\end{split}
\]
From this and from \eqref{eq:trace_on_hyperplanes} one deduces that the function $p_1\mapsto\int\widehat{f}(p_1,p_2)\,\ud p_2$ vanishes in $H^{1/2}(\mathbb{R}^3)$.
Plugging the last identity into \eqref{eq:finDHring_iff} yields
\[
\iint_{\mathbb{R}^3\times\mathbb{R}^3}\widehat{f}(p_1,p_2)\,\widehat{\eta}(p_1)\,\ud p_1\ud p_2\;=\;0 \qquad\forall\eta\in H^{-1/2}(\mathbb{R}^3)\,,
\]
which proves the first line in \eqref{eq:dom_H_dot}.
This result, together with the anti-symmetry of $f$, implies that
\begin{equation}\label{eq:duality_product_for_f_2+1}
\iint_{\mathbb{R}^3\times\mathbb{R}^3}\widehat{f}(p_1,p_2)\,(\widehat{\eta}(p_1)-\widehat{\eta}(p_2))\,\ud p_1\ud p_2\;=\;0 \quad \forall\eta\in H^{-1/2}(\mathbb{R}^3)\quad (f\in \mathcal{D}(\mathring{H}))\,.
\end{equation}
We now observe that $\widehat{f}\in L^2(\mathbb{R}^3\times\mathbb{R}^3,(1+p_1^2+p_2^2)^2\ud p_1\ud p_2)$ because $\mathcal{D}(\mathring{H})\subset H^2(\mathbb{R}^3\times\mathbb{R}^3)$, and that the map $(p_1,p_2)\mapsto\widehat{\eta}(p_1)-\widehat{\eta}(p_2)$ belongs to $L^2(\mathbb{R}^3\times\mathbb{R}^3,(1+p_1^2+p_2^2)^{-2}\ud p_1\ud p_2)$, because
\[
\begin{split}
\iint_{\mathbb{R}^3\times\mathbb{R}^3}&\frac{|\widehat{\eta}(p_1)-\widehat{\eta}(p_2)|^2}{(1+p_1^2+p_2^2)^2}\,\ud p_1\ud p_2\;\leqslant\;4\iint_{\mathbb{R}^3\times\mathbb{R}^3}\frac{|\widehat{\eta}(p_1)|^2}{(1+p_1^2+p_2^2)^2}\,\ud p_1\ud p_2 \\
&=4\iint_{\mathbb{R}^3\times\mathbb{R}^3}\frac{|\widehat{\eta}(p_1)|^2}{(p_1^2+1)^{1/2}}\,\frac{(p_1^2+1)^{1/2}}{(1+p_1^2+p_2^2)^2}\,\ud p_1\ud p_2\;\leqslant\;4\pi^2\|\eta\|_{H^{-1/2}}^2\;<\;+\infty\,,
\end{split}
\]
where we used 
\[
\int_{\mathbb{R}^3}\frac{\ud p_2}{(1+p_1^2+p_2^2)^2}\;=\;\frac{\pi^2}{\:(p_1^2+1)^{1/2}}\,.
\]
Thus, one can regard \eqref{eq:duality_product_for_f_2+1} as the vanishing of a duality product between the spaces $L^2(\mathbb{R}^3\times\mathbb{R}^3,(1+p_1^2+p_2^2)^2\ud p_1\ud p_2)$ and $L^2(\mathbb{R}^3\times\mathbb{R}^3,(1+p_1^2+p_2^2)^{-2}\ud p_1\ud p_2)$, and since $f\in\mathcal{D}(\mathring{H})$ was arbitrary, one concludes the second line in \eqref{eq:dom_H_dot}. Equations \eqref{eq:Hring_ftransf}-\eqref{eq:mu}  are the Fourier-transformed version of \eqref{eq:Hring_2+1}, and this concludes the proof of part (i).
Concerning part (ii), we first observe that the form domain of $\mathring H$, namely the completion of $H^2_0((\mathbb{R}^3\times\mathbb{R}^3)\!\setminus\!(\Gamma_1\cup\Gamma_2))\cap\cH$, is the space
\begin{equation}
D[\mathring H]\;=\;H^1_0((\mathbb{R}^3\times\mathbb{R}^3)\!\setminus\!(\Gamma_1\cup\Gamma_2))\cap\cH\,.
\end{equation}
By a standard approximation with smooth and compactly supported functions (see Appendix \ref{app:inclusions}),
\begin{equation}\label{useful_inclusion_2+1}
H^2_{\mathrm{f}}(\mathbb{R}^3\times\mathbb{R}^3)\;\subset\; H^1_0((\mathbb{R}^3\times\mathbb{R}^3)\!\setminus\!(\Gamma_1\cup\Gamma_2))\cap\cH\;=\;H^1_{\mathrm{f}}(\mathbb{R}^3\times\mathbb{R}^3)\,,
\end{equation}
and $H^2_{\mathrm{f}}(\mathbb{R}^3\times\mathbb{R}^3)$ is the domain of the self-adjoint extension of $\mathring H$ given by the free negative Laplacian on $\cH$. Therefore, owing to the characterisation of the Friedrichs extension $\mathring{H}_F$ as the unique self-adjoint extension of $\mathring{H}$ whose operator domain is contained in $\mathcal{D}[\mathring{H}]$, one deduces immediately \eqref{eq:Friedrichs_2+1}.
\end{proof}

In the next Proposition we characterise the adjoint of $\mathring{H}$, which is the preliminary step in
order to identify the self-adjoint extensions of $\mathring{H}$ within the general scheme of the Kre{\u\i}n-Vi\v{s}ik-Birman theory.

\begin{theorem}{Proposition}\label{prop:D_Hdostar_2+1}
Let $\lambda>0$.
\begin{itemize}
 \item[(i)] One has
 \begin{equation}\label{eq:ker_hring*_2+1}
 \begin{split}
 \!\!\ker (\mathring{H}^*+\lambda\mathbbm{1})\;&=\;\left\{ u_\xi\in L^2_\mathrm{f}(\mathbb{R}^3\!\times\!\mathbb{R}^3)\left|\!
 \begin{array}{l}
 \widehat{u}_\xi(p_1,p_2)=\displaystyle\frac{\widehat{\xi}(p_1)-\widehat{\xi}(p_2)}{p_1^2+p_2^2+\mu\,p_1\cdot p_2+\lambda} \\
 \xi\in H^{-1/2}(\mathbb{R}^3)
 \end{array}\!\!\right.\right\}
 \end{split}
 \end{equation}
 \item[(ii)] There exist constants $c_1,c_2>0$ such that for a generic $u_\xi\in\ker (\mathring{H}^*+\lambda\mathbbm{1})$ one has
 \begin{equation}\label{eq:uxi-equivalent-norms}
  c_1\|\xi\|_{H^{-1/2}(\mathbb{R}^3)}\leqslant\;\|u_\xi\|_{\cH}\;\leqslant c_2\|\xi\|_{H^{-1/2}(\mathbb{R}^3)}\,.
 \end{equation}
 \item[(iii)] The domain and the action of the adjoint of $\mathring{H}$ are given by
 \begin{equation}\label{eq:decompositionD_Hdostar_2+1}
  \mathcal{D}(\mathring{H}^*)\;=\;\left\{\!\!
  \begin{array}{c}
  g\in L^2_\mathrm{f}(\mathbb{R}^3\!\times\!\mathbb{R}^3)\quad\textrm{such that} \\
  \widehat{g}(p_1,p_2)=\displaystyle\widehat{f}(p_1,p_2)+\frac{\widehat{u}_\eta(p_1,p_2)}{p_1^2+p_2^2+\mu\,p_1\cdot p_2+\lambda}+\widehat{u}_\xi(p_1,p_2) \\
  \textrm{for}\quad f\in\mathcal{D}(\mathring{H})\,,\quad \eta,\xi\in H^{-1/2}(\mathbb{R}^3)
  \end{array}\!\!\right\} 
 \end{equation}
 and
 \begin{eqnarray}
  (\widehat{(\mathring{H}^*+\lambda) g)}\,(p_1,p_2)\!\!\!&=&\!\!\! (p_1^2+p_2^2+\mu\,p_1\cdot p_2+\lambda) \,\widehat{F}_\lambda(p_1,p_2) \label{eq:actionHdotstar_to_g-f_2+1}\\
  \widehat{(\mathring{H}^* g)}(p_1,p_2)\!\!\!&=&\!\!\!(p_1^2+p_2^2+\mu\,p_1\cdot p_2) \widehat{g}(p_1,p_2)-(\widehat{\xi}(p_1)-\widehat{\xi}(p_2)), \label{eq:actionHdotstar_to_g-f_2+1bis}
 \end{eqnarray}
 where $u_\eta$ and $u_\xi$ are defined as in \eqref{eq:ker_hring*_2+1} above, and
\begin{equation}
\widehat{F}_\lambda(p_1,p_2)\;:=\;\widehat{f}(p_1,p_2)+\frac{\widehat{u}_\eta(p_1,p_2)}{p_1^2+p_2^2+\mu\,p_1\cdot p_2+\lambda}\,.
\end{equation}
\end{itemize}
\end{theorem}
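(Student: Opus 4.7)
The plan is to follow the template of Proposition \ref{prop:D_Hdostar}, adapted to the fact that now the deficiency space is infinite-dimensional and parametrised by the ``space of charges'' $H^{-1/2}(\mathbb{R}^3)$ rather than by $\mathbb{C}$. As in that earlier proof, I would first introduce the shifted operator $\mathring{H}_\lambda:=\mathring{H}+\lambda\mathbbm{1}$, which is densely defined, symmetric, and has strictly positive bottom $\lambda>0$ (note $|\mu|<2$ for $m>0$, so $K(p_1,p_2):=p_1^2+p_2^2+\mu\,p_1\cdot p_2+\lambda$ is strictly positive and of quadratic growth), so that Lemma \ref{lemma:krein_decomp_formula} is applicable and $\mathring{H}_\lambda^*=\mathring{H}^*+\lambda\mathbbm{1}$.

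For part (i), I would use $\ker(\mathring{H}_\lambda^*)=\mathrm{ran}(\mathring{H}_\lambda)^\perp$, namely the orthogonality identity
\[
\iint_{\mathbb{R}^3\times\mathbb{R}^3} K(p_1,p_2)\,\widehat{f}(p_1,p_2)\,\overline{\widehat{u}(p_1,p_2)}\,\ud p_1\ud p_2\;=\;0\qquad\forall f\in\mathcal{D}(\mathring{H})\,.
\]
Setting $\widehat{v}:=K\,\widehat{u}$, which belongs to $L^2(\mathbb{R}^6,(1+p_1^2+p_2^2)^{-2}\ud p_1\ud p_2)$ by the quadratic growth of $K$ and the $L^2$-membership of $\widehat{u}$, the above says that $\widehat{v}$ lies in the annihilator of $\mathcal{D}(\mathring{H})$ in the duality recalled in Lemma \ref{lemma:dom_Hdot_Ftransform_2+1}(i). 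That Lemma characterises $\mathcal{D}(\mathring{H})$ precisely as the orthogonal (in $H^2_\mathrm{f}$) to the space $\mathcal{X}$ of antisymmetric combinations $\widehat{\eta}(p_1)-\widehat{\eta}(p_2)$ with $\eta\in H^{-1/2}(\mathbb{R}^3)$. Since $\widehat{u}$ is antisymmetric (because $u\in\cH$) and $K$ is symmetric, $\widehat{v}$ is antisymmetric too; combining with a standard localisation-density argument (that the closure of $\mathcal{X}$ in the above weighted $L^2$-topology is exactly $\mathcal{X}$, the $H^{-1/2}$ topology being fine enough to make the map $\eta\mapsto\widehat\eta(p_1)-\widehat\eta(p_2)$ have closed range) yields \eqref{eq:ker_hring*_2+1}.

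For part (ii), I would compute $\|u_\xi\|_\cH^2=\iint |\widehat\xi(p_1)-\widehat\xi(p_2)|^2/K(p_1,p_2)^2\,\ud p_1\ud p_2$, expand the numerator, and use the fundamental one-dimensional-in-$p_2$ identity
\[
\int_{\mathbb{R}^3}\frac{\ud p_2}{K(p_1,p_2)^2}\;\asymp\;\frac{1}{\sqrt{1+p_1^2}}
\]
(which follows by completing the square in $p_2$ as in the cited estimate of the previous Lemma) to control the ``diagonal'' contributions $\int|\widehat\xi(p_1)|^2(1+p_1^2)^{-1/2}\ud p_1=\|\xi\|_{H^{-1/2}}^2$. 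The ``off-diagonal'' cross term $\mathrm{Re}\iint \widehat\xi(p_1)\overline{\widehat\xi(p_2)}/K^2\ud p_1 \ud p_2$ is bounded from above by the diagonal one via Schur test / Cauchy--Schwarz, and bounded from below using the positivity of $K$; this double-sided estimate delivers \eqref{eq:uxi-equivalent-norms}. The main technical obstacle of the whole proposition sits here: one must verify that the linear map $\xi\mapsto u_\xi$ is a topological isomorphism of $H^{-1/2}(\mathbb{R}^3)$ onto $\ker(\mathring{H}^*+\lambda\mathbbm{1})\subset\cH$, which legitimises the identification of the ``space of charges'' with $H^{-1/2}(\mathbb{R}^3)$ and is the ingredient making the distributional identity $\widehat{v}=\widehat\xi(p_1)-\widehat\xi(p_2)$ well-posed as an $L^2$-level statement.

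For part (iii), I would invoke the decomposition formula \eqref{eq:DomS*} from Appendix \ref{app:KVB} applied to $\mathring{H}_\lambda$: every $g\in\mathcal{D}(\mathring{H}_\lambda^*)$ splits uniquely as $g=f+(\mathring{H}_\lambda)_F^{-1}u_\eta+u_\xi$ with $f\in\mathcal{D}(\mathring{H})$ and $u_\eta,u_\xi\in\ker(\mathring{H}_\lambda^*)$. By Lemma \ref{lemma:dom_Hdot_Ftransform_2+1}(ii), $(\mathring{H}_\lambda)_F^{-1}$ in Fourier is multiplication by $K^{-1}$, hence $\widehat{(\mathring{H}_\lambda)_F^{-1}u_\eta}=\widehat{u}_\eta/K$, and inserting the form \eqref{eq:ker_hring*_2+1} of $u_\eta,u_\xi$ yields \eqref{eq:decompositionD_Hdostar_2+1}. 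The action \eqref{eq:actionHdotstar_to_g-f_2+1} then follows from the same Lemma \ref{lemma:krein_decomp_formula}, which prescribes $\mathring{H}_\lambda^* g=(\mathring{H}_\lambda)_F(f+(\mathring{H}_\lambda)_F^{-1}u_\eta)$ together with $\mathring{H}_\lambda^* u_\xi=0$, and \eqref{eq:actionHdotstar_to_g-f_2+1bis} is obtained from \eqref{eq:actionHdotstar_to_g-f_2+1} by subtracting $\lambda \widehat g$ and using \eqref{eq:decompositionD_Hdostar_2+1} to recover the ``singular-charge'' residue $-(\widehat\xi(p_1)-\widehat\xi(p_2))$, in exact analogy with the passage from \eqref{eq:actionHdotstar_to_g-f} to \eqref{eq:actionHdotstar_to_g} in the two-body case.
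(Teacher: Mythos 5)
Your parts (i) and (iii) follow essentially the paper's own route: the kernel is identified through $\ker(\mathring H^*+\lambda\mathbbm{1})=\ran(\mathring H+\lambda\mathbbm{1})^\perp$ together with the characterisation of $\mathcal{D}(\mathring{H})$ as the annihilator of $\mathcal{X}$ in the weighted duality of Lemma \ref{lemma:dom_Hdot_Ftransform_2+1}, and the domain/action of $\mathring H^*$ come from the Kre{\u\i}n decomposition formula \eqref{eq:DomS*} with $(\mathring H_\lambda)_F^{-1}$ acting in Fourier as multiplication by $K^{-1}$. You are also right that the bi-annihilator step in (i) rests on the map $\eta\mapsto\widehat\eta(p_1)-\widehat\eta(p_2)$ having closed range in $L^2(\mathbb{R}^6,(1+p_1^2+p_2^2)^{-2}\ud p_1\ud p_2)$, which — via \eqref{eq:lambda-equiv-1} — is exactly the content of the two-sided bound \eqref{eq:uxi-equivalent-norms}; so logically (ii) must be secured first, and calling it a ``standard localisation-density argument'' hides that dependence.

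The genuine gap is in your proof of (ii) itself, precisely in the lower bound — which the paper does not prove but imports from \cite[Lemma B.2]{CDFMT-2015}. Expanding $\|u_\xi\|_\cH^2$ gives the diagonal term, comparable to $\|\xi\|_{H^{-1/2}}^2$ by $\int_{\mathbb{R}^3}K(p_1,p_2)^{-2}\ud p_2\asymp(1+p_1^2)^{-1/2}$, \emph{minus} the cross term $2\,\mathrm{Re}\iint\overline{\widehat\xi(p_1)}\,\widehat\xi(p_2)\,K^{-2}\ud p_1\ud p_2$. For, say, $\widehat\xi\geqslant 0$ this cross term is strictly positive and genuinely subtracts from the diagonal; ``positivity of $K$'' is thus the source of the danger, not its resolution, and Cauchy--Schwarz only yields the trivial bound (cross $\leqslant$ diagonal, i.e.\ $\|u_\xi\|^2\geqslant 0$). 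What is needed is the strict operator gap: the integral operator with kernel $2K^{-2}$ must be dominated by the multiplication operator $2\pi^2(\nu p^2+\lambda)^{-1/2}$ with a relative bound strictly below $1$, equivalently $\langle\xi,W_\lambda\xi\rangle\gtrsim\|\xi\|_{H^{-1/2}}^2$ with $W_\lambda$ as in \eqref{eq:Wlambda}. This is a quantitative estimate (proved in \cite{CDFMT-2015} by an explicit analysis of the kernel, sector by sector) and does not follow from the sign of $K$ or from a generic Schur test; as written, your argument establishes only the upper bound $\|u_\xi\|_\cH\lesssim\|\xi\|_{H^{-1/2}}$, and with it neither \eqref{eq:uxi-equivalent-norms} nor the closed-range step you invoke in part (i) is justified. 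Either reproduce that quantitative estimate or, as the paper does, cite it.
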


\begin{proof}
In order to prove part (i) we use the fact that $u\in \ker(\mathring H^*+\lambda\mathbbm{1})=\ran(\mathring H+\lambda\mathbbm{1})^{\perp}$ if and only if for every $f\in \mathcal{D}(\mathring H+\lambda\mathbbm{1})$
\begin{equation*}
\begin{split}
0\;&=\;\int_{\mathbb{R}^3}{u(x_1,x_2)((\mathring{H}+\lambda\mathbbm{1})f)(x_1,x_2)\,\ud x_1 \ud x_2} \\
&=\;\int_{\mathbb{R}^3}{\widehat u(p_1,p_2)((\mathring{H}+\lambda\mathbbm{1})f)^{\widehat{\;}}(p_1,p_2)\,\ud p_1 \ud p_2} \\
&=\; \int_{\mathbb{R}^3}{\widehat u(p_1,p_2)(p_1^2+p_2^2+\mu p_1\cdot p_2+\lambda)\widehat f(p_1,p_2)\,\ud p_1 \ud p_2}\,,
\end{split}
\end{equation*}
where we applied \eqref{eq:Hring_ftransf} in the last step. Since $\mu\in(0,2)$ (owing to \eqref{eq:mu} with $m> 0$) and $\lambda>0$, then
\begin{equation}\label{eq:lambda-equiv-1}
(1+p_1^2+p_2^2)\;\sim\;(p_1^2+p_2^2+\mu p_1\cdot p_2+\lambda)
\end{equation}
(in the sense that each quantity controls the other from above and from below), and hence the fact that $u\in \ker(\mathring H^*+\lambda\mathbbm{1})\subset L^2(\mathbb{R}^3\times\mathbb{R}^3,\ud p_1\ud p_2)$ is equivalent to 
\[
 \big(\,(p_1,p_2)\mapsto(p_1^2+p_2^2+\mu p_1\cdot p_2+\lambda)\,\widehat{u}(p_1,p_2)\,\big)\;\in\; L^2(\mathbb{R}^3\times\mathbb{R}^3,(1+p_1^2+p_2^2)^{-2}\ud p_1\ud p_2)\,.
\]
Therefore, the last identity above implies, owing to the second line of \eqref{eq:dom_H_dot}, that 
\begin{equation*}
(p_1^2+p_2^2+\mu p_1\cdot p_2+\lambda)\,\widehat{u}(p_1,p_2)\;=\;\widehat\xi(p_1)-\widehat\xi(p_2)
\end{equation*}
for some $\xi\in H^{-1/2}(\mathbb{R}^3)$, and hence each $u\in \ker(\mathring H^*+\lambda\mathbbm{1})$ is of the form $u\equiv u_\xi$ given by \eqref{eq:ker_hring*_2+1}. 
Part (ii) is taken directly from \cite[Lemma B.2]{CDFMT-2015}.
Concerning part (iii), because of $(\mathring{H}+\lambda\mathbbm{1})_F=\mathring{H}_F+\lambda\mathbbm{1}$ and \eqref{eq:Friedrichs_2+1}, we have that
\begin{equation}\label{eq:HringF-2}
((\mathring{H}+\lambda\mathbbm{1})_F^{-1}u)^{\widehat{\;}}\;=\;(p^2+\lambda)^{-1}\widehat{u}\,.
\end{equation}
This, together with  the decomposition formula \eqref{eq:DomS*} of Lemma \ref{lemma:krein_decomp_formula} and the characterisation  \eqref{eq:ker_hring*_2+1} of $\ker (\mathring{H}^*+\lambda\mathbbm{1})$ yield immediately \eqref{eq:decompositionD_Hdostar_2+1}. The decomposition \eqref{eq:DomS*} also implies that the action of $\mathring{H}^*+\lambda\mathbbm{1}$ on a generic element $f+(\mathring{H}^*+\lambda\mathbbm{1})_F^{-1}u_\eta+u_\xi\in\mathcal{D}(\mathring{H}^*+\lambda\mathbbm{1})$ is the same as the action of $\mathring{H}_F+\lambda\mathbbm{1}$ on the component $f+(\mathring{H}^*+\lambda\mathbbm{1})_F^{-1}u_\eta\in\mathcal{D}(\mathring{H}_F+\lambda\mathbbm{1})$, while $(\mathring{H}^*+\lambda\mathbbm{1}) u_\xi=0$: this is precisely  \eqref{eq:actionHdotstar_to_g-f_2+1}. As for \eqref{eq:actionHdotstar_to_g-f_2+1bis}, it is an immediate consequence of  \eqref{eq:decompositionD_Hdostar_2+1} and \eqref{eq:actionHdotstar_to_g-f_2+1}.
\end{proof}

\subsection{General scheme for self-adjoint realisations of the 2+1 fermionic model}

The space $\ker (\mathring{H}^*+\lambda\mathbbm{1})$ determined in \eqref{eq:ker_hring*_2+1} is the Kre{\u\i}n space of the model (the ``boundary value space'', in modern terminology). It is known by the Kre{\u\i}n-Vi\v{s}ik-Birman theory that the self-adjoint extensions of  $\mathring{H}$ defined in \eqref{eq:Hring_2+1} are parametrised by self-adjoint operators acting on Hilbert subspaces of $\ker (\mathring{H}^*+\lambda\mathbbm{1})$, according to the classification given by Theorem \ref{thm:VB-representaton-theorem_Tversion} in Appendix \ref{app:KVB}.

Unlike the two-body model discussed in the previous Section, where $\ker (\mathring{H}^*+\lambda\mathbbm{1})$ was one-dimensional, in the three-body model this space is \emph{infinite-dimensional} (compare \eqref{eq:ker_hring*_2+1} with \eqref{eq:ker_hring*}), which makes the variety of the self-adjoint extensions of  $\mathring{H}$ much more complicated. In this respect, the extensions of Ter-Martirosyan--Skornyakov type form a proper sub-family (in the two-body case, all self-adjoint extensions were of TMS type).

Eventually one introduces the TMS extensions of $\mathring{H}$, analogously to Section \ref{subsec:TMS-1+1} for the two-body model, as restrictions of $\mathring{H}^*$ to domains characterised by special asymptotics of their wave-functions in the vicinity of the coincidence hyperplanes. Prior to that, in this Subsection we shall develop a general scheme for the identification of a generic self-adjoint extension of $\mathring{H}$, within which we will later select those of TMS type.

To this aim, it is convenient first to define the expressions
\begin{equation}\label{eq:Tlambda}
\widehat{(T_\lambda\,\xi)}(p)\;:=\;2\pi^2\sqrt{\nu p^2+\lambda}\;\widehat{\xi}(p)+\int_{\mathbb{R}^3}\frac{\widehat{\xi}(q)}{p^2+q^2+\mu p\cdot q+\lambda}\,\ud q
\end{equation}
and
\begin{equation}\label{eq:Wlambda}
\widehat{(W_\lambda\,\xi)}(p)\;:=\;\frac{2\pi^2}{\sqrt{\nu p^2+\lambda}\,}\,\widehat{\xi}(p)-2\!\int_{\mathbb{R}^3}\frac{\widehat{\xi}(q)}{(p^2+q^2+\mu p\cdot q+\lambda)^2}\,\ud q
\end{equation}
for fixed $\lambda>0$, where $\mu$ is given by \eqref{eq:mu} and 
\begin{equation}\label{eq:nu}
\nu\;:=\;1-\frac{\:\mu^2}{4}\;=\;\frac{m(m+2)}{(m+1)^2}\,.
\end{equation}

Since, for arbitrary $\varepsilon>0$ and $\xi\in H^{-\frac{1}{2}+\varepsilon}(\mathbb{R}^3)$,
\[
\begin{split}
\Big|\int_{\mathbb{R}^3}\frac{\widehat{\xi}(q)}{p^2+q^2+\mu p\cdot q+\lambda}\,\ud q\Big|\;\leqslant\;\|\xi\|_{H^{-\frac{1}{2}+\varepsilon}}\Big(\int_{\mathbb{R}^3}\frac{(q^2+1)^{\frac{1}{2}-\varepsilon}}{(p^2+q^2+\mu p\cdot q+\lambda)^2}\,\ud q\Big)^{1/2}\;<\;+\infty
\end{split}
\]
(owing to a Schwartz inequality in the first step and \eqref{eq:lambda-equiv-1} in the second one), we see that the integral in \eqref{eq:Tlambda} is finite for any $\xi\in H^{-\frac{1}{2}+\varepsilon}(\mathbb{R}^3)$, $\varepsilon>0$, while in general it diverges when $\xi\in H^{-\frac{1}{2}}(\mathbb{R}^3)$, as the example $\widehat{\xi}_0(q)\;:=\;\mathbf{1}_{\{|q|\geqslant 2\}}(|q|\ln|q|)^{-1}$ shows. A similar argument shows that the integral in \eqref{eq:Wlambda} is finite too at least for $\xi\in H^{-\frac{1}{2}}(\mathbb{R}^3)$. Summarising, $(\widehat{T_\lambda\,\xi})(p)$ is well-defined point-wise for almost every $p\in\mathbb{R}^3$ for $\xi\in H^{-\frac{1}{2}+\varepsilon}(\mathbb{R}^3)$, $\varepsilon>0$, whereas $(\widehat{W_\lambda\,\xi})(p)$ is so (at least) for $\xi\in H^{-1/2}(\mathbb{R}^3)$.

The relevance of functions of the form $\widehat{T_\lambda\,\xi}$ and $\widehat{W_\lambda\,\eta}$ is due to the fact that they arise in the asymptotic behaviour of the elements of $\mathcal{D}(\mathring{H}^*)$.

\begin{lemma}{Lemma}\label{lemma:asymptotic_integral_2+1}
Let $g$ be an arbitrary function in $\mathcal{D}(\mathring{H}^*)$. For a fixed $\lambda>0$ consider the decomposition of  $\widehat{g}$ in terms of $\widehat{f}$, $\widehat{u}_\xi$, $\widehat{u}_\eta$ given by \eqref{eq:decompositionD_Hdostar_2+1}. 
Then, as $R\to +\infty$,
\begin{equation}\label{eq:g_*_asymptotics_2+1}
\begin{split}
\int_{\substack{ \\ \,p_2\in\mathbb{R}^3 \\ \! |p_2|<R}}{\:\widehat{g}(p_1,p_2) \,\ud p_2}\;&=\;4\pi\widehat{\xi}(p_1) R+\big(-\widehat{(T_\lambda\,\xi)}(p_1)+\!\!\begin{array}{c}\frac{1}{2}\end{array}\!\!\!\widehat{(W_\lambda\,\eta)}(p_1)\big)+o(1)
\end{split}
\end{equation}
as a point-wise identity for almost every $p_1$.
\end{lemma}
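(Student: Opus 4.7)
The strategy is to substitute the decomposition of $\widehat g$ from Proposition \ref{prop:D_Hdostar_2+1}, namely $\widehat g=\widehat f+\widehat u_\eta/D+\widehat u_\xi$ with $D:=p_1^2+p_2^2+\mu p_1\cdot p_2+\lambda$, and to analyse each of the three summands separately when integrated over $|p_2|<R$.

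For the regular part, Lemma \ref{lemma:dom_Hdot_Ftransform_2+1}(i) gives $\int_{\mathbb{R}^3}\widehat f(p_1,p_2)\,\ud p_2=0$ in $H^{1/2}(\mathbb{R}^3)$ and thus pointwise for a.e.~$p_1$. A Cauchy--Schwarz estimate based on $(1+p_1^2+p_2^2)\widehat f\in L^2(\mathbb{R}^6)$ and $\int_{\mathbb{R}^3}(1+p_1^2+p_2^2)^{-2}\ud p_2=\pi^2(1+p_1^2)^{-1/2}$ shows that for a.e.~$p_1$ one has $\widehat f(p_1,\cdot)\in L^1(\mathbb{R}^3)$; dominated convergence then yields $\int_{|p_2|<R}\widehat f(p_1,p_2)\,\ud p_2\to 0$, which is an $o(1)$ contribution.

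The heart of the argument is the piece coming from $\widehat u_\xi$. Completing the square, $D=|p_2+\tfrac{\mu}{2}p_1|^2+A^2$ with $A:=\sqrt{\nu p_1^2+\lambda}$, so setting $c:=\tfrac\mu 2 p_1$ and $q:=p_2+c$ turns the domain $\{|p_2|<R\}$ into the \emph{shifted} ball $\{|q-c|<R\}$; the principal obstacle is to verify that this displacement does not pollute the $O(1)$ correction to the leading $O(R)$ growth. I would treat the ``diagonal'' piece $\widehat\xi(p_1)\int_{|q-c|<R}\ud q/(|q|^2+A^2)$ by spherical coordinates in $q$ with polar axis along $c$; the constraint $|q-c|<R$ then becomes $|q|<\rho_{\max}(\theta):=|c|\cos\theta+\sqrt{R^2-|c|^2\sin^2\theta}$, reducing the integral to
\[
2\pi\int_0^\pi\sin\theta\,\bigl[\rho_{\max}(\theta)-A\arctan(\rho_{\max}(\theta)/A)\bigr]\,\ud\theta.
\]
Taylor-expanding $\rho_{\max}(\theta)=R+|c|\cos\theta-|c|^2\sin^2\theta/(2R)+O(R^{-3})$ for $R\gg|c|$, the term linear in $|c|$ is killed by $\int_0^\pi\sin\theta\cos\theta\,\ud\theta=0$; using in addition $\arctan(x)=\tfrac\pi 2-1/x+O(1/x^3)$, the expression reduces to $4\pi R-2\pi^2 A+o(1)$. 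The ``off-diagonal'' piece $\int_{|p_2|<R}\widehat\xi(p_2)/D\,\ud p_2$ converges by dominated convergence (invoking the very Cauchy--Schwarz estimate recorded just after the definition of $T_\lambda$, which secures absolute integrability for $\xi\in H^{-1/2+\varepsilon}$) to $\int_{\mathbb{R}^3}\widehat\xi(p_2)/D\,\ud p_2$. Summing produces $\int_{|p_2|<R}\widehat u_\xi\,\ud p_2 = 4\pi\widehat\xi(p_1) R -\widehat{(T_\lambda\xi)}(p_1)+o(1)$.

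For the middle summand $\widehat u_\eta/D$, the kernel is $(\widehat\eta(p_1)-\widehat\eta(p_2))/D^2$, which decays like $|p_2|^{-4}$, so no $O(R)$ growth appears and both the diagonal and off-diagonal contributions converge absolutely; here the shifted-ball discrepancy is genuinely $o(1)$ since the integrand is integrable on the whole of $\mathbb{R}^3$. Using the elementary identity $\int_{\mathbb{R}^3}\ud q/(|q|^2+A^2)^2=\pi^2/A$, the diagonal part yields $\pi^2\widehat\eta(p_1)/A$, which together with $-\int_{\mathbb{R}^3}\widehat\eta(p_2)/D^2\,\ud p_2$ is exactly $\tfrac 12\widehat{(W_\lambda\eta)}(p_1)$. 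Assembling the three contributions reproduces the claimed asymptotics \eqref{eq:g_*_asymptotics_2+1}.
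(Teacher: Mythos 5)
Your proof is correct and follows essentially the same route as the paper: the same splitting of $\widehat{g}$ via \eqref{eq:decompositionD_Hdostar_2+1}, the vanishing of the $\widehat{f}$-term via Lemma \ref{lemma:dom_Hdot_Ftransform_2+1}(i), and the identification of the surviving $O(R)$ and $O(1)$ contributions with $4\pi\widehat{\xi}(p_1)R$, $-\widehat{(T_\lambda\,\xi)}(p_1)$ and $\frac{1}{2}\widehat{(W_\lambda\,\eta)}(p_1)$. The only divergence is computational: for the truncated diagonal integral $\int_{|p_2|<R}\ud p_2/(p_1^2+p_2^2+\mu\,p_1\cdot p_2+\lambda)$ you complete the square and expand the shifted-ball integral asymptotically in $R$, whereas the paper evaluates the angular and radial integrals in closed form (logarithm and arctangent) and then lets $R\to+\infty$; both yield $4\pi R-2\pi^2\sqrt{\nu p_1^2+\lambda}+o(1)$, and your caveat that the off-diagonal term is handled by dominated convergence for $\xi\in H^{-1/2+\varepsilon}$ is consistent with the Remark the paper places right after the Lemma.
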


\begin{remark}{Remark}
For a generic $g\in \mathcal{D}(\mathring{H}^*)$, and correspondingly for a generic charge $\xi\in H^{-\frac{1}{2}}(\mathbb{R}^3)$, the quantity in the l.h.s.~of \eqref{eq:g_*_asymptotics_2+1} is \emph{infinite} for every finite $R$ because, as remarked after the definitions \eqref{eq:Tlambda}-\eqref{eq:Wlambda}, the quantity $(\widehat{T_\lambda\,\xi})(p)$ is in general infinite when $\xi\in H^{-\frac{1}{2}}(\mathbb{R}^3)$. Instead, when additionally $\xi\in H^{-\frac{1}{2}+\varepsilon}(\mathbb{R}^3)$, with $\varepsilon>0$, the r.h.s.~of \eqref{eq:g_*_asymptotics_2+1} is finite (for almost every $p_1\in\mathbb{R}^3$): this case corresponds to a dense set of $g$'s in $\mathcal{D}(\mathring{H}^*)$, and for such $g$'s the quantity in the l.h.s.~of \eqref{eq:g_*_asymptotics_2+1} is \emph{finite} for finite $R$ and only diverges, linearly in $R$, as $R\to +\infty$.
\end{remark}

\begin{proof}[Proof of Lemma \ref{lemma:asymptotic_integral_2+1}]
Splitting $\widehat g$ according to \eqref{eq:decompositionD_Hdostar_2+1} yields
\begin{equation}\label{eq:gasympt_2+1}
\int_{\substack{ \\ \,p_2\in\mathbb{R}^3 \\ \! |p_2|<R}}{\:\widehat{g} \,\ud p_2}\;=\int_{\substack{ \\ \,p_2\in\mathbb{R}^3 \\ \! |p_2|<R}}{\:\widehat{f} \,\ud p_2}+\int_{\substack{ \\ \,p_2\in\mathbb{R}^3 \\ \! |p_2|<R}}{\:\widehat{u}_{\xi} \,\ud p_2}+\int_{\substack{ \\ \,p_2\in\mathbb{R}^3 \\ \! |p_2|<R}}{\:\frac{\widehat{u}_{\eta}}{p_1^2+p_2^2+\mu p_1\cdot p_2 +\lambda} \,\ud p_2}\,.
\end{equation}
The first summand in the r.h.s.~of \eqref{eq:gasympt_2+1}  is a $o(1)$-contribution, owing to Lemma \ref{lemma:dom_Hdot_Ftransform_2+1}(i). 
For the second summand, which is re-written as
\begin{align*}
\int_{\substack{ \\ \,p_2\in\mathbb{R}^3 \\ \! |p_2|<R}}&{\:\widehat{u}_{\xi}(p_1,p_2) \,\ud p_2}\;=
\\
&=\;\widehat{\xi}(p_1)\int_{\substack{ \\ \,p_2\in\mathbb{R}^3 \\ \! |p_2|<R}}\;\frac{\ud p_2}{p_1^2+p_2^2+\mu p_1\cdot p_2 +\lambda}-\int_{\substack{ \\ \,p_2\in\mathbb{R}^3 \\ \! |p_2|<R}}\,{\frac{\widehat{\xi}(p_2)}{p_1^2+p_2^2+\mu p_1 \cdot p_2+\lambda}\,\ud p_2}\,,
\end{align*}
one finds
\[
\begin{split}
\int_{\substack{ \\ \,p_2\in\mathbb{R}^3 \\ \! |p_2|<R}}&\;\frac{\ud p_2}{p_1^2+p_2^2+\mu p_1\cdot p_2 +\lambda}\;=\;2\pi\int_0^Rs^2\,\ud s\int_{-1}^1\frac{\ud y}{p_1^2+s^2+\mu|p_1|sy+\lambda} \\
&=\;\frac{2\pi}{\mu|p_1|}\int_0^Rs\ln\frac{s^2+p_1^2+\mu|p_1|s+\lambda}{s^2+p_1^2-\mu|p_1|s+\lambda}\,\ud s \\
&=\;2\pi R\,\Big(1+\frac{R}{2\mu|p_1|}\ln\frac{R^2+p_1^2+\mu|p_1|R+\lambda}{R^2+p_1^2-\mu|p_1| R+\lambda}\Big) \\
&\qquad\quad+2\pi\sqrt{\nu p_1^2+\lambda}\,\Big(\!\arctan\frac{\mu |p_1|-2R}{2\sqrt{\nu p_1^2+\lambda}}-\arctan\frac{\mu |p_1|+2R}{2\sqrt{\nu p_1^2+\lambda}}\,\Big) \\
&\qquad\quad+\pi\frac{(4\nu-2)p_1^2+\lambda}{4\sqrt{\nu p_1^2+\lambda}}\,\ln\frac{R^2+p_1^2+\mu|p_1|R+\lambda}{R^2+p_1^2-\mu|p_1| R+\lambda} \\
&=\;4\pi R-2\pi^2\sqrt{\nu p_1^2+\lambda}+o(1)\,,
\end{split}
\]
whence
\[
\begin{split}
\int_{\!\!\!\!\!\!\!\!\!\!\substack{\\ \\ \\ \,p_2\in\mathbb{R}^3 \\ \! |p_2|<R}}{\:\widehat{u}_{\xi}(p_1,p_2) \,\ud p_2}\;&=\;4\pi\,\widehat{\xi}(p_1)R-2\pi^2\widehat{\xi}(p_1)\sqrt{\nu p_1^2+\lambda}-\!\int_{\mathbb{R}^3}{\frac{\widehat{\xi}(p_2)}{p_1^2+p_2^2+\mu p_1 \cdot p_2+\lambda}\ud p_2}+o(1)\,.
\end{split}
\]
Analogously, for the third summand in the r.h.s.~of \eqref{eq:gasympt_2+1} one has
\begin{align*}
\int_{\substack{ \\ \,p_2\in\mathbb{R}^3 \\ \! |p_2|<R}}\,&{\:\frac{\widehat{u}_{\eta}(p_1,p_2)}{p_1^2+p_2^2+\mu p_1 \cdot p_2+\lambda} \,\ud p_2}\;=
\\
&=\;\widehat{\eta}(p_1)\int_{\substack{ \\ \,p_2\in\mathbb{R}^3 \\ \! |p_2|<R}}\;\frac{\ud p_2}{(p_1^2+p_2^2+\mu p_1\cdot p_2 +\lambda)^2}-\int_{\substack{ \\ \,p_2\in\mathbb{R}^3 \\ \! |p_2|<R}}\,{\frac{\widehat{\eta}(p_2)}{(p_1^2+p_2^2+\mu p_1 \cdot p_2+\lambda)^2}\,\ud p_2}
\end{align*}
and
\begin{equation}\label{eq_integral_in_lemma}
\begin{split}
\int_{\mathbb{R}^3}&\;\frac{\ud p_2}{(p_1^2+p_2^2+\mu p_1\cdot p_2 +\lambda)^2}\;=\;2\pi\int_0^{+\infty}\!\!s^2\,\ud s\int_{-1}^1\frac{\ud y}{(p_1^2+s^2+\mu|p_1|sy+\lambda)^2} \\
&=\;\int_0^{+\infty}\!\!\frac{4\pi s^2}{(p_1^2+s^2+\lambda)^2-\mu^2p_1^2 s^2}\,\ud s\;=\;\frac{\pi^2}{\sqrt{\nu p_1^2+\lambda}}\,,
\end{split}
\end{equation}
whence
\[
\int_{\!\!\!\!\!\!\!\!\!\!\substack{\\ \\ \\ \,p_2\in\mathbb{R}^3 \\ \! |p_2|<R}}\!{\frac{\widehat{u}_{\eta}(p_1,p_2)}{p_1^2+p_2^2+\mu p_1 \cdot p_2+\lambda} \,\ud p_2}\;=\frac{\pi^2\,\widehat{\eta}(p_1)}{\sqrt{\nu p_1^2+\lambda}}-\int_{\mathbb{R}^3}\!{\frac{\widehat{\eta}(p_2)}{(p_1^2+p_2^2+\mu p_1 \cdot p_2+\lambda)^2}\,\ud p_2}+o(1)\,.
\]
These findings, re-written with the definitions \eqref{eq:Tlambda} and \eqref{eq:Wlambda}, show that the r.h.s.~of \eqref{eq:gasympt_2+1} is precisely given by formula \eqref{eq:g_*_asymptotics_2+1}.
\end{proof}

Thus, functions in $\mathcal{D}(\mathring{H}^*)$ display completely analogous asymptotics to the two-body model. It is important to observe, however, that whereas \eqref{eq:g_*_asymptotics} was an identity between \emph{scalars}, here \eqref{eq:g_*_asymptotics_2+1} is a \emph{point-wise} almost everywhere identity between \emph{functions}. This is a crucial difference to keep into account when one imposes the Ter-Martirosyan--Skornyakov condition in such asymptotics.

To elaborate on this point further at a later stage, let us also interpret $\xi\mapsto T_\lambda\xi$ and $\xi\mapsto W_\lambda\xi$  as maps between suitable functional spaces.



\begin{theorem}{Proposition}\label{prop:T-W}
Let $\lambda>0$. 
\begin{itemize}
 \item[(i)] For each $s\geqslant 1$ the expression \eqref{eq:Tlambda} defines a densely defined and symmetric operator  $T_\lambda:\mathcal{D}(T_\lambda)\subset L^2(\mathbb{R}^3)\to L^2(\mathbb{R}^3)$ with domain $\mathcal{D}(T_\lambda):=H^s(\mathbb{R}^3)$. Moreover, $T_\lambda$ maps continuously $H^s(\mathbb{R}^3)$ into $H^{s-1}(\mathbb{R}^3)$ for each $s\in(-\frac{1}{2},\frac{3}{2})$. Instead, $T_\lambda H^{3/2}(\mathbb{R}^3)\nsubseteq H^{1/2}(\mathbb{R}^3)$.
 \item[(ii)] The expression \eqref{eq:Wlambda} defines a bounded, positive, and invertible linear operator $W_\lambda:H^{-1/2}(\mathbb{R}^3)\to H^{1/2}(\mathbb{R}^3)$, and for generic $u_\xi,u_\eta\in\ker (\mathring{H}^*+\lambda\mathbbm{1})$ one has
 \begin{equation}\label{eq:scalar_products}
 \langle u_\xi,u_\eta\rangle_{\cH}\;=\;\langle \xi,W_\lambda\eta\rangle_{H^{-\frac{1}{2}}(\mathbb{R}^3),H^{\frac{1}{2}}(\mathbb{R}^3)}\,.
 \end{equation}
\end{itemize}
\end{theorem}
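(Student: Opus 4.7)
The plan is to split both operators as a Fourier multiplier plus an integral piece: $T_\lambda = D_\lambda + K_\lambda$ with $D_\lambda$ the multiplier by $2\pi^2\sqrt{\nu p^2+\lambda}$ (equivalent to $\langle p\rangle$) and $K_\lambda$ the integral operator with kernel $k(p,q):=(p^2+q^2+\mu p\cdot q+\lambda)^{-1}$, and $W_\lambda = E_\lambda-2M_\lambda$ with $E_\lambda$ the multiplier by $2\pi^2(\nu p^2+\lambda)^{-1/2}$ (equivalent to $\langle p\rangle^{-1}$) and $M_\lambda$ the integral operator with kernel $k(p,q)^2$. The multipliers are trivial isomorphisms $D_\lambda:H^s\to H^{s-1}$ and $E_\lambda:H^{-1/2}\to H^{1/2}$; all of the technical work sits in the integral parts.

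For $K_\lambda:H^s\to H^{s-1}$ in part (i), I would apply a weighted Schur test to the rescaled kernel $\widetilde k(p,q):=\langle p\rangle^{s-1}k(p,q)\langle q\rangle^{-s}$ with weight $\phi(p)=\langle p\rangle^\beta$. Using the kernel bound $k(p,q)\lesssim (1+p^2+q^2)^{-1}$ (from \eqref{eq:lambda-equiv-1} and the elementary inequality $p^2+q^2+\mu p\cdot q\geq (1-\mu/2)(p^2+q^2)$), a direct radial computation gives $\int k(p,q)\langle q\rangle^\gamma\,\ud q\sim\langle p\rangle^{1+\gamma}$ whenever $\gamma\in(-3,-1)$. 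Both Schur conditions then reduce to $\beta\in(s-3,s-1)\cap(-s-2,-s)$, a non-empty intersection precisely when $s\in(-\tfrac{1}{2},\tfrac{3}{2})$, which is the claimed range. Symmetry of $T_\lambda$ on $L^2$ is immediate from the reality of $D_\lambda$ and the $p\leftrightarrow q$ symmetry of $k$; density of $H^s\subset L^2$ for $s\geq 1$ is standard, and well-definedness in $L^2$ for $s\geq 1$ follows from the mapping property at $s=1$. The failure $T_\lambda H^{3/2}\nsubseteq H^{1/2}$ requires an explicit counterexample: take a borderline $\xi\in H^{3/2}$ with, e.g., $\widehat\xi(q)=\mathbf{1}_{\{|q|\geq 2\}}|q|^{-3}(\log|q|)^{-\alpha}$ for $\alpha>1/2$, and verify by direct asymptotic analysis of $(K_\lambda\xi)(p)$ at large $|p|$ that the logarithmic tail prevents the image from lying in $H^{1/2}$.

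For part (ii), the boundedness $M_\lambda:H^{-1/2}\to H^{1/2}$ is an analogous Schur test, eased by the stronger kernel decay $k(p,q)^2\lesssim(1+p^2+q^2)^{-2}$. The scalar product identity is a direct computation: by \eqref{eq:ker_hring*_2+1},
\[
\langle u_\xi,u_\eta\rangle_\cH\;=\;\iint\frac{\overline{(\widehat\xi(p_1)-\widehat\xi(p_2))}(\widehat\eta(p_1)-\widehat\eta(p_2))}{(p_1^2+p_2^2+\mu p_1\cdot p_2+\lambda)^2}\,\ud p_1\,\ud p_2\,,
\]
and the $p_1\leftrightarrow p_2$ symmetry of the denominator collapses the expansion of the numerator to two terms, the diagonal one being evaluated with \eqref{eq_integral_in_lemma} and producing the multiplier $E_\lambda/2$, the off-diagonal one reproducing $-M_\lambda$; matching with \eqref{eq:Wlambda} gives exactly \eqref{eq:scalar_products}. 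Positivity is then immediate from $\langle\xi,W_\lambda\xi\rangle_{H^{-1/2},H^{1/2}}=\|u_\xi\|_\cH^2\geq 0$, and coupling this with the two-sided bound \eqref{eq:uxi-equivalent-norms} yields coercivity $\langle\xi,W_\lambda\xi\rangle\geq c_1^2\|\xi\|_{H^{-1/2}}^2$; combined with boundedness, this makes the associated sesquilinear form on $H^{-1/2}$ coercive and continuous, and Lax-Milgram gives the bijectivity $W_\lambda:H^{-1/2}\to H^{1/2}$.

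The main obstacle I anticipate is the endpoint sharpness at $s=\tfrac{3}{2}$ in part (i): the Schur estimate degenerates at the critical exponent rather than blowing up abruptly, so the counterexample must be a logarithmically borderline function and requires careful asymptotic control of the radial integral defining $(K_\lambda\xi)(p)$ rather than a plain $L^2$-type estimate. The remainder of the proof is essentially bookkeeping of the decomposition $T_\lambda=D_\lambda+K_\lambda$, $W_\lambda=E_\lambda-2M_\lambda$ combined with the already-available inequality \eqref{eq:uxi-equivalent-norms}.
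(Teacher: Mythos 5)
Your proposal is essentially the paper's proof in its core: the same splitting of $T_\lambda$ into the multiplier plus the integral operator with kernel $(p^2+q^2+\mu p\cdot q+\lambda)^{-1}$, the same Schur-test mechanism for the $H^s\to H^{s-1}$ bound (the paper fixes the single weight $(1+p^2)^{-3/4}$, i.e.\ your $\beta=-\frac{3}{2}$, which lies in your admissible window for every $s\in(-\frac{1}{2},\frac{3}{2})$), and the same direct symmetrisation computation, via \eqref{eq_integral_in_lemma}, for the identity \eqref{eq:scalar_products}. Two genuine differences are worth recording. First, for the failure $T_\lambda H^{3/2}\nsubseteq H^{1/2}$ your diagnosis that ``the counterexample must be a logarithmically borderline function'' is not correct: the paper simply takes $\widehat{\xi}_0=\mathbf{1}_{\{|p|\leqslant 1\}}$, which lies in every $H^s$, and observes that $\widehat{(Q_\lambda\xi_0)}(p)\sim(1+p^2)^{-1}$, whose $H^{1/2}$-norm diverges logarithmically while the multiplier part of $T_\lambda\xi_0$ stays in $H^{1/2}$; the obstruction is the generic $|p|^{-2}$ decay of the integral part on any charge with non-vanishing spherical average, not an endpoint-borderline effect. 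Your log-tweaked $\xi$ would also work (for $\alpha>1$ its output again decays like $|p|^{-2}$ by positivity of the kernel), but the delicate asymptotic analysis you anticipate is unnecessary. Second, in part (ii) the paper does not Schur-test the kernel $k(p,q)^2$ at all: boundedness of $W_\lambda:H^{-1/2}\to H^{1/2}$ is read off by duality from \eqref{eq:scalar_products} together with \eqref{eq:uxi-equivalent-norms}, and invertibility is argued via injectivity, density of the range, and a closed-range step; your route --- direct Schur bound on the integral part plus coercivity $\langle\xi,W_\lambda\xi\rangle\geqslant c_1^2\|\xi\|_{H^{-1/2}}^2$ from \eqref{eq:uxi-equivalent-norms} and Lax--Milgram --- is equally valid and in fact makes the bijectivity cleaner than the paper's ``bounded, hence closed range'' phrasing, which really relies on the same coercive lower bound.
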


\begin{remark}{Remark}
The choice of $L^2(\mathbb{R}^3)$ as the Hilbert space where to study $T_\lambda$ is made here for consistency with the previous literature \cite{Minlos-Shermatov-1989,Menlikov-Minlos-1991,Menlikov-Minlos-1991-bis,Shermatov-2003,Minlos-2011-preprint_May_2010,Minlos-2010-bis,Minlos-2012-preprint_30sett2011,Minlos-2012-preprint_1nov2012,Minlos-RusMathSurv-2014}, but it has no fundamental reason. As we shall discuss in Subsection \ref{subsec:TMS_2p1} below, what is intrinsically fundamental for the self-adjoint extension theory of $\mathring{H}$ is the operator $W_\lambda^{-1}T_\lambda$ on the Hilbert space $H^{-1/2}(\mathbb{R}^3)$.
\end{remark}

\begin{proof}[Proof of Proposition \ref{prop:T-W}]
(i) We re-write \eqref{eq:Tlambda} as $T_\lambda=L_\lambda+Q_\lambda$, where
\begin{equation}\label{eq:T-L-Q}
\widehat{(L_\lambda\xi})(p)\;:=\;2\pi^2\sqrt{\nu p^2+\lambda}\;\widehat{\xi}(p)\,,\qquad \widehat{(Q_\lambda\xi)}(p)\;:=\;\int_{\mathbb{R}^3}\frac{\widehat{\xi}(q)}{p^2+q^2+\mu p\cdot q+\lambda}\,\ud q\,.
\end{equation}
The symmetry of $T_\lambda$ on $L^2(\mathbb{R}^3)$ is obvious, since $L_\lambda$ is the multiplication by a real function and $Q_\lambda$ is  an integral operator with real and symmetric kernel, and so too is the fact that $\mathcal{D}(T_\lambda)$ is dense in $L^2(\mathbb{R}^3)$. It is also clear that $\|L_\lambda\xi\|_{H^{s-1}}\sim\|\xi\|_{H^s}$, thus it only remains to prove that $\|Q_\lambda\xi\|_{H^{s-1}}\lesssim\|\xi\|_{H^s}$, i.e., $\|(1+p^2)^{\frac{s-1}{2}}\widehat{(Q_\lambda\xi)}\|_2\lesssim\|(1+p^2)^{\frac{s}{2}}\widehat{\xi}\|_2$. In turn, setting $h(p):=(1+p^2)^{\frac{s}{2}}\widehat{\xi}(p)$, the last inequality is equivalent to $\|\widetilde{Q}_\lambda h\|_2\lesssim\|h\|_2$, where
\[
(\widetilde{Q}_\lambda h)(p)\;:=\;\int_{\mathbb{R}^3}K_\lambda(p,q)\,h(q)\,\ud q\,,\qquad K_\lambda(p,q)\;:=\;\frac{\;(1+p^2)^{\frac{s-1}{2}}}{(p^2+q^2+\mu p\cdot q+\lambda)(1+q^2)^{\frac{s}{2}}}\,.
\]
It is easily verified (using \eqref{eq:lambda-equiv-1} to introduce the $(p^2+q^2+1)$-factors)  that for the \emph{positive} function
\[
f(p)\;:=\;(1+p^2)^{-\frac{3}{4}}
\]
one has
\[\tag{*}
\begin{split}
\int_{\mathbb{R}^3}K_\lambda(p,q)f(p)\,\ud p\;&\lesssim\;\frac{1}{(1+q^2)^{\frac{s}{2}}}\int_{\mathbb{R}^3}\frac{\ud p}{(p^2+q^2+1)(1+p^2)^{\frac{5}{4}-\frac{s}{2}}} \\
&\lesssim\;\frac{1}{(1+q^2)^{\frac{s}{2}}}\,\frac{1}{(1+q^2)^{\frac{3}{4}-\frac{s}{2}}}\;=\;f(q)\,,\qquad{\textstyle s\in(-\frac{1}{2},\frac{3}{2})}\,,
\end{split}
\]
and 
\[\tag{**}
\begin{split}
\int_{\mathbb{R}^3}K_\lambda(p,q)f(q)\,\ud q\;&\lesssim\;(1+p^2)^{\frac{s-1}{2}}\int_{\mathbb{R}^3}\frac{\ud q}{(p^2+q^2+1)(1+q^2)^{\frac{3}{4}+\frac{s}{2}}} \\
&\lesssim\;(1+p^2)^{\frac{s-1}{2}}\,\frac{1}{(1+p^2)^{\frac{1}{4}+\frac{s}{2}}}\;=\;f(p)\,,\qquad \qquad{\textstyle s\in(-\frac{1}{2},\frac{3}{2})}\,.
\end{split}
\]
A standard Schur test based on (*) and (**) implies $\|\widetilde{Q}_\lambda h\|_2\lesssim\|h\|_2$ and hence $\|Q_\lambda\xi\|_{H^{s-1}}\lesssim\|\xi\|_{H^s}$ for $s\in(-\frac{1}{2},\frac{3}{2})$ and arbitrary $\xi\in H^{s}(\mathbb{R}^3)$.  The function $\widehat{\xi}_0:=\mathbf{1}_{\{|p|\leqslant 1\}}$ is a counter-example showing that the same bound cannot hold for $s\geqslant \frac{3}{2}$: indeed, clearly $\xi_0\in H^{s}(\mathbb{R}^3)$ for arbitrary $s\in\mathbb{R}$, but
\[
\widehat{(Q_\lambda\xi_0)}(p)\;\sim\;\int_{\substack{ \\ \,q\in\mathbb{R}^3 \\ \! |q|\leqslant 1}}\frac{\ud q}{p^2+q^2+1}\;\sim\;\frac{1}{\;(1+p^2)}
\]
and hence $Q_\lambda\xi_0\notin H^{1/2}(\mathbb{R}^3)$.

(ii) For arbitrary $\xi,\eta\in H^{-1/2}(\mathbb{R}^3)$ we compute
\begin{equation*} 
\begin{split}
\langle u_{\xi}, u_{\eta}\rangle_\cH\;&=\;\int_{\mathbb{R}^3\times\mathbb{R}^3}{\frac{\overline{\,\widehat{\xi}(p_1)}-\overline{\,\widehat{\xi}(p_2)}}{p_1^2+p_2^2+\mu p_1\cdot p_2+\lambda}\:\frac{\widehat \eta (p_1)-\widehat \eta (p_2)}{p_1^2+p_2^2+\mu p_1\cdot p_2+\lambda}\,\ud p_1\ud p_2} \\
&=\;2\int_{\mathbb{R}^3\times\mathbb{R}^3}{\frac{\overline{\,\widehat{\xi}(p_1)}\,\widehat\eta(p_1)}{(p_1^2+p_2^2+\mu p_1\cdot p_2+\lambda)^2}-\frac{\overline{\,\widehat{\xi}(p_1)}\,\widehat \eta(p_2)}{(p_1^2+p_2^2+\mu p_1\cdot p_2+\lambda)^2}\,\ud p_1 \ud p_2} \\
&=\;2\int_{\mathbb{R}^3}{\overline{\,\widehat{\xi}(p_1)}\,\Big ( \frac{\pi^2}{\sqrt{\nu p_1^2+\lambda}}\,\widehat \eta (p_1)-\int_{\mathbb{R}^3}{\frac{\widehat \eta (p_2)}{(p_1^2+p_2^2+\mu p_1\cdot p_2+\lambda)^2}\,\ud p_2} \Big )\ud p_1} \\
&=\;\int_{\mathbb{R}^3}\overline{\widehat{\xi}(p)}\,\widehat{(W_\lambda\xi)}(p)\,\ud p\,,
\end{split}
\end{equation*}
where we used the symmetry under exchange $p_1\leftrightarrow p_2$ in the second step, \eqref{eq_integral_in_lemma} in the third step, and \eqref{eq:Wlambda} in the last step. Therefore,
\[
\begin{split}
\|W_\lambda\eta\|_{H^{1/2}}\;&=\;\sup_{\|\xi\|_{H^{-1/2}=1}}\Big|\int_{\mathbb{R}^3}\overline{\widehat{\xi}(p)}\,\widehat{(W_\lambda\eta)}(p)\,\ud p\,\Big|\;=\;\sup_{\|\xi\|_{H^{-1/2}=1}}\big|\langle u_\xi,u_\eta\rangle_{\cH}\big| \\
&\leqslant\;\sup_{\|\xi\|_{H^{-1/2}=1}}\|u_\xi\|_{\cH}\|u_\eta\|_{\cH}\;\leqslant \;\textrm{const}\cdot\|\eta\|_{H^{-1/2}}\qquad\forall\eta\in H^{-1/2}(\mathbb{R}^3)\,,
\end{split}
\]
where we used \eqref{eq:uxi-equivalent-norms} in the last step, which shows that $W_\lambda H^{-1/2}(\mathbb{R}^3)\subset H^{1/2}(\mathbb{R}^3)$, that the map $W_\lambda:H^{-1/2}(\mathbb{R}^3)\to H^{1/2}(\mathbb{R}^3)$ is bounded, and that \eqref{eq:scalar_products} holds true. 
Owing to \eqref{eq:scalar_products}, one has $\langle \eta,W_\lambda\eta\rangle_{H^{-1/2},H^{1/2}}=\|u_\eta\|_{\cH}^2\geqslant 0$, thus $W_\lambda$ is positive. Furthermore, the following chain of implications holds: $W_\lambda\eta=0$ $\Rightarrow$ $\langle u_\xi,u_\eta\rangle_{\cH}=0$ $\forall u_\xi\in\ker (\mathring{H}^*+\lambda\mathbbm{1})$ $\Rightarrow$ $u_\eta=0$ $\Rightarrow$ $\eta=0$, where we used  \eqref{eq:scalar_products} in the first implication and \eqref{eq:ker_hring*_2+1} in the last one; this proves that $W_\lambda$ is injective and hence invertible on its range. For a generic $\xi\in H^{-1/2}(\mathbb{R}^3)$ one has this chain of implications: $\langle \xi,\phi\rangle_{H^{-1/2},H^{1/2}}=0$ $\forall\phi=W_\lambda\eta\in \ran\,W_\lambda$ 
$\Rightarrow$ $\langle u_\xi,u_\eta\rangle_{\cH}=0$ $\forall u_\eta\in\ker (\mathring{H}^*+\lambda\mathbbm{1})$  $\Rightarrow$ $u_\xi=0$ $\Rightarrow$ $\xi=0$ (again using \eqref{eq:scalar_products} in the first implication and \eqref{eq:ker_hring*_2+1} in the last one), hence by duality $\ran\,W_\lambda$ must be dense in $H^{1/2}(\mathbb{R}^3)$. Since $W_\lambda$ is bounded, then $\ran\,W_\lambda$ is also closed in $H^{1/2}(\mathbb{R}^3)$, thus $W_\lambda$ is an invertible bijection $H^{-1/2}(\mathbb{R}^3)\to H^{1/2}(\mathbb{R}^3)$.
\end{proof}

As an immediate consequence of Proposition \ref{prop:T-W},
\begin{equation}\label{eq:W-scalar-product}
\langle \xi,\eta\rangle_{W_\lambda}\;:=\;\langle \xi,W_\lambda\,\eta\rangle_{H^{-\frac{1}{2}},H^{\frac{1}{2}}}\;=\;\langle u_\xi,u_\eta\rangle_{\cH}
\end{equation}
defines a scalar product in $H^{-\frac{1}{2}}(\mathbb{R}^3)$. It is \emph{equivalent} to the standard scalar product of $H^{-\frac{1}{2}}(\mathbb{R}^3)$, as follows by combining \eqref{eq:W-scalar-product} with \eqref{eq:uxi-equivalent-norms}.

We shall denote by $H^{-1/2}_{W_\lambda}(\mathbb{R}^3)$ the Hilbert space consisting of the $H^{-\frac{1}{2}}(\mathbb{R}^3)$-functions and equipped with the scalar product $\langle\cdot,\cdot\rangle_{W_\lambda}$.
Then the map
\begin{equation}\label{eq:isomorphism_Ulambda}
\begin{split}
U_\lambda\,:\,\ker (\mathring{H}^*+\lambda\mathbbm{1})\;&\;\xrightarrow[]{\;\;\;\cong\;\;\;}\;H^{-1/2}_{W_\lambda}(\mathbb{R}^3)\,,\qquad u_\xi \longmapsto \,\xi
\end{split}
\end{equation}
is an isomorphism between Hilbert spaces, with $\ker (\mathring{H}^*+\lambda\mathbbm{1})$ equipped with  the standard scalar product inherited from $\cH$.

One can therefore equivalently parametrise the self-adjoint extensions of $\mathring{H}$ in terms of self-adjoint operators acting on Hilbert subspaces of $\ker (\mathring{H}^*+\lambda\mathbbm{1})$ or of $H^{-1/2}_{W_\lambda}(\mathbb{R}^3)$. The whole family of such extensions is given by the  Kre{\u\i}n-Vi\v{s}ik-Birman theory through the classification of Theorem \ref{thm:VB-representaton-theorem_Tversion}.

\subsection{Ter-Martirosyan--Skornyakov Hamiltonians of $2+1$ point interaction}\label{subsec:TMS_2p1}

The previous analysis brings us now to the class of operators on $H^{-1/2}_{W_\lambda}(\mathbb{R}^3)$ (or, also, operators on $\ker (\mathring{H}^*+\lambda\mathbbm{1})$) which identify those self-adjoint extensions of $\mathring{H}$ of Ter-Martirosyan--Skornyakov type.

As it will emerge in the following, the crucial point is the possibility of reducing  $T_\lambda$ to a $L^2$-closed invariant subspace with values into $H^{1/2}(\mathbb{R}^3)$ and to define on the orthogonal complement another $H^{1/2}$-valued symmetric operator $S_0$. Given the resulting $\mathcal{T}_\lambda=S_0\oplus T_\lambda$ one has then to investigate the self-adjointness of $W_\lambda^{-1}\mathcal{T}_\lambda$ on  $H^{-1/2}_{W_\lambda}(\mathbb{R}^3)$. We observe that this is related to, but it is \emph{not} the same question of the self-adjointness of $T_\lambda$ on $L^2(\mathbb{R}^3)$.

The study of the self-adjoint extensions of $T_\lambda$, as a densely defined and symmetric operator on $L^2(\mathbb{R}^3)$, has been carried on systematically in a series of works by
Minlos and Shermatov \cite{Minlos-Shermatov-1989}, Melnikov and Minlos \cite{Menlikov-Minlos-1991,Menlikov-Minlos-1991-bis}, Shermatov \cite{Shermatov-2003}, and Minlos \cite{Minlos-2011-preprint_May_2010,Minlos-2010-bis,Minlos-2012-preprint_30sett2011,Minlos-2012-preprint_1nov2012,Minlos-RusMathSurv-2014}. In the additional work \cite{CDFMT-2012} by one of us and co-workers, the Friedrichs extension of $T_\lambda$ was studied (in those regime of masses $m$ in which $T_\lambda$ itself is semi-bounded below). It is relevant to remark that in all those works $\mathcal{D}(T_\lambda)$ was fixed in spaces of various regularity, at least $H^1(\mathbb{R}^3)$. We recall from Proposition \ref{prop:T-W}(i) that $T_\lambda H^1(\mathbb{R}^3)\subset L^2(\mathbb{R}^3)$ because both its multiplicative part $L_\lambda$ and its integral part $Q_\lambda$ map separately $H^1(\mathbb{R}^3)$ into $L^2(\mathbb{R}^3)$: as discussed in \cite{CDFMT-2012}, in the domain of a self-adjoint extension $\widetilde{T}_\lambda$ of $T_\lambda$ there are elements $\xi$ for which neither $L_\lambda\xi$ nor $Q_\lambda\xi$ is square-integrable, but their difference is, due to a cancellation of singularities in $L_\lambda\xi+Q_\lambda\xi$.

More precisely, $T_\lambda$ commutes with the rotations in $\mathbb{R}^3$ and, with respect to the canonical decomposition
\begin{equation}\label{eq:L2_ell_decomposition}
L^2(\mathbb{R}^3)\;\cong\;\bigoplus_{\ell=0}^\infty L^2(\mathbb{R}^+,r^2\,\ud r)\otimes\mathrm{span}\{Y_{\ell,-\ell},\dots,Y_{\ell,\ell}\}\;\equiv\;\bigoplus_{\ell=0}^\infty L^2_\ell(\mathbb{R}^3)
\end{equation}
(where the $Y_{\ell,m}$'s are the spherical harmonics on $\mathbb{S}^2$), $T_\lambda$ leaves each $L^2_\ell(\mathbb{R}^3)$ invariant and is densely defined and symmetric on $L^2_\ell(\mathbb{R}^3)$, thus
\begin{equation}\label{eq:T-Tell}
T_\lambda\;=\;\bigoplus_{\ell=0}^\infty \,T_\lambda^{(\ell)}\qquad\qquad T_\lambda^{(\ell)}\,\textrm{ symmetric on } L^2_\ell(\mathbb{R}^3)
\end{equation}
and 
\begin{equation}\label{eq:Tlambda-rad-ang}
T_\lambda^{(\ell)}\;=\;\mathcal{T}_{\lambda}^{(\ell)}\otimes \mathbbm{1}\qquad\textrm{ on }\qquad L^2_\ell(\mathbb{R}^3)\;\cong\;L^2(\mathbb{R}^+,r^2\,\ud r)\otimes\mathrm{span}\{Y_{\ell,-\ell},\dots,Y_{\ell,\ell}\}\,.
\end{equation}
Therefore, the study of the self-adjointness or of the self-adjoint extensions of $T_\lambda$ boils down to the same study for each $T_\lambda^{(\ell)}$. It is today well-known from the works cited above that for \emph{even} $\ell$'s  $T_\lambda^{(\ell)}$ is self-adjoint on $L^2_\ell(\mathbb{R}^3)$, while for \emph{odd} $\ell$'s there exist masses $m_1>m_3>m_5>\cdots$ such that $T_\lambda^{(\ell)}$ is self-adjoint on $L^2_\ell(\mathbb{R}^3)$ for $m>m_\ell$ and it has instead a one-parameter family of self-adjoint extensions for $m\in(0,m_\ell]$. 

As seen in the proof of Proposition \ref{prop:T-W}(i), $T_\lambda$ fails to map $H^{3/2}(\mathbb{R}^3)$ into $H^{1/2}(\mathbb{R}^3)$ and the counter-example considered therein was a function in $L^2_{\ell=0}(\mathbb{R}^3)\cap H^{3/2}(\mathbb{R}^3)$. In fact, that failure is exceptional and it does not occur for $H^{3/2}$-functions with a sufficient amount of oscillations, as the following Proposition shows.

\begin{theorem}{Proposition}\label{prop:T-ell}
For each $\ell\geqslant 1$, and in terms of the notation of \eqref{eq:L2_ell_decomposition}-\eqref{eq:T-Tell},
\begin{equation}\label{eq:T-ell-3/2-1/2}
\|T_\lambda\xi\|_{H^{1/2}}\;\lesssim\|\xi\|_{H^{3/2}}\qquad \forall \xi\in H^{3/2}(\mathbb{R}^3)\cap L^2_\ell(\mathbb{R}^3)\,,
\end{equation}
whence, in particular,
\begin{equation}
 T_\lambda^{(\ell)}\big(H^{3/2}(\mathbb{R}^3)\cap L^2_\ell(\mathbb{R}^3)\big)\;\subset\; \big(H^{1/2}(\mathbb{R}^3)\cap L^2_\ell(\mathbb{R}^3)\big)\,.
\end{equation}
\end{theorem}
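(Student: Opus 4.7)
The strategy is to decompose $T_\lambda = L_\lambda + Q_\lambda$ as in the proof of Proposition \ref{prop:T-W}(i) and treat the two pieces separately. The multiplication operator $L_\lambda$ trivially satisfies $\|L_\lambda \xi\|_{H^{1/2}} \lesssim \|\xi\|_{H^{3/2}}$ for every $\xi\in H^{3/2}(\mathbb{R}^3)$, without any restriction on the angular sector, because its symbol $2\pi^2\sqrt{\nu p^2+\lambda}$ is equivalent to $(1+|p|^2)^{1/2}$. The problem therefore reduces to proving $\|Q_\lambda\xi\|_{H^{1/2}}\lesssim\|\xi\|_{H^{3/2}}$ for $\xi\in H^{3/2}(\mathbb{R}^3)\cap L^2_\ell(\mathbb{R}^3)$, $\ell\geqslant 1$.

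Since the kernel $K_\lambda(p,q)=(p^2+q^2+\mu\,p\cdot q+\lambda)^{-1}$ is invariant under simultaneous rotations of $p$ and $q$, $Q_\lambda$ commutes with $SO(3)$ and preserves each $L^2_\ell(\mathbb{R}^3)$. The plan is to expand the kernel in Legendre polynomials,
\[
K_\lambda(p,q)\;=\;\sum_{j\geqslant 0}k_j(|p|,|q|)\,P_j(\hat p\cdot\hat q),
\]
and use the spherical harmonic addition formula to reduce $Q_\lambda|_{L^2_\ell}$ to a purely radial integral operator on $L^2(\mathbb{R}^+,r^2\,\mathrm{d}r)$ with kernel $\tfrac{4\pi}{2\ell+1}k_\ell(r,s)$. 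The coefficients admit a closed-form expression in terms of Legendre functions of the second kind, $k_\ell(r,s)\propto(\mu rs)^{-1}Q_\ell\bigl((r^2+s^2+\lambda)/(\mu rs)\bigr)$, and the standard asymptotics of $Q_\ell$ at infinity produce the pointwise estimate
\[
k_\ell(r,s)\;\lesssim\;\frac{[\min(r,s)]^\ell}{[\max(r,s)]^{\ell+2}}\qquad\textrm{for }r^2+s^2\gg 1.
\]
This is a gain of a factor $[\min/\max]^\ell$ over the $\ell=0$ kernel $k_0$ (which only has the slow decay $1/\max(r,s)^2$ together with a logarithmic diagonal singularity) -- precisely the extra decay that was missing in the counter-example $\widehat\xi_0=\mathbf{1}_{\{|p|\leqslant 1\}}$ used in the proof of Proposition \ref{prop:T-W}(i), which is spherically symmetric and therefore lies in $L^2_0$.

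The last step is a weighted Schur test on the radial operator. Writing $\widehat\xi(q)=\xi_\ell(|q|)Y_{\ell,m}(\hat q)$ and setting $\psi(r):=(1+r^2)^{3/4}r\,\xi_\ell(r)$, the desired inequality $\|Q_\lambda\xi\|_{H^{1/2}}\lesssim\|\xi\|_{H^{3/2}}$ is equivalent to the $L^2(\mathrm{d}r)$-boundedness of the integral operator with kernel
\[
M(r,s)\;:=\;\tfrac{4\pi}{2\ell+1}\,(1+r^2)^{1/4}\,r\cdot s\,(1+s^2)^{-3/4}\,k_\ell(r,s).
\]
Splitting the radial integrals $\int M(r,s)\,\mathrm{d}s$ and $\int M(r,s)\,\mathrm{d}r$ into the three regions $r\ll s$, $r\sim s$, $r\gg s$ and using the pointwise estimate on $k_\ell$, all three contributions turn out to be uniformly $O(1)$: the borderline diagonal contribution that diverges logarithmically in the $\ell=0$ case is now integrable thanks to the additional $[\min/\max]^\ell$-factor. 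A Schur test with constant weight then delivers the $L^2$-bound, hence \eqref{eq:T-ell-3/2-1/2}.

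The main obstacle I expect is the sharp control of $k_\ell(r,s)$ in the diagonal region $r\sim s$, where the argument $(r^2+s^2+\lambda)/(\mu rs)$ of $Q_\ell$ remains in a bounded interval but approaches the branch point $1$ of $Q_\ell$ as $\mu\to 2$, i.e.\ as the mass $m$ tends to $0$. At fixed mass the constants remain finite, but they have to be tracked carefully across the three regions so as not to lose powers of $r$ or $s$ in the matching; this is what forces the splitting above rather than a single global estimate.
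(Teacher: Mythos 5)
Your proposal is correct and follows the same overall strategy as the paper: split $T_\lambda=L_\lambda+Q_\lambda$, dispose of $L_\lambda$ trivially, reduce $Q_\lambda$ on each sector $L^2_\ell$ to a radial integral operator on the half-line, and conclude with a weighted Schur test whose effective kernel is (up to the weights $(1+r^2)^{1/4}$, $(1+s^2)^{-3/4}$) bounded by $(rs)^{\ell+1}(1+r^2+s^2)^{-(\ell+1)}$. The only substantive difference is how you extract the crucial $(rs)^{\ell}$ gain in the radial kernel: you evaluate the angular integral in closed form as a Legendre function of the second kind, $k_\ell(r,s)\propto(\mu rs)^{-1}Q_\ell\bigl((r^2+s^2+\lambda)/(\mu rs)\bigr)$, and invoke the asymptotics $Q_\ell(z)\sim z^{-\ell-1}$, whereas the paper stays elementary: it writes $P_\ell$ via Rodrigues' formula, integrates by parts $\ell$ times in the angular variable, and bounds the resulting integrand directly, obtaining the pointwise bound $|k_\ell(r,s)|\lesssim (rs)^{\ell}(1+r^2+s^2)^{-(\ell+1)}$ valid for \emph{all} $r,s>0$. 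Your route is shorter if one is willing to quote the classical behaviour of $Q_\ell$, and it makes transparent why the constants degenerate only as $\mu\to 2$ (argument approaching the branch point), while the paper's integration by parts is self-contained and avoids any discussion of special-function asymptotics. Two minor inaccuracies in your commentary, neither affecting the proof for $\ell\geqslant 1$: since $z=(r^2+s^2+\lambda)/(\mu rs)\geqslant 2/\mu>1$ at fixed $m>0$, the kernel $k_0$ has \emph{no} logarithmic diagonal singularity (that appears only in the limit $\mu\to2$), and the failure of the constant-weight Schur test at $\ell=0$ occurs in the off-diagonal region $r\gg s$ (insufficient decay of the row integral at large $r$, consistent with the counter-example $Q_\lambda\xi_0\sim(1+p^2)^{-1}$), not on the diagonal; also, because $\lambda>0$, your bound on $k_\ell$ holds for all $r,s$, so the restriction $r^2+s^2\gg1$ and a separate treatment of the compact region are unnecessary.
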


\begin{proof}
As discussed already in the proof of Proposition \ref{prop:T-W}(i), we only need to prove the statement for $Q_\lambda$, the integral part of $T_\lambda$ -- see \eqref{eq:T-L-Q}. Analogously to \eqref{eq:T-Tell}-\eqref{eq:Tlambda-rad-ang},
\begin{equation}\label{eq:Q-Qell}
Q_\lambda\;=\;\bigoplus_{\ell=0}^\infty \,Q_\lambda^{(\ell)}\,,\qquad Q_\lambda^{(\ell)}\;=\;\mathcal{Q}_{\lambda}^{(\ell)}\otimes \mathbbm{1}\quad\textrm{ on }\quad L^2_\ell(\mathbb{R}^3)\,,
\end{equation}
where $\mathcal{Q}_{\lambda}^{(\ell)}$ acts symmetrically on $L^2(\mathbb{R}^+,r^2\,\ud r)$. It is standard to derive from \eqref{eq:T-L-Q} and \eqref{eq:Q-Qell} that 
\begin{equation}\label{eq:kernel qQ_l}
(\mathcal{Q}_{\lambda}^{(\ell)}f)(r)\;=\;2\pi\int_{-1}^{+1}\!\ud y P_\ell(y)\!\int_0^{+\infty}\!\!\frac{f(r')}{r^2+r'^2+\mu r r' y +\lambda}\,r'^2\ud r'\,,
\end{equation}
where
\begin{equation}\label{eq:P-Legendre}
P_\ell(y)\;=\;\frac{1}{2^\ell \ell!}\,\frac{\ud^\ell}{\ud y^\ell}(y^2-1)^\ell
\end{equation}
is the $\ell$-th Legendre polynomial. Thus, proving \eqref{eq:T-ell-3/2-1/2} is equivalent to proving
\begin{equation}
\|(1+r^2)^{\frac{1}{4}}(\mathcal{Q}_{\lambda}^{(\ell)}f)\|_{ L^2(\mathbb{R}^+,r^2\,\ud r)}\;\lesssim\;\|(1+r^2)^{\frac{3}{4}}f\|_{ L^2(\mathbb{R}^+,r^2\,\ud r)}\,,
\end{equation}
which is in turn equivalent to the boundedness in $L^2(\mathbb{R}^+,\ud r)$ of the integral operator $h\mapsto \widetilde{\mathcal{Q}}_{\lambda}^{(\ell)} h$ defined by
\begin{equation}\label{eq:kernel qQtilde_l}
(\widetilde{\mathcal{Q}}_{\lambda}^{(\ell)}h)(r)\;:=\;\int_{-1}^{+1}\!\ud y P_\ell(y)\!\int_0^{+\infty}\!\!\frac{rr'\,(1+r^2)^{\frac{1}{4}}\,h(r')}{(r^2+r'^2+\mu r r' y +\lambda)(1+r'^2)^{\frac{3}{4}}}\,\ud r'\,.
\end{equation}
Using \eqref{eq:P-Legendre} and integrating by parts  $\ell\geqslant 1$ times in $y$ yields
\[
\begin{split}
(\widetilde{\mathcal{Q}}_{\lambda}^{(\ell)}h)(r)\;=\;\frac{(-1)^\ell}{2^\ell\ell!}\int_0^{+\infty}\!\!\!\ud r'\,\frac{rr'\,(1+r^2)^{\frac{1}{4}}h(r')}{(1+r'^2)^{\frac{3}{4}}}\int_{-1}^{+1}\!\ud y\,\frac{(y^2-1)^\ell(\mu r r')^\ell}{(r^2+r'^2+\mu r r' y +\lambda)^{\ell+1}}\,.
\end{split}
\]
Since $|y|\leqslant 1$, analogously to \eqref{eq:lambda-equiv-1}
\begin{equation}\label{eq:lambda-y-equiv-1}
(r^2+r'^2+\mu r r' y +\lambda)\;\sim\;(r_1^2+r_2^2+1)\;\geqslant 0\,.
\end{equation}
Then
\[
\begin{split}
|(\widetilde{\mathcal{Q}}_{\lambda}^{(\ell)}h)(r)|\;&\lesssim\;\int_0^{+\infty}\!\!\!\ud r'\,\frac{rr'\,(1+r^2)^{\frac{1}{4}}\,|h(r')|}{(1+r'^2)^{\frac{3}{4}}}\int_{-1}^{+1}\!\ud y\,\frac{(\mu r r')^\ell}{(r^2+r'^2+\mu r r' y +\lambda)^{\ell+1}} \\
&=\;\int_0^{+\infty}\!\!\!\ud r'\,\frac{rr'\,(1+r^2)^{\frac{1}{4}}\,|h(r')|}{\ell\,(1+r'^2)^{\frac{3}{4}}}\,(\mu r r')^{\ell-1}\,\times \\
&\qquad\qquad\qquad\times\Big(\frac{1}{(r^2+r'^2-\mu r r' y +\lambda)^{\ell}}-\frac{1}{(r^2+r'^2+\mu r r' y +\lambda)^{\ell}}\Big) \\
&\lesssim\;\int_0^{+\infty}\!\!\!\ud r'\,\frac{rr'\,(1+r^2)^{\frac{1}{4}}\,|h(r')|}{(1+r'^2)^{\frac{3}{4}}}\,(\mu r r')^{\ell}\,\times \\
&\qquad\qquad\qquad\times\,\frac{(r^2+r'^2+1)^{\ell-1}}{(r^2+r'^2-\mu r r' y +\lambda)^{\ell}(r^2+r'^2+\mu r r' y +\lambda)^{\ell}} \\
&\lesssim\;\int_0^{+\infty}\!\!\!\ud r'\,\frac{(rr')^{\ell+1}(1+r^2)^{\frac{1}{4}}}{(1+r'^2)^{\frac{3}{4}}(r^2+r'^2+1)^{\ell+1}}\,|h(r')|\;\equiv\;\int_0^{+\infty}\!\!\mathcal{K}_\lambda^{(\ell)}(r,r')\,|h(r')|\,\ud r'\,,
\end{split}
\]
where we used \eqref{eq:lambda-y-equiv-1} in the first and last step, and the formula $(a^\ell-b^\ell)=(a-b)\sum_{j=0}^{n-1}a^{n-j-1}b^{\,j}$ ($a,b\geqslant 0$) in the third step.
From
\[
\int_0^{+\infty}\!\!\frac{r^{\ell+1}(1+r^2)^{\frac{1}{4}}}{(r^2+r'^2+1)^{\ell+1}}\,\ud r\;\lesssim\;(1+r'^2)^{-\frac{2\ell-1}{4}}
\]
we deduce
\[\tag{*}
\sup_{r'>0}\int_0^{+\infty}\!\!\mathcal{K}_\lambda^{(\ell)}(r,r')\,\ud r\;=\;\sup_{r'>0}\frac{r'^{\ell+1}}{(1+r'^2)^{\frac{3}{4}}}\int_0^{+\infty}\!\!\frac{r^{\ell+1}(1+r^2)^{\frac{1}{4}}}{(r^2+r'^2+1)^{\ell+1}}\,\ud r\;\lesssim\;1\,,
\]
and from
\[
\int_0^{+\infty}\!\!\frac{r'^{\ell+1}}{(1+r'^2)^{\frac{3}{4}}(r^2+r'^2+1)^{\ell+1}}\,\ud r'\;\lesssim\;(1+r^2)^{-\frac{2\ell+3}{4}}
\]
we deduce
\[\tag{**}
\begin{split}
\sup_{r>0}\int_0^{+\infty}\!\!&\mathcal{K}_\lambda^{(\ell)}(r,r')\,\ud r'\;= \\
&=\;\sup_{r>0}\:r^{\ell+1}(1+r^2)^{\frac{1}{4}}\!\int_0^{+\infty}\!\!\frac{r'^{\ell+1}}{(1+r'^2)^{\frac{3}{4}}(r^2+r'^2+1)^{\ell+1}}\,\ud r'\;\lesssim\;1\,.
\end{split}
\]
A standard Schur test based on (*) and (**) implies $\|\widetilde{\mathcal{Q}}_{\lambda}^{(\ell)}h\|_2\lesssim\|h\|_2$, thus concluding the proof.
\end{proof}

An immediate consequence of Propositions \ref{prop:T-W}(i) and \ref{prop:T-ell} is the following.

\begin{theorem}{Corollary}
Let $T_\lambda^+$ be the operator acting as  $T_\lambda$ on the Hilbert space $L^2_{+}(\mathbb{R}^3):=\bigoplus_{\ell=1}^\infty L^2_{\ell}(\mathbb{R}^3)$ with domain $\mathcal{D}(T_\lambda^+):=H^{3/2}(\mathbb{R}^3)\cap L^2_{+}(\mathbb{R}^3)$. Then $T_\lambda^+$ is densely defined and symmetric, and it maps continuously $H^{3/2}(\mathbb{R}^3)\cap L^2_{+}(\mathbb{R}^3)$ (with the $H^{\frac{3}{2}}$-norm) into $H^{1/2}(\mathbb{R}^3)\cap L^2_{+}(\mathbb{R}^3)$ (with the $H^{\frac{1}{2}}$-norm).
\end{theorem}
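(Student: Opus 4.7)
The plan is to exhibit $T_\lambda^+$ as the orthogonal direct sum $\bigoplus_{\ell\geq 1}T_\lambda^{(\ell)}$ with respect to the angular decomposition \eqref{eq:L2_ell_decomposition}--\eqref{eq:T-Tell}, and then to deduce the three assertions (dense definition, symmetry, continuous mapping $H^{3/2}\to H^{1/2}$) by combining Propositions \ref{prop:T-W}(i) and \ref{prop:T-ell}.

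First I would verify that $\mathcal{D}(T_\lambda^+)=H^{3/2}(\mathbb{R}^3)\cap L^2_+(\mathbb{R}^3)$ is dense in $L^2_+(\mathbb{R}^3)$: it suffices to note that for each $\ell\geq 1$ the angular sector $C^\infty_0(\mathbb{R}^3\!\setminus\!\{0\})\cap L^2_\ell(\mathbb{R}^3)$ is dense in $L^2_\ell(\mathbb{R}^3)$ and sits inside $\mathcal{D}(T_\lambda^+)$, so that finite truncations of the angular expansion of any element of $L^2_+(\mathbb{R}^3)$ approximate it in norm. The symmetry of $T_\lambda^+$ is then immediate from the symmetry of $T_\lambda$ on $L^2(\mathbb{R}^3)$ established in Proposition \ref{prop:T-W}(i), combined with the $T_\lambda$-invariance of $L^2_+(\mathbb{R}^3)$ read off from \eqref{eq:T-Tell}.

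Next I would treat the continuity. For $\xi=\sum_{\ell\geq 1}\xi_\ell\in H^{3/2}(\mathbb{R}^3)\cap L^2_+(\mathbb{R}^3)$ with $\xi_\ell\in L^2_\ell(\mathbb{R}^3)$, one exploits the fact that the Fourier multiplier $(1+|p|^2)^{s/2}$ is radial, whence the $H^s$-norm is diagonal with respect to the decomposition \eqref{eq:L2_ell_decomposition}:
\[
\|\xi\|_{H^{3/2}}^2\;=\;\sum_{\ell\geq 1}\|\xi_\ell\|_{H^{3/2}}^2\,,\qquad \|T_\lambda^+\xi\|_{H^{1/2}}^2\;=\;\sum_{\ell\geq 1}\|T_\lambda^{(\ell)}\xi_\ell\|_{H^{1/2}}^2\,.
\]
Proposition \ref{prop:T-ell} then gives $\|T_\lambda^{(\ell)}\xi_\ell\|_{H^{1/2}}\lesssim\|\xi_\ell\|_{H^{3/2}}$ for each $\ell\geq 1$, and summing on $\ell$ yields the desired bound $\|T_\lambda^+\xi\|_{H^{1/2}}\lesssim\|\xi\|_{H^{3/2}}$.

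The hard part, really the only delicate point, is to ensure that the implicit constant in the $\ell$-block estimate of Proposition \ref{prop:T-ell} can be chosen uniformly in $\ell$; without that uniformity one cannot sum the component-wise bounds. A careful tracking of the Schur bounds there should confirm that those constants are $O(1)$ with no $\ell$-dependence -- the $\ell$ introduced by the algebraic identity $a^\ell-b^\ell=(a-b)\sum_{j}a^{\ell-j-1}b^{\,j}$ being essentially compensated by the $\frac{1}{2^\ell\ell!}$ arising from the $\ell$-fold integration by parts against the Legendre polynomial. Once this uniformity is in hand, the corollary follows directly.
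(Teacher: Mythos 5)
Your overall route coincides with the paper's: there the corollary is presented as an immediate consequence of Propositions \ref{prop:T-W}(i) and \ref{prop:T-ell}, i.e.\ precisely the blockwise argument you set up, and your handling of density, symmetry (invariance of $L^2_+(\mathbb{R}^3)$ under $T_\lambda$ plus the symmetry from Proposition \ref{prop:T-W}(i)) and of the orthogonal splitting $\|\xi\|_{H^s}^2=\sum_{\ell}\|\xi_\ell\|_{H^s}^2$ (the multiplier $(1+|p|^2)^{s/2}$ being radial) is correct. You are also right to isolate the uniformity in $\ell$ of the constant in \eqref{eq:T-ell-3/2-1/2} as the crux: since the Sobolev norms are diagonal in the angular decomposition, boundedness of $T_\lambda^+$ from $H^{3/2}(\mathbb{R}^3)\cap L^2_+(\mathbb{R}^3)$ to $H^{1/2}(\mathbb{R}^3)$ is in fact \emph{equivalent} to such uniformity, so it cannot be bypassed.

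That said, this is exactly where your argument has a gap which you acknowledge but do not close, and the compensation you invoke is not the right bookkeeping. In the proof of Proposition \ref{prop:T-ell} the factor $1/(2^\ell\ell!)$ from the Legendre normalisation is already spent: the $\ell$-fold $y$-derivative of $(r^2+r'^2+\mu rr'y+\lambda)^{-1}$ produces an $\ell!$, and the $\ell$ terms of the identity $a^\ell-b^\ell=(a-b)\sum_{j}a^{\ell-1-j}b^{\,j}$ are offset by the explicit $1/\ell$ coming from the $y$-integration — those cancellations are visible in the paper's computation and are not the issue. What actually carries $\ell$-dependence are the exponential factors: $(\mu rr')^\ell$, the residual $2^{-\ell}$ from the Legendre prefactor, the constants of the equivalence \eqref{eq:lambda-y-equiv-1} raised to powers of order $\ell$ (they depend on $\mu$ and $\lambda$, hence on the mass, and are in general larger than $1$), and the $\ell$-dependent decay of the Schur integrals such as $\int_0^{+\infty} r^{\ell+1}(1+r^2)^{1/4}(r^2+r'^2+1)^{-(\ell+1)}\,\ud r$. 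Whether the product of all these stays bounded in $\ell$ is a genuine quantitative check — one, to be fair, that the paper does not display either, since Proposition \ref{prop:T-ell} is stated ``for each $\ell$'' — so writing that a careful tracking ``should confirm'' it leaves unproven precisely the estimate on which the corollary rests. To complete your proof you would have to rerun the Schur test keeping every $\ell$-dependent factor explicit (or perform a single Schur test for the kernel of $Q_\lambda$ acting on the whole of $L^2_+(\mathbb{R}^3)$), and verify that the resulting constant is independent of $\ell$.
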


In turn, the Corollary above, together with Proposition, show that although in general
\begin{equation}
T_\lambda H^{3/2}(\mathbb{R}^3)\;\varsupsetneq\;H^{1/2}(\mathbb{R}^3)\,,
\end{equation}
nevertheless the map $W_\lambda^{-1}T_\lambda$ can be defined on parts of $H^{-1/2}(\mathbb{R}^3)$ with values in itself. Elaborating further, we obtain the following result.

\begin{theorem}{Proposition}\label{lem:selfadj_hierarchy}
The following be given: 
\begin{itemize}
 \item two constants $\lambda>0$ and $\alpha\in\mathbb{R}$,
 \item the densely defined and symmetric operator $T_\lambda^+:=\bigoplus_{\ell=1}^\infty \,T_\lambda^{(\ell)}$  on the Hilbert space $L^2_{+}(\mathbb{R}^3):=\bigoplus_{\ell=1}^\infty L^2_{\ell}(\mathbb{R}^3)$ which acts as  $T_\lambda$  with domain $\mathcal{D}(T_\lambda^+):=H^{3/2}(\mathbb{R}^3)\cap L^2_{+}(\mathbb{R}^3)$,
 \item and a densely defined and symmetric operator  $S_0$ on the Hilbert space $L^2_{\ell=0}(\mathbb{R}^3)$ with $\ran S_0\subset H^{1/2}(\mathbb{R}^3)$.
\end{itemize}
   With respect to the decomposition $L^2(\mathbb{R}^3)\cong L^2_{\ell=0}(\mathbb{R}^3)\oplus L^2_+(\mathbb{R}^3)$, let
\begin{equation}\label{eq:defTtildeLambda}
\mathcal{T}_\lambda\;:=\;S_0\oplus T^+_\lambda\,.
\end{equation}
Then $\mathcal{T}_\lambda$ is a densely defined and symmetric operator on $L^2(\mathbb{R}^3)$ and
\begin{equation}\label{eq:Atilde_l_a}
\mathcal{A}_{\lambda,\alpha}\;:=\;2\,W_\lambda^{-1}(\mathcal{T}_\lambda+\alpha\mathbbm{1})\,,\qquad\mathcal{D}(\mathcal{A}_{\lambda,\alpha})\;:=\;\mathcal{D}(\mathcal{T}_\lambda)
\end{equation} 
is a densely defined and symmetric operator on  $H^{-1/2}_{W_\lambda}(\mathbb{R}^3)$. Moreover, if $\widetilde{\mathcal{A}}_{\lambda,\alpha}$ is a self-adjoint extension of $\mathcal{A}_{\lambda,\alpha}$ on $H^{-1/2}_{W_\lambda}(\mathbb{R}^3)$, then 
\begin{equation}\label{eq:unitary_equiv_A_Atilde_2+1}
A_{\lambda,\alpha}\;:=\;U_\lambda^{-1}\widetilde{\mathcal{A}}_{\lambda,\alpha} U_\lambda
\end{equation}
(where $U_\lambda$ is the isomorphism \eqref{eq:isomorphism_Ulambda}) is a self-adjoint operator on $\ker (\mathring{H}^*+\lambda\mathbbm{1})$. 
\end{theorem}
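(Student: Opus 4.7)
The plan is to check in sequence that (1) $\mathcal{T}_\lambda$ is densely defined and symmetric on $L^2(\mathbb{R}^3)$, (2) $\mathcal{A}_{\lambda,\alpha}$ is a well-defined, densely defined, symmetric operator on $H^{-1/2}_{W_\lambda}(\mathbb{R}^3)$, and (3) any self-adjoint extension of $\mathcal{A}_{\lambda,\alpha}$ pulls back through $U_\lambda$ to a self-adjoint operator on $\ker(\mathring{H}^*+\lambda\mathbbm{1})$.

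Step (1) is a direct-sum argument: with respect to the orthogonal decomposition $L^2(\mathbb{R}^3)\cong L^2_{\ell=0}(\mathbb{R}^3)\oplus L^2_+(\mathbb{R}^3)$, the densely-defined symmetric summands $S_0$ and $T_\lambda^+$ assemble into a densely defined symmetric $\mathcal{T}_\lambda$ on $L^2(\mathbb{R}^3)$. A crucial byproduct, needed in step (2), is the range inclusion $\ran\,\mathcal{T}_\lambda\subset H^{1/2}(\mathbb{R}^3)$: for the $S_0$-component by hypothesis, for the $T_\lambda^+$-component by Proposition \ref{prop:T-ell} applied to each sector $\ell\geqslant 1$.

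For step (2), first verify well-definedness: for $\xi\in\mathcal{D}(\mathcal{T}_\lambda)$ the vector $(\mathcal{T}_\lambda+\alpha\mathbbm{1})\xi$ lies in $H^{1/2}(\mathbb{R}^3)$ (the range inclusion above, combined with $\mathcal{D}(T_\lambda^+)=H^{3/2}\cap L^2_+\subset H^{1/2}$ for the $\alpha$-shift on the $L^2_+$-summand and the implicit assumption $\mathcal{D}(S_0)\subset H^{1/2}$ on the $L^2_{\ell=0}$-summand), and hence $W_\lambda^{-1}$ maps it to $H^{-1/2}(\mathbb{R}^3)$ by Proposition \ref{prop:T-W}(ii). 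The equivalence of $\|\cdot\|_{W_\lambda}$ with the standard $H^{-1/2}$-norm, which follows from \eqref{eq:uxi-equivalent-norms} and \eqref{eq:W-scalar-product}, reduces the density of $\mathcal{D}(\mathcal{A}_{\lambda,\alpha})=\mathcal{D}(\mathcal{T}_\lambda)$ in $H^{-1/2}_{W_\lambda}(\mathbb{R}^3)$ to the density of $\mathcal{D}(\mathcal{T}_\lambda)$ in $L^2(\mathbb{R}^3)$, known from step (1), together with the dense continuous embedding $L^2\hookrightarrow H^{-1/2}$. For symmetry, the key identity is
\[
\langle \mathcal{A}_{\lambda,\alpha}\xi,\eta\rangle_{W_\lambda}\;=\;\langle 2W_\lambda^{-1}(\mathcal{T}_\lambda+\alpha\mathbbm{1})\xi,W_\lambda\eta\rangle_{H^{-1/2},H^{1/2}}\;=\;2\,\langle(\mathcal{T}_\lambda+\alpha\mathbbm{1})\xi,\eta\rangle_{L^2},
\]
where the second equality exploits the symmetry of $W_\lambda$ from Proposition \ref{prop:T-W}(ii) together with the fact that, for $(\mathcal{T}_\lambda+\alpha\mathbbm{1})\xi\in H^{1/2}\subset L^2$ and $\eta\in L^2$, the $H^{-1/2}$-$H^{1/2}$ duality collapses to the ordinary $L^2$ inner product. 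Swapping $\xi\leftrightarrow\eta$, the $L^2$-symmetry of $\mathcal{T}_\lambda$ from step (1) and the reality of $\alpha$ then give $\langle\mathcal{A}_{\lambda,\alpha}\xi,\eta\rangle_{W_\lambda}=\langle\xi,\mathcal{A}_{\lambda,\alpha}\eta\rangle_{W_\lambda}$.

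Step (3) is automatic: since $U_\lambda$ from \eqref{eq:isomorphism_Ulambda} is a Hilbert-space isomorphism between $\ker(\mathring{H}^*+\lambda\mathbbm{1})$, equipped with the $\mathcal{H}$-inner product, and $H^{-1/2}_{W_\lambda}(\mathbb{R}^3)$, unitary conjugation preserves self-adjointness, so $A_{\lambda,\alpha}=U_\lambda^{-1}\widetilde{\mathcal{A}}_{\lambda,\alpha}U_\lambda$ is self-adjoint on $\ker(\mathring{H}^*+\lambda\mathbbm{1})$ whenever $\widetilde{\mathcal{A}}_{\lambda,\alpha}$ is on $H^{-1/2}_{W_\lambda}(\mathbb{R}^3)$. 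The only genuinely delicate point in the entire argument is the bookkeeping of inner products and duality pairings in the symmetry computation of step (2); everything else is a direct assembly of pieces already established in Propositions \ref{prop:T-W} and \ref{prop:T-ell}.
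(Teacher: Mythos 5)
Your proof is correct and follows essentially the same route as the paper's: the paper declares the direct-sum, well-definedness and density statements obvious and establishes symmetry by precisely your chain of identities, namely $\langle\eta,\mathcal{A}_{\lambda,\alpha}\xi\rangle_{W_\lambda}=2\langle\eta,(\mathcal{T}_\lambda+\alpha\mathbbm{1})\xi\rangle_{L^2}$ together with the $L^2$-symmetry of $\mathcal{T}_\lambda$, the reality of $\alpha$, and the duality properties of $W_\lambda$ from Proposition \ref{prop:T-W}(ii), while the final claim is the same unitary-conjugation observation via $U_\lambda$. Your aside that the $\alpha$-shift on the $\ell=0$ sector tacitly requires $\mathcal{D}(S_0)\subset H^{1/2}(\mathbb{R}^3)$, so that $(\mathcal{T}_\lambda+\alpha\mathbbm{1})\xi$ lies in $\ran W_\lambda=H^{1/2}(\mathbb{R}^3)$, is a fair observation: the paper leaves this implicit, and its own symmetry computation uses the same requirement without comment.
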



\begin{proof}
The statements for $\mathcal{T}_\lambda$ and $A_{\lambda,\alpha}$ are obvious, and so too is the density of $\mathcal{D}(\mathcal{A}_{\lambda,\alpha})$ in $H^{-1/2}_{W_\lambda}(\mathbb{R}^3)$.
The symmetry of $\mathcal{A}_{\lambda,\alpha}$ follows from the identity, valid for $\eta,\xi\in\mathcal{D}(\mathcal{A}_{\lambda,\alpha})=\mathcal{D}(\mathcal{T}_\lambda)$,
\[
\begin{split}
\!\!\begin{array}{c}\frac{1}{2}\end{array}\!\!\!\langle\eta,\mathcal{A}_{\lambda,\alpha}\xi\rangle_{W_\lambda}\;&=\;\langle\eta,W_\lambda^{-1}(\mathcal{T}_\lambda\xi+\alpha\xi)\rangle_{W_\lambda} \;=\;\langle\eta,(\mathcal{T}_\lambda\xi+\alpha\xi)\rangle_{L^2} \\
&=\;\langle(\mathcal{T}_\lambda\eta+\alpha\eta),\xi\rangle_{L^2}\;=\;\langle W_\lambda W_\lambda^{-1}(\mathcal{T}_\lambda\eta+\alpha\eta),\xi\rangle_{H^{\frac{1}{2}},H^{-\frac{1}{2}}} \\
&=\;\langle W_\lambda^{-1}(\mathcal{T}_\lambda\eta+\alpha\eta),W_\lambda \xi\rangle_{H^{-\frac{1}{2}},H^{\frac{1}{2}}}\;=\;\!\!\begin{array}{c}\frac{1}{2}\end{array}\!\!\!\langle\mathcal{A}_{\lambda,\alpha}\eta,\xi\rangle_{W_\lambda}\,,
\end{split}
\]
where we used the symmetry of $\mathcal{T}_\lambda$ in $L^2(\mathbb{R}^3)$, the fact that $\alpha$ is real, and the properties of $W_\lambda$ discussed in Proposition \ref{prop:T-W}(ii).
\end{proof}

\begin{remark}{Remark}\label{rem:selfadj_of_T_not_enough}
If, in addition to the assumptions of Proposition \ref{lem:selfadj_hierarchy} above, one assumes also that $T_\lambda^+$ and $S_0$ are self-adjoint on their respective Hilbert spaces and hence $\mathcal{T}_\lambda$ is \emph{self-adjoint} on $L^2(\mathbb{R}^3)$, then
\begin{equation}\label{eq:adjoint-not-selfadj}
\mathcal{D}(\mathcal{A}_{\lambda,\alpha}^*)\cap L^2(\mathbb{R}^3)\;=\;\mathcal{D}(\mathcal{A}_{\lambda,\alpha})\qquad\qquad(\mathcal{T}_\lambda=\mathcal{T}_\lambda^*)\,.
\end{equation}
To see this, we use the fact that for generic $\chi\in\mathcal{D}(\mathcal{A}_{\lambda,\alpha}^*)$ there exists  $c_\chi>0$ such that
\[\tag{*}
\big|\langle\chi,\mathcal{A}_{\lambda,\alpha}\xi\rangle_{W_\lambda}\big|\;\leqslant\;c_\chi\,\|\xi\|_{W_\lambda}\qquad\forall\xi\in\mathcal{D}(\mathcal{A}_{\lambda,\alpha})=\mathcal{D}(\mathcal{T}_\lambda)\,.
\]
Owing to \eqref{eq:W-scalar-product}, \eqref{eq:scalar_products}, and \eqref{eq:uxi-equivalent-norms},  
\[
\|\xi\|_{W_\lambda}\;=\;\|u_\xi\|_\cH\;\sim\;\|\xi\|_{H^{-1/2}}\;\leqslant\;\|\xi\|_{L^2}\,,
\]
whereas, for $\chi\in \mathcal{D}(\mathcal{A}_{\lambda,\alpha}^*)\cap L^2(\mathbb{R}^3)$,
\[
\langle\chi,\mathcal{A}_{\lambda,\alpha}\xi\rangle_{W_\lambda}\;=\;2\,\langle\chi,(\mathcal{T}_\lambda\xi+\alpha\xi)\rangle_{L^2}\,.
\]
Therefore, (*) reads
\[
\big|\langle\chi,(\mathcal{T}_\lambda\xi+\alpha\xi)\rangle_{L^2}\big|\;\lesssim\;\|\xi\|_{L^2}\qquad\forall\xi\in\mathcal{D}(\widetilde{\mathcal{T}}_\lambda)
\]
or also (using $|\langle\chi,\xi\rangle_{L^2}|\leqslant\|\chi\|_{L^2}\|\xi\|_{L^2}$)
\[
\big|\langle\chi,\mathcal{T}_\lambda\xi\rangle_{L^2}\big|\;\lesssim\;\|\xi\|_{L^2}\qquad\forall\xi\in\mathcal{D}(\widetilde{\mathcal{T}}_\lambda)\,.
\]
Since $\mathcal{T}_\lambda$ is self-adjoint on $L^2(\mathbb{R}^3)$, the last bound implies $\chi\in\mathcal{D}(\mathcal{T}^*_\lambda)=\mathcal{D}(\mathcal{T}_\lambda)=\mathcal{D}(\mathcal{A}_{\lambda,\alpha})$, whence the conclusion. We observe, however, that this argument and the conclusion \eqref{eq:adjoint-not-selfadj} are not enough to claim that the self-adjointness of $\mathcal{T}_\lambda$ implies the self-adjointness of $\mathcal{A}_{\lambda,\alpha}$: the latter could still have a larger adjoint and admit self-adjoint extensions.
\end{remark}

We thus see that each self-adjoint extension
$\widetilde{\mathcal{A}}_{\lambda,\alpha}$  of $\mathcal{A}_{\lambda,\alpha}$ on $H^{-1/2}_{W_\lambda}(\mathbb{R}^3)$ identifies one self-adjoint extension of $\mathring{H}$ by means of its unitarily equivalent version  $A_{\lambda,\alpha}=U_\lambda^{-1}\widetilde{\mathcal{A}}_{\lambda,\alpha} U_\lambda$.
This extension, call it $\mathring{H}_{\langle\alpha\rangle}$, as prescribed by Theorem \ref{thm:VB-representaton-theorem_Tversion} (namely the Kre{\u\i}n-Vi\v{s}ik-Birman extension theory) is the restriction of $\mathring{H}^*$ to the domain
\begin{equation}\label{eq:D_Halpha_2+1}
\mathcal{D}(\mathring{H}_{\langle\alpha\rangle})\;:=\;\left\{g=f+(\mathring{H}_F+\lambda\mathbbm{1})^{-1}(A_{\lambda,\alpha}u_\xi)+u_\xi\left|\!
\begin{array}{c}
f\in\mathcal{D}(\mathring{H}) \\
u_\xi\in\mathcal{D}(A_{\lambda,\alpha})
\end{array}\!\!\!\right.\right\}\,.
\end{equation}
Indeed, in comparison with the general formula \eqref{eq:ST}, $\mathcal{D}(A_{\lambda,\alpha})$ has a trivial orthogonal complement in $\ker (\mathring{H}^*+\lambda\mathbbm{1})$. 
For each $g\in\mathcal{D}(\mathring{H}_{\langle\alpha\rangle})$ one deduces from \eqref{eq:actionHdotstar_to_g-f_2+1} that
\begin{equation}\label{eq:action_Halpha_2+1}
((\mathring{H}_{\langle\alpha\rangle}+\lambda\mathbbm{1})g)^{\widehat{\;}}(p_1,p_2)\;=\;(p_1^2+p_2^2+\mu p_1\cdot p_2+\lambda)\widehat{f}(p_1,p_2)+\widehat{(A_{\lambda,\alpha}u_\xi)}(p_1,p_2)\,.
\end{equation}

A comparison between \eqref{eq:D_Halpha_2+1} and \eqref{eq:decompositionD_Hdostar_2+1} shows that $\mathcal{D}(\mathring{H}_{\langle\alpha\rangle})$ is obtained as a restriction $\mathcal{D}(\mathring{H}^*)$ by imposing the condition
\begin{equation}\label{eq:TMS_2+1_a}
u_\eta\;=\;A_{\lambda,\alpha} u_\xi
\end{equation}
as an identity in $\ker (\mathring{H}^*+\lambda\mathbbm{1})$ which, by the unitary equivalence \eqref{eq:unitary_equiv_A_Atilde_2+1}, is equivalent to
\begin{equation}\label{eq:TMS_2+1_b}
\eta\;=\;\widetilde{\mathcal{A}}_{\lambda,\alpha}\,\xi
\end{equation}
as an identity in $H^{-1/2}_{W_\lambda}(\mathbb{R}^3)$. The self-adjoint extension $\mathring{H}_{\langle\alpha\rangle}$ is given by the restriction of $\mathring{H}^*$ to those elements of $\mathcal{D}(\mathring{H}^*)$ whose charges, in terms of the decomposition \eqref{eq:decompositionD_Hdostar_2+1}, instead of belonging generically to $H^{-1/2}(\mathbb{R}^3)$ are such that $\xi$ belongs to the domain of $\widetilde{\mathcal{A}}_{\lambda,\alpha}$ and $\eta$ is of the form $\widetilde{\mathcal{A}}_{\lambda,\alpha}\xi$.

We now see that $\mathring{H}_{\langle\alpha\rangle}$  is an operator of TMS type. Indeed, owing to the definition \eqref{eq:Atilde_l_a}, condition \eqref{eq:TMS_2+1_b} implies
\begin{equation}\label{eq:TMS_2+1_c}
\alpha\,\xi\;=\;-\mathcal{T}_\lambda\xi +\!\!\!\begin{array}{c}\frac{1}{2}\end{array}\!\!\!W_\lambda\eta\qquad\qquad\forall\xi\in\mathcal{D}(\mathcal{T}_\lambda)
\end{equation}
as an identity in $L^2(\mathbb{R}^3)$. 
In turn, owing to the definition \eqref{eq:defTtildeLambda}, \eqref{eq:TMS_2+1_c} reads
\begin{eqnarray}
\alpha\,\xi\;=\;-T^+_\lambda\xi +\!\!\!\begin{array}{c}\frac{1}{2}\end{array}\!\!\!W_\lambda\eta\,,\qquad & &\xi\in\mathcal{D}(T^+_\lambda)\subset L^2_+(\mathbb{R}^3) \label{eq:TMS_2+1_c_1}\\
\alpha\,\xi\;=\;-S_0\xi +\!\!\!\begin{array}{c}\frac{1}{2}\end{array}\!\!\!W_\lambda\eta\,,\qquad & &\xi\in\mathcal{D}(S_0)\subset L^2_{\ell=0}(\mathbb{R}^3)\,. \label{eq:TMS_2+1_c_2}
\end{eqnarray}
Plugging \eqref{eq:TMS_2+1_c_1}-\eqref{eq:TMS_2+1_c_2} into \eqref{eq:g_*_asymptotics_2+1} yields the following asymptotics for elements in $\mathcal{D}(\mathring{H}_{\langle\alpha\rangle})$ as $R\to\infty$:
\begin{equation}\label{eq:TMS_cond_asymptotics_2+1}
\begin{split}
\int_{\substack{ \\ \,p_2\in\mathbb{R}^3 \\ \! |p_2|<R}}{\:\widehat{g}(p_1,p_2) \,\ud p_2}\;&=\;\widehat{\xi}(p_1)(4\pi R+\alpha)+o(1) \qquad\qquad (\,\xi\in\mathcal{D}(T^+_\lambda)\,)
\end{split}
\end{equation}
\begin{equation}\label{eq:TMS_cond_asymptotics_2+1_l0}
\begin{split}
\int_{\substack{ \\ \,p_2\in\mathbb{R}^3 \\ \! |p_2|<R}}{\:\widehat{g}(p_1,p_2) \,\ud p_2}\;&=\;\widehat{\xi}(p_1)(4\pi R+\alpha)+(\widehat{S_0\xi})(p_1)-(\widehat{T_\lambda\xi})(p_1)+o(1) \\
&\qquad\qquad\qquad\qquad\qquad\qquad\qquad\qquad\! (\,\xi\in \mathcal{D}(S_0)\,)\,.
\end{split}
\end{equation}
The asymptotics \eqref{eq:TMS_cond_asymptotics_2+1} is completely analogous to \eqref{eq:TMS_cond_asymptotics_1} for the two-body system, except for the fact that it only holds for non-spherically symmetric charges. If the charge $\xi$ is spherically symmetric, then \eqref{eq:TMS_cond_asymptotics_2+1} is modified with an additional $O(1)$-term in $R$ as done in \eqref{eq:TMS_cond_asymptotics_2+1_l0}. In fact, both \eqref{eq:TMS_cond_asymptotics_2+1} and \eqref{eq:TMS_cond_asymptotics_2+1_l0} express the same kind of ultra-violet asymptotics at the coincidence hyperplanes, for their formal inverse Fourier transformed version reads
\[
g(x_1,x_2)\;\sim\;\xi(x_1)\Big(\frac{1}{|x_2|}+\alpha\Big)+\chi(x_1)\qquad\textrm{ as }x_2\to 0
\]
where $\chi$ is identically zero if the charge $\xi$ associated to $g\in\mathcal{D}(\mathring{H}_{\langle\alpha\rangle})$ belongs to $H^{3/2}(\mathbb{R}^3)\cap L^2_{+}(\mathbb{R}^3)$, whereas $\widehat{\chi}=\widehat{S_0\xi}-\widehat{T_\lambda\xi}$ if $\xi\in\mathcal{D}(S_0)\subset L^2_{\ell=0}(\mathbb{R}^3)$. In either case the leading singularity in $g(x_1,x_2)$ as $x_2\to 0$ is precisely of the form $|x_2|^{-1}$.

We summarise the above analysis by saying that
\begin{itemize}
 \item the  operator $\mathring{H}_{\langle\alpha\rangle}$, defined by \eqref{eq:D_Halpha_2+1}-\eqref{eq:action_Halpha_2+1} in terms of a self-adjoint extension $\widetilde{\mathcal{A}}_{\lambda,\alpha}$ of $\mathcal{A}_{\lambda,\alpha}=2\,W_\lambda^{-1}(\mathcal{T}_\lambda+\alpha\mathbbm{1})$ on  $H^{-1/2}_{W_\lambda}(\mathbb{R}^3)$, where $\mathcal{T}_\lambda=S_0\oplus T^+_\lambda$, $S_0$ is a densely defined and symmetric operator $S_0$ on $L^2_{\ell=0}(\mathbb{R}^3)$  with $\ran S_0\subset H^{1/2}(\mathbb{R}^3)$, and $T^+_\lambda$ is the component of $T_\lambda$ on $L^2_+(\mathbb{R}^3)\bigoplus_{\ell=1}^\infty L^2_{\ell}(\mathbb{R}^3)$ with domain $H^{3/2}(\mathbb{R}^3)\cap L^2_+(\mathbb{R}^3)$, is a self-adjoint extension  of $\mathring{H}$ because it satisfies the condition \eqref{eq:TMS_2+1_a}, which is a special case of the conditions of self-adjoint extension of the  Kre{\u\i}n-Vi\v{s}ik-Birman theory;
 \item and moreover $\mathring{H}_{\langle\alpha\rangle}$ is a Hamiltonian of Ter-Martirosyan--Skornyakov type because \eqref{eq:TMS_2+1_a}, through \eqref{eq:TMS_2+1_c_1}-\eqref{eq:TMS_2+1_c_2}, \emph{implies} the TMS asymptotics \eqref{eq:TMS_cond_asymptotics_2+1}-\eqref{eq:TMS_cond_asymptotics_2+1_l0}.
\end{itemize}

We observe, however, two fundamental differences with respect to the construction of the point interaction Hamiltonian for the two-body system. 
In the two-body case,  the TMS condition \eqref{eq:TMS_cond_asymptotics_1}, imposed in the asymptotics of the elements of $\mathcal{D}(\mathring{H}^*)$, turns out to be a \emph{condition of self-adjoint extension}. In the 2+1 fermionic system, instead, 
\begin{itemize}
 \item[1.] the TMS condition \eqref{eq:TMS_cond_asymptotics_2+1} is found to hold only for a class of self-adjoint extensions of $\mathring{H}$, those identified a la Kre{\u\i}n-Vi\v{s}ik-Birman by the \emph{functional} constraint \eqref{eq:TMS_2+1_a},
 \item[2.] and if one imposes \eqref{eq:TMS_cond_asymptotics_2+1} as a point-wise identity valid for a generic class of $\xi$'s, one is \emph{not} guaranteed yet  to have identified a domain of self-adjointness for an extension of $\mathring{H}$, because in general such an Ansatz is not implemented by a self-adjoint operator on $\ker(\mathring{H}^*+\lambda\mathbbm{1})$.
\end{itemize}

\begin{remark}{Remark}
A clarification in retrospective on the emergence and the meaning of the operators $S_0$ and $\mathcal{T}_\lambda$ introduced in Proposition \ref{lem:selfadj_hierarchy} above is surely beneficial at this point. The key issue that arises evidently in our discussion is that \eqref{eq:Tlambda} defines a densely defined, symmetric operator $T_\lambda$ on $L^2(\mathbb{R}^3)$ which fails to map a linear space of certain spherically symmetric $H^{\frac{3}{2}}$-functions into $H^{\frac{1}{2}}(\mathbb{R}^3)$, whereas it does map into the latter space all the $H^{\frac{3}{2}}$-functions that are $L^2$-orthogonal to the spherically symmetric ones. As a consequence, unlike what is customarily given for granted at this point in the literature, there arises the issue to make the symmetric operator $W_\lambda^{-1}T_\lambda$ \emph{densely defined} on the space of charges $H^{-1/2}_{W_\lambda}(\mathbb{R}^3)$ and to realise it self-adjointly. However, one would like to have a self-adjoint operator on $H^{-1/2}_{W_\lambda}(\mathbb{R}^3)$ precisely of the form $W_\lambda^{-1}T_\lambda$: indeed on the one hand by general facts (Krein-Vishik-Birman) a self-adjoint operator on $H^{-1/2}_{W_\lambda}(\mathbb{R}^3)$ identifies a self-adjoint Hamiltonian of point interaction for the three-body system, and on the other hand the structure $W_\lambda^{-1}T_\lambda$ results in a Ter-Martirosyan--Skornyakov condition for the elements in the domain of such an Hamiltonian. The way to retrieve a TMS-like Hamiltonian was then to cure $W_\lambda^{-1}T_\lambda$ on the sector of spherical symmetry, by replacing $T_\lambda$ with a modified operator $\mathcal{T}_\lambda=S_0\oplus T_\lambda^+$, where $T_\lambda^+$ is the restriction of $T_\lambda$ to the subspace $\bigoplus_{\ell=1}^\infty L^2_{\ell}(\mathbb{R}^3)$ and $S_0$ is an arbitrary self-adjoint operator in $L^2_{\ell=0}(\mathbb{R}^3)$ \emph{with values in} $H^{\frac{1}{2}}(\mathbb{R}^3)$. The corresponding $W_\lambda^{-1}\mathcal{T}_\lambda$ is now symmetric and densely defined on $H^{-1/2}_{W_\lambda}(\mathbb{R}^3)$ and any its self-adjoint extension identifies a TMS Hamiltonian for the three-body systems. For such a Hamiltonian, the TMS condition \eqref{eq:TMS_2+1_c_1} emerges only when the charge $\xi$ has symmetry $\ell\geqslant 1$; on the sector $\ell=0$ what holds instead is the weaker condition \eqref{eq:TMS_2+1_c_2}. The latter still prescribes that the generic element $g(x_1,x_2)$ of the domain of the TMS Hamiltonian has a leading singularity $|x_j|^{-1}$ as $|x_j|\to 0$ when the charge $\xi$ associated to $g$ is spherically symmetric, however no $\alpha$-constraint is prescribed between singular and regular part of $g$. This larger freedom in the unconstrained regular part reflects precisely the arbitrariness of $S_0$.
\end{remark}

\section{Applications and concluding remarks}\label{sec:developments}

For the two-body system with point interaction, imposing the TMS asymptotics at scattering length $(-4\pi\alpha)^{-1}$ selects the whole one-parameter family of self-adjoint extensions of $\mathring{H}$, the formal free Hamiltonian defined away from the coincidence configurations. For larger systems, $\mathring{H}$ has infinite deficiency indices and the TMS asymptotics emerge for a proper subclass of extensions of $\mathring{H}$, provided that the charges are taken in a domain of suitable regularity and symmetry. Indeed, except for the two-body case, the TMS condition expresses \emph{point-wise} asymptotics, which per se is not enough to be a condition of self-adjointness: the latter has to be a suitable \emph{functional} condition, such as \eqref{eq:TMS_2+1_a} in the preceding discussion.

Recognising a TMS condition as a self-adjointness condition, by means of the general classification of self-adjoint extensions given by the Kre{\u\i}n-Vi\v{s}ik-Birman theory, is an idea that dates back to the original announcements \cite{Minlos-Faddeev-1961-1,Minlos-Faddeev-1961-2} by Minlos and Faddeev in 1961 and it has been exploited in a series of works by Minlos and collaborators \cite{Minlos-1987,Minlos-Shermatov-1989,Menlikov-Minlos-1991,Menlikov-Minlos-1991-bis,Minlos-TS-1994,Shermatov-2003,Minlos-2011-preprint_May_2010,Minlos-2010-bis,Minlos-2012-preprint_30sett2011,Minlos-2012-preprint_1nov2012,Minlos-RusMathSurv-2014}.

To our understanding, however, the issue of making the operator that in our notation reads $W_\lambda^{-1}T_\lambda$  a well-defined map on the space of charges $H^{-1/2}(\mathbb{R}^3)$, more precisely the issue on whether $\ran \,T_\lambda\subset H^{1/2}(\mathbb{R}^3)=\ran W_\lambda$, was never addressed, nor was it noted that $T_\lambda$ fails to map spherically symmetric functions of $H^{3/2}(\mathbb{R}^3)$ into $H^{1/2}(\mathbb{R}^3)$ (see Proposition \ref{prop:T-W}(i) and its proof). In fact, in all recent works \cite{Minlos-2011-preprint_May_2010,Minlos-2010-bis,Minlos-2012-preprint_30sett2011,Minlos-2012-preprint_1nov2012,Minlos-RusMathSurv-2014} the initial domain of $T_\lambda$ is taken to be $H^1(\mathbb{R}^3)$ and hence, owing to Proposition \ref{prop:T-W}(i), in general $\ran \,T_\lambda$ consists of $L^2$-functions that cannot be pulled back to $H^{-\frac{1}{2}}$-functions by the inverse of $W_\lambda$.

Also, the ubiquitous statement in the above-mentioned literature, according to which each self-adjoint realisation in $L^2(\mathbb{R}^3)$ of the operator $T_\lambda$ defined formally by \eqref{eq:Tlambda} identifies (by general facts of the Kre{\u\i}n-Vi\v{s}ik-Birman theory) a self-adjoint extension of $\mathring{H}$ that displays the TMS asymptotics for the functions of its domain, need be made more precise in two crucial respects. First, one has to factor out the part of $T_\lambda$ that acts on $L^2_{\ell=0}(\mathbb{R}^3)$, the spherically symmetric functions, as we argued in Proposition \ref{lem:selfadj_hierarchy}  and in the discussion that followed from it. Second, even when $T_\lambda$ is self-adjoint on the subspace of higher momentum charges, the corresponding densely defined and symmetric operator $\mathcal{A}_{\lambda,\alpha}=2\,W_\lambda^{-1}(\mathcal{T}_\lambda+\alpha\mathbbm{1})$ on $H^{-1/2}_{W_\lambda}(\mathbb{R}^3)$ is not necessarily self-adjoint and may in turn admit a multiplicity of self-adjoint extensions, as we observed in Remark \ref{rem:selfadj_of_T_not_enough}: only a self-adjoint extension $\widetilde{\mathcal{A}}_{\lambda,\alpha}$ of $\mathcal{A}_{\lambda,\alpha}$ identifies, a la Kre{\u\i}n-Vi\v{s}ik-Birman, the self-adjoint extension $\mathring{H}_{\langle\alpha\rangle}$ of $\mathring{H}$.

It remains therefore unclear how to relate the range of masses $m$ (identified in \cite{Minlos-2011-preprint_May_2010,Minlos-2010-bis,Minlos-2012-preprint_30sett2011,Minlos-2012-preprint_1nov2012,Minlos-RusMathSurv-2014}) in which $T_\lambda$, initially defined on $H^{1}(\mathbb{R}^3)$, is self-adjoint or has a family of self-adjoint extensions on $L^2(\mathbb{R}^3)$ 
with the actual range of masses in which $\mathring{H}$ admits one or more self-adjoint extensions displaying the TMS asymptotics.
The fact that $\mathcal{A}_{\lambda,\alpha}=2\,W_\lambda^{-1}(\mathcal{T}_\lambda+\alpha\mathbbm{1})$ may be only symmetric on $H^{-1/2}_{W_\lambda}(\mathbb{R}^3)$ even when $\mathcal{T}_\lambda$ is self-adjoint on $L^2(\mathbb{R}^3)$ should account for a \emph{larger} range of masses in which $\mathring{H}$ has a multiplicity of TMS-like self-adjoint extensions than the range in which $T_\lambda$ has a multiplicity of self-adjoint extensions.


It becomes of great interest now to re-read and understand, in terms of the general classification of self-adjoint extensions of $\mathring{H}$ provided by the Kre{\u\i}n-Vi\v{s}ik-Birman theory, those results that produced TMS self-adjoint extensions of $\mathring{H}$ through an approach based on quadratic forms on Hilbert space (see 
\cite{Teta-1989,dft-Nparticles-delta,DFT-proc1995,Finco-Teta-2012,CDFMT-2012} and above all the recent work \cite{CDFMT-2015}). Indeed, the quadratic form approach produces a single self-adjoint TMS Hamiltonian, or alternatively a family of self-adjoint TMS Hamiltonians, all extensions of  $\mathring{H}$, in a regime of masses that differs from what is known from the operator-theoretic approach and that coincides instead to what is found in the physical literature -- see \cite[Remark 2.4 and Proposition 2.2]{CDFMT-2015}, as well as the discussion around \cite[Eq.~(1.17)]{CDFMT-2015}. Owing to the general picture of the self-adjoint extension theory, each such Hamiltonian \emph{must} be selected by a condition on the charges realised by a self-adjoint map on $\ker(\mathring{H}^*+\lambda\mathbbm{1})$ as in \eqref{eq:TMS_2+1_a} or, more generally, in \eqref{eq:SBsmbb-iff-invBsmbb_Tversion}. One should identify such a map and to compare it to its analog in the operator-theoretic approach.

Armed with the analysis and the discussion developed here, we plan to address these issues in a follow-up work.

\appendix

\section{Basics of the Kre{\u\i}n-Vi\v{s}ik-Birman self-adjoint extension theory}\label{app:KVB}

In this appendix we collect the main results of the Kre{\u\i}n-Vi\v{s}ik-Birman theory of self-adjoint extensions of semi-bounded symmetric operators. This theory was developed by  Kre{\u\i}n \cite{Krein-1947}, Vi\v{s}ik \cite{Vishik-1952}, and Birman \cite{Birman-1956}) between the mid 1940's and the mid 1950's. For the present formulation and the proof of all the statements that follow we refer to the comprehensive discussion \cite{M-KVB2015}, as well as to the expository works \cite{Flamand-Cargese1965,Alonso-Simon-1980}.

For any given symmetric operator $S$ with domain $\mathcal{D}(S)$, let 
\begin{equation}
m(S)\;:=\;\inf_{\substack{f\in\mathcal{D}(S) \\ f\neq 0}}\frac{\langle f,Sf\rangle}{\|f\|^2}\,.
\end{equation}
be the ``bottom'' of $S$, i.e., its greatest lower bound. Hereafter $S$ shall be semi-bounded below, meaning therefore $m(S)>-\infty$. It is not restrictive to assume henceforth
\begin{equation}
m(S)>0\,,
\end{equation}
for in the general case one applies the discussion that follows to the strictly positive operator $S+\lambda\mathbbm{1}$, $\lambda>-m(S)$, and  then re-express trivially the final results in terms of the original $S$. When $S$ is densely defined, \emph{the choice $m(S)>0$ implies that the Friedrichs extension  $S_F$ of $S$ is invertible with bounded inverse defined everywhere on $\cH$}: this will allow $S_F^{-1}$ to enter directly the discussion. In the general case in which $S_F$ is not necessarily invertible, the role of $S_F^{-1}$ is naturally replaced by the inverse $\widetilde{S}^{-1}$ of any a priori known self-adjoint extension $\widetilde{S}$ of $S$, which thus takes the role of given ``datum'' of the theory. Moreover, with the choice $m(S)>0$, the level $0$ becomes naturally the reference value with respect to which to express the other distinguished (canonically given) extension of $S$, the  Kre{\u\i}n-von Neumann extension $S_N$.

\begin{lemma}{Lemma}\label{lemma:krein_decomp_formula} \emph{(Decomposition formulas)} For a densely defined symmetric operator $S$ with positive bottom, one has
\begin{eqnarray}
\mathcal{D}(S^*) \!\!\!& = &\!\!\! \mathcal{D}(S_F)\dotplus \ker S^* \label{eq:DomS*krein} \\
\mathcal{D}(S^*)  \!\!\!& = &\!\!\! \mathcal{D}(\overline{S})\dotplus S_F^{-1} \ker S^*\dotplus \ker S^*  \label{eq:DomS*}\\
\mathcal{D}(S_F)  \!\!& = &\!\!\! \mathcal{D}(\overline{S})\dotplus S_F^{-1} \ker S^*\,. \label{eq:DomSF}
\end{eqnarray}
\end{lemma}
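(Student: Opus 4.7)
The plan is to prove the three decomposition formulas sequentially, starting from the second one \eqref{eq:DomSF} (which is the most substantive, as it expresses the gap between the closure and the Friedrichs extension in terms of $S_F^{-1}\ker S^*$), then proving \eqref{eq:DomS*krein} directly, and finally obtaining \eqref{eq:DomS*} by combining the two.

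Before starting, I would record the standing structural facts that the hypothesis $m(S)>0$ delivers. Since $S$ is densely defined, symmetric, and strictly positive, its Friedrichs extension $S_F$ is self-adjoint with the same strictly positive bottom, hence $S_F$ is injective with $\ran S_F=\cH$ and $S_F^{-1}\in\mathcal{B}(\cH)$; also $\overline{S}$ is closed, symmetric, strictly positive, and the inequality $\|\overline{S}f\|\geqslant m(S)\|f\|$ (Cauchy--Schwarz applied to $\langle f,\overline{S}f\rangle$) implies that $\overline{S}$ is injective and has \emph{closed range}, giving the orthogonal decomposition $\cH=\ran(\overline{S})\oplus\ker S^*$ via the standard identity $\ker S^*=\ran(\overline{S})^\perp$. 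Finally, the chain $\overline{S}\subset S_F\subset S^*$ is standard.

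To prove \eqref{eq:DomS*krein}, take $g\in\mathcal{D}(S^*)$; since $\ran S_F=\cH$ there is a unique $h\in\mathcal{D}(S_F)$ with $S_F h=S^*g$. Because $\mathcal{D}(S_F)\subset\mathcal{D}(S^*)$ and $S_F\subset S^*$, the element $g-h$ lies in $\mathcal{D}(S^*)$ and $S^*(g-h)=0$, so $g-h\in\ker S^*$ and one has the decomposition $g=h+(g-h)$. Directness follows from the injectivity of $S_F$: any $w\in\mathcal{D}(S_F)\cap\ker S^*$ satisfies $S_F w=S^*w=0$, hence $w=0$.

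To prove \eqref{eq:DomSF}, take $g\in\mathcal{D}(S_F)$. Since $\ran(\overline{S})$ is closed, decompose $S_Fg=w+v$ with $w\in\ran(\overline{S})$ and $v\in\ker S^*$, and write $w=\overline{S}\phi$ for a unique $\phi\in\mathcal{D}(\overline{S})$. Applying $S_F^{-1}$ and using $\overline{S}\subset S_F$ yields $g=\phi+S_F^{-1}v$, with $\phi\in\mathcal{D}(\overline{S})$ and $S_F^{-1}v\in S_F^{-1}\ker S^*$. For directness, if $\phi=S_F^{-1}v$ for some $\phi\in\mathcal{D}(\overline{S})$ and $v\in\ker S^*$, then $v=S_F\phi=\overline{S}\phi\in\ran(\overline{S})\cap\ker S^*=\{0\}$, and hence $\phi=0$.

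Finally, \eqref{eq:DomS*} is obtained by substituting \eqref{eq:DomSF} into \eqref{eq:DomS*krein}: every $g\in\mathcal{D}(S^*)$ writes uniquely as $g=h+u$ with $h\in\mathcal{D}(S_F)$, $u\in\ker S^*$, and $h$ splits uniquely as $h=f+S_F^{-1}v$ with $f\in\mathcal{D}(\overline{S})$, $v\in\ker S^*$; the three components of the resulting sum $g=f+S_F^{-1}v+u$ lie in pairwise disjoint subspaces since $\mathcal{D}(\overline{S})\cap S_F^{-1}\ker S^*=\{0\}$ by the directness in \eqref{eq:DomSF} and $\mathcal{D}(S_F)\cap\ker S^*=\{0\}$ by the directness in \eqref{eq:DomS*krein}, which together force each component to vanish individually when the sum does. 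The only non-routine point of the whole argument is the closedness of $\ran(\overline{S})$, and this is precisely where the assumption $m(S)>0$ is essential; in the reduction $S\rightsquigarrow S+\lambda\mathbbm{1}$ noted before the lemma, this is what ensures the clean orthogonal splitting of $\cH$ on which the proof rests.
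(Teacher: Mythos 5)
Your proof is correct. The paper itself does not prove this lemma: Appendix \ref{app:KVB} only states it and defers to the cited reference \cite{M-KVB2015}, and your argument is precisely the standard one there (Kre{\u\i}n's original route): bounded invertibility of $S_F$ from $m(S)>0$, the orthogonal splitting $\cH=\ran(\overline{S})\oplus\ker S^*$ coming from the closedness of $\ran(\overline{S})$, and then the two-step substitution to get \eqref{eq:DomS*}. The only wording to tighten is the phrase ``pairwise disjoint subspaces'' in the last step, since pairwise trivial intersections alone do not give directness of a three-fold sum; your argument is nonetheless sound because $\mathcal{D}(\overline{S})\dotplus S_F^{-1}\ker S^*\subset\mathcal{D}(S_F)$, so one first uses $\mathcal{D}(S_F)\cap\ker S^*=\{0\}$ to kill the $\ker S^*$-component and then the directness in \eqref{eq:DomSF} to separate the remaining two.
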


\begin{theorem}{Theorem}\label{thm:VB-representaton-theorem_Tversion} \emph{(Classification of self-adjoint extensions -- operator version.)}
Let $S$ be a densely defined symmetric operator on a Hilbert space $\cH$ with positive bottom ($m(S)>0$). 
There is a one-to-one correspondence between the  family of all self-adjoint extensions of  $S$ on $\cH$ and the family of the self-adjoint operators on Hilbert subspaces of $\ker S^*$.
If $T$ is any such operator, in the correspondence $T\leftrightarrow S_T$ each self-adjoint extension $S_T$ of $S$ is given by
\begin{equation}\label{eq:ST}
\begin{split}
S_T\;&=\;S^*\upharpoonright\mathcal{D}(S_T) \\
\mathcal{D}(S_T)\;&=\;\left\{f+S_F^{-1}(Tv+w)+v\left|\!\!
\begin{array}{c}
f\in\mathcal{D}(\overline{S})\,,\;v\in\mathcal{D}(T) \\
w\in\ker S^*\cap\mathcal{D}(T)^\perp
\end{array}\!\!
\right.\right\}.
\end{split}
\end{equation}
\end{theorem}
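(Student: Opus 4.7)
The plan is to compute a boundary sesquilinear form on $\mathcal{D}(S^*)$ which, via the decomposition \eqref{eq:DomS*}, lives on $\ker S^*\oplus\ker S^*$, and then to identify self-adjoint extensions of $S$ with maximal isotropic subspaces for this form; these in turn correspond precisely to self-adjoint operators on Hilbert subspaces of $\ker S^*$.

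\textbf{Step 1 (Boundary form).} For $\phi_j = f_j + S_F^{-1}u_j + v_j\in\mathcal{D}(S^*)$ with $f_j\in\mathcal{D}(\overline{S})$ and $u_j,v_j\in\ker S^*$, I compute
\[
\omega(\phi_1,\phi_2)\;:=\;\langle S^*\phi_1,\phi_2\rangle-\langle\phi_1,S^*\phi_2\rangle.
\]
Since $S^*v_j=0$ and $f_j+S_F^{-1}u_j\in\mathcal{D}(S_F)$ by \eqref{eq:DomSF}, $S^*$ acts on this piece as the self-adjoint $S_F$, and a short algebraic manipulation collapses $\omega$ to
\[
\omega(\phi_1,\phi_2)\;=\;\langle u_1,v_2\rangle-\langle v_1,u_2\rangle.
\]
In particular, $\omega$ vanishes whenever either entry lies in $\mathcal{D}(\overline{S})$, and it descends to a non-degenerate sesquilinear form on the quotient $\mathcal{D}(S^*)/\mathcal{D}(\overline{S})\cong\ker S^*\oplus\ker S^*$.

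\textbf{Step 2 (Extensions $\leftrightarrow$ isotropic subspaces).} Any operator $\widetilde{S}$ with $\overline{S}\subseteq\widetilde{S}\subseteq S^*$ is uniquely determined by the subspace $\mathfrak{L}\subseteq\ker S^*\oplus\ker S^*$ consisting of the pairs $(u,v)$ extracted via \eqref{eq:DomS*} from elements of $\mathcal{D}(\widetilde{S})$. By Step 1, $\widetilde{S}$ is symmetric iff $\omega$ vanishes on $\mathfrak{L}$, and $\widetilde{S}^{\,*}$ corresponds to the $\omega$-orthogonal of $\mathfrak{L}$; hence $\widetilde{S}$ is self-adjoint iff $\mathfrak{L}$ is \emph{maximal} $\omega$-isotropic.

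\textbf{Step 3 (Maximal isotropics are graphs).} I then verify that a maximally $\omega$-isotropic subspace $\mathfrak{L}$ has the following structure. Let $\mathfrak{K}\subseteq\ker S^*$ be the closure of the $v$-projection of $\mathfrak{L}$. Then on $\mathfrak{K}\oplus\mathfrak{K}$ the subspace $\mathfrak{L}$ is the graph of a self-adjoint operator $T$ defined on a dense domain $\mathcal{D}(T)\subseteq\mathfrak{K}$, while on the orthogonal direction $\ker S^*\cap\mathcal{D}(T)^\perp$ the $u$-coordinate is totally unconstrained (``Friedrichs-like'' direction with $v=0$). Concretely every element of $\mathfrak{L}$ takes the form $(Tv+w,v)$ with $v\in\mathcal{D}(T)$ and $w\in\ker S^*\cap\mathcal{D}(T)^\perp$. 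Symmetry of $T$ is immediate from $\omega\equiv0$ on $\mathfrak{L}$, and the self-adjointness of $T$ on $\mathfrak{K}$ follows from the maximality of $\mathfrak{L}$. Substituting $u=Tv+w$ into the decomposition \eqref{eq:DomS*} reproduces exactly the formula \eqref{eq:ST}.

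\textbf{Step 4 (Converse and bijectivity).} Conversely, given any self-adjoint operator $T$ on a Hilbert subspace of $\ker S^*$, define $S_T$ via \eqref{eq:ST}; Steps 1 and 3 guarantee that the associated $\mathfrak{L}$ is maximally $\omega$-isotropic, hence $S_T=S_T^{\,*}$. Injectivity of the correspondence $T\mapsto S_T$ is read off by recovering $T$ from $S_T$: one projects the $u$-component of each admissible pair $(u,v)$ onto $\overline{\mathcal{D}(T)}=\mathfrak{K}$, and reads $Tv$.

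\textbf{Main obstacle.} The delicate point is Step 3: self-adjoint extensions are naturally parametrized by self-adjoint \emph{linear relations} on $\ker S^*$, and the theorem rephrases this in terms of self-adjoint operators at the cost of introducing the extra degree of freedom $w\in\ker S^*\cap\mathcal{D}(T)^\perp$, which encodes the ``purely multivalued'' part of the relation. The work lies in showing that every maximally $\omega$-isotropic subspace really splits as an operator graph over $\mathfrak{K}$ plus an unconstrained direction over $\ker S^*\cap\mathfrak{K}^\perp$, with no mixed configurations surviving maximality. Here the key input is that $\omega$ is, up to a factor of $i$, a perfect pairing on $\ker S^*\oplus\ker S^*$, so standard polarization and orthogonal-complement arguments in the Hilbert space $\ker S^*$ produce the required decomposition.
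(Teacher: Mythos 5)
The paper itself contains no proof of this theorem: Appendix~\ref{app:KVB} states it as a known result of the Kre{\u\i}n--Vi\v{s}ik--Birman theory and defers all proofs to \cite{M-KVB2015} and the expository references, where the argument runs along Birman's original lines (verify directly from \eqref{eq:DomS*} that $S_T$ is symmetric, compute $\mathcal{D}(S_T^*)$ and use the self-adjointness of $T$ to show it equals $\mathcal{D}(S_T)$, then conversely read $T$ off a given self-adjoint extension). Your boundary-form route is genuinely different -- essentially the boundary-triplet point of view -- and your Step~1 computation of $\omega(\phi_1,\phi_2)=\langle u_1,v_2\rangle-\langle v_1,u_2\rangle$ is correct, as is the identification of $\mathcal{D}(\widetilde S^*)$ with $\mathfrak{L}^{\perp_\omega}$.

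There is, however, one genuine logical gap in Steps~2--3: you conflate ``maximal $\omega$-isotropic'' with ``$\mathfrak{L}=\mathfrak{L}^{\perp_\omega}$''. From your own premise in Step~2 the correct conclusion is that $\widetilde S$ is self-adjoint if and only if $\mathfrak{L}=\mathfrak{L}^{\perp_\omega}$, and this is strictly stronger than maximality among isotropic subspaces when $\ker S^*$ is infinite-dimensional -- which is precisely the situation in Section~\ref{sec:2+1} of the paper. Indeed, if $T_0$ is a closed maximal symmetric but non-self-adjoint operator on $\ker S^*$ (deficiency indices $(1,0)$, say), its graph $\{(T_0v,v)\,:\,v\in\mathcal{D}(T_0)\}$ is maximal $\omega$-isotropic (no pair $(w,0)$ survives isotropy against a dense domain, and adjoining any $(x,y)$ with $y\notin\mathcal{D}(T_0)$ would produce a proper symmetric extension of $T_0$), yet $\mathfrak{L}\subsetneq\mathfrak{L}^{\perp_\omega}$ and $\mathfrak{L}$ is not of the form \eqref{eq:ST}. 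So Step~3 as literally stated is false, and the classification as written would also capture non-self-adjoint maximal symmetric extensions. The repair is immediate: work throughout with the condition $\mathfrak{L}=\mathfrak{L}^{\perp_\omega}$, which says exactly that $\mathfrak{L}$, read as the linear relation $v\mapsto u$, coincides with its adjoint relation; such a relation is automatically closed, and its canonical splitting into an operator part (a self-adjoint $T$ on $\mathfrak{K}=\overline{\mathcal{D}(T)}$) plus the purely multivalued part $\{(w,0)\,:\,w\in\ker S^*\cap\mathfrak{K}^\perp\}$ yields precisely \eqref{eq:ST}. Your Step~4 computation of $\mathfrak{L}^{\perp_\omega}$ for a given self-adjoint $T$ then correctly closes the converse direction.
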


\begin{theorem}{Theorem}\label{thm:semibdd_exts_operator_formulation_Tversion}\emph{(Characterisation of semi-bounded extensions.)}
Let $S$ be a densely defined symmetric operator on a Hilbert space $\cH$ with positive bottom. 
If, with respect to the notation of \eqref{eq:ST}, $S_T$ is a self-adjoint extension of $S$, and if $\alpha<m(S)$, then
\begin{equation}\label{eq:SBsmbb-iff-invBsmbb_Tversion}
\begin{split}
\langle g,S_T g\rangle\;&\geqslant\;\alpha\,\|g\|^2\qquad\forall g\in\mathcal{D}(S_T) \\
& \Updownarrow \\
\langle v,T v\rangle\;\geqslant\;\alpha\|v\|^2+\:&\alpha^2\langle v,(S_F-\alpha\mathbbm{1})^{-1} v\rangle\qquad\forall v\in\mathcal{D}(T)\,.
\end{split}
\end{equation}
As an immediate consequence, $m(T)\geqslant m(S_T)$ for any semi-bounded $S_T$.
In particular, positivity or strict positivity of the bottom of $S_T$ is equivalent to the same property for $T$, that is,
 \begin{equation}\label{eq:positiveSBiffpositveB-1_Tversion}
 \begin{split}
 m(S_T)\;\geqslant \;0\quad&\Leftrightarrow\quad m(T)\;\geqslant\; 0 \\
 m(S_T)\;> \;0\quad&\Leftrightarrow\quad m(T)\;>\; 0\,.
 \end{split}
 \end{equation}
Moreover, if $m(T)>-m(S)$, then
 \begin{equation}\label{eq:bounds_mS_mB_Tversion}
 m(T)\;\geqslant\; m(S_T)\;\geqslant\;\frac{m(S) \,m(T)}{m(S)+m(T)}\,.
 \end{equation}
\end{theorem}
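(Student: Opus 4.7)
The plan is to reduce the equivalence to a single algebraic identity for the shifted quadratic form $\langle g, (S_T - \alpha\mathbbm{1})g\rangle$. First, by the decomposition in \eqref{eq:DomS*} and \eqref{eq:ST}, every $g \in \mathcal{D}(S_T)$ writes uniquely as $g = \phi + v$, with $v \in \mathcal{D}(T) \subset \ker S^*$ and $\phi := \tilde f + S_F^{-1}(Tv + w) \in \mathcal{D}(S_F)$ for some $\tilde f \in \mathcal{D}(\overline{S})$ and some $w$ in the orthogonal complement of the ambient Hilbert space of $T$ inside $\ker S^*$. Since $S^* v = 0$ and $S^*|_{\mathcal{D}(S_F)} = S_F$, one has $S_T g = S_F \phi$. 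The key algebraic observation is $\langle v, S_F\phi\rangle = \langle v, Tv\rangle$, which follows from $\langle v, \overline{S}\tilde f\rangle = \langle S^*v, \tilde f\rangle = 0$ and the orthogonality $\langle v, w\rangle = 0$.

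Next, since $\alpha < m(S) = m(S_F)$, the operator $(S_F - \alpha\mathbbm{1})$ is self-adjoint, positive, and boundedly invertible. Setting $\tilde\phi := \phi - \alpha(S_F - \alpha\mathbbm{1})^{-1} v \in \mathcal{D}(S_F)$ and completing the square using the bounded self-adjoint $(S_F - \alpha\mathbbm{1})^{\pm 1/2}$, a direct computation produces the fundamental identity
\begin{equation*}
\langle g, (S_T - \alpha\mathbbm{1}) g\rangle \;=\; \langle \tilde\phi, (S_F - \alpha\mathbbm{1})\tilde\phi\rangle \;+\; \bigl[\,\langle v, Tv\rangle - \alpha\|v\|^2 - \alpha^2\langle v, (S_F - \alpha\mathbbm{1})^{-1} v\rangle\,\bigr].
\end{equation*}
The first summand is non-negative by positivity of $(S_F - \alpha\mathbbm{1})$, so ``$\Leftarrow$'' in \eqref{eq:SBsmbb-iff-invBsmbb_Tversion} is immediate. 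For ``$\Rightarrow$'', I fix $v \in \mathcal{D}(T)$ and choose $w = 0$, so that $\tilde\phi$ takes the form $\tilde f + \xi$, with $\xi := S_F^{-1}(Tv) - \alpha(S_F - \alpha\mathbbm{1})^{-1}v \in \mathcal{D}(S_F)$ independent of $\tilde f$. Since $\mathcal{D}(\overline S)$ is a form core for the Friedrichs extension, it is dense in $\mathcal{D}[S_F]$ in the form norm $\|(S_F - \alpha\mathbbm{1})^{1/2}\cdot\|$, so one can take $\tilde f_n \to -\xi$ in that norm and drive $\langle \tilde\phi_n, (S_F - \alpha\mathbbm{1})\tilde\phi_n\rangle \to 0$. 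The hypothesis $\langle g, (S_T - \alpha\mathbbm{1})g\rangle \geq 0$ then forces the bracket to be non-negative, which is exactly the desired condition on $T$.

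The consequences follow by tuning $\alpha$. The bound $m(T) \geq m(S_T)$ is obtained by letting $\alpha \uparrow m(S_T)$ (admissible since $S_T \supset S$ forces $m(S_T) \leq m(S)$) and dropping the non-negative $\alpha^2$-term in the characterising inequality. The equivalences \eqref{eq:positiveSBiffpositveB-1_Tversion} come from $\alpha = 0$ for the non-strict case and from a small-$\alpha$ perturbation using $(S_F - \alpha\mathbbm{1})^{-1} \leq (m(S) - \alpha)^{-1}\mathbbm{1}$ for the strict one. The quantitative bound \eqref{eq:bounds_mS_mB_Tversion} is obtained by the optimal choice $\alpha_* := m(S)m(T)/(m(S) + m(T))$: an elementary computation gives $\alpha_* + \alpha_*^2/(m(S) - \alpha_*) = m(T)$, so $\alpha_*\|v\|^2 + \alpha_*^2\langle v, (S_F - \alpha_*\mathbbm{1})^{-1}v\rangle \leq m(T)\|v\|^2 \leq \langle v, Tv\rangle$, yielding $m(S_T) \geq \alpha_*$. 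I expect the main obstacle to be the vanishing-infimum step in the ``$\Rightarrow$'' direction, as it is the only non-algebraic ingredient and it relies crucially on the characterisation of $S_F$ through its form, so that $\mathcal{D}(\overline S)$ is dense in $\mathcal{D}[S_F]$ in the form norm; without this density one would only obtain the ``$\Leftarrow$'' half.
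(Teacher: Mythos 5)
Your proposal is correct: the completing-the-square identity $\langle g,(S_T-\alpha\mathbbm{1})g\rangle=\langle\tilde\phi,(S_F-\alpha\mathbbm{1})\tilde\phi\rangle+\langle v,Tv\rangle-\alpha\|v\|^2-\alpha^2\langle v,(S_F-\alpha\mathbbm{1})^{-1}v\rangle$, the form-core approximation for the forward implication, and the choices $\alpha=0$, $\alpha\uparrow m(S_T)$ and $\alpha_*=m(S)m(T)/(m(S)+m(T))$ all check out. The paper itself states this theorem without proof, deferring to \cite{M-KVB2015}, and your argument is essentially the classical Birman-type proof used there, so there is nothing substantive to add.
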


\begin{theorem}{Theorem}\label{thm:semibdd_exts_form_formulation_Tversion}\emph{(Characterisation of semi-bounded extensions -- form version.)}
Let $S$ be a densely defined symmetric operator on a Hilbert space $\cH$ with positive bottom and, with respect to the notation of \eqref{eq:ST}, let $S_T$ be a  \emph{semi-bounded} (not necessarily positive) self-adjoint extension of $S$. Then
\begin{equation}\label{eq:D[SB]_Tversion}
\mathcal{D}[T]\;=\; \mathcal{D}[S_T]\,\cap\,\ker S^*
\end{equation}
and
 \begin{equation}\label{eq:decomposition_of_form_domains_Tversion}
 \begin{split}
 \mathcal{D}[S_T]\;&=\;\mathcal{D}[S_F]\,\dotplus\,\mathcal{D}[T] \\
 S_T[f+v,f'+v']\;&=\;S_F[f,f']\,+\,T[v,v'] \\
 &\forall f,f'\in\mathcal{D}[S_F],\;\forall v,v'\in\mathcal{D}[T]\,.
 \end{split}
\end{equation}
As a consequence,
\begin{equation}\label{eq:extension_ordering_Tversion}
S_{T_1}\,\geqslant\,S_{T_2}\qquad\Leftrightarrow\qquad T_1\,\geqslant\,T_2
\end{equation}
and
\begin{equation}
T\;\geqslant\;S_T\,.
\end{equation}
\end{theorem}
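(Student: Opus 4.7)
The plan is to bootstrap the form-level identities from the operator-level decomposition \eqref{eq:ST} of Theorem \ref{thm:VB-representaton-theorem_Tversion}, closing them to the form domain by semi-boundedness, and then to extract \eqref{eq:extension_ordering_Tversion} and the bound $T\geqslant S_T$ as immediate corollaries. First I would take an arbitrary $g=f+S_F^{-1}(Tv+w)+v\in\mathcal{D}(S_T)$ as in \eqref{eq:ST}, set $\phi:=f+S_F^{-1}(Tv+w)\in\mathcal{D}(S_F)$, and exploit $S^*v=0$ to write $S_T g=S^* g=S_F\phi=\bar S f+Tv+w$. Expanding $\langle g,S_T g\rangle=\langle \phi,S_F\phi\rangle+\langle v,S_F\phi\rangle$, the three pieces of the second summand collapse to $T[v]$, because $\langle v,\bar S f\rangle=\langle S^*v,f\rangle=0$ by approximation with elements of $\mathcal{D}(S)$, $\langle v,w\rangle=0$ by the built-in orthogonality $w\in\ker S^*\cap\mathcal{D}(T)^\perp$ in \eqref{eq:ST}, and $\langle v,Tv\rangle=T[v]$. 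This yields the form identity $\langle g, S_T g\rangle=S_F[\phi]+T[v]$ on the operator domain, with $g=\phi+v$, $\phi\in\mathcal{D}(S_F)$, $v\in\mathcal{D}(T)$.

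Next, since $S_T$ is semi-bounded, Theorem \ref{thm:semibdd_exts_operator_formulation_Tversion} grants that $T$ is semi-bounded too; I would shift by $+\lambda\mathbbm{1}$ with $\lambda$ large enough that $S_T+\lambda\mathbbm{1}$, $S_F+\lambda\mathbbm{1}$, and $T+\lambda\mathbbm{1}$ are simultaneously strictly positive. Then $\mathcal{D}(S_T)$, $\mathcal{D}(S_F)$, and $\mathcal{D}(T)$ are respective cores of the associated (shifted) closed quadratic forms, and the identity obtained above passes to the closure in the $(S_T+\lambda)$-form norm, giving \eqref{eq:decomposition_of_form_domains_Tversion}. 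The sum is direct because $\mathcal{D}[S_F]\cap\ker S^*=\{0\}$: for $v\in\mathcal{D}[S_F]\cap\ker S^*$ and any $f\in\mathcal{D}(S)$ one has $\langle v,S_F f\rangle=\langle v,Sf\rangle=\langle S^*v,f\rangle=0$, and by density of $\mathcal{D}(S)$ in $\mathcal{D}[S_F]$ in the $S_F$-form norm this forces $S_F[v]=0$, hence $v=0$ by $m(S_F)>0$.

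For \eqref{eq:D[SB]_Tversion}, the inclusion $\mathcal{D}[T]\subset\mathcal{D}[S_T]\cap\ker S^*$ is immediate, and conversely, if $g\in\mathcal{D}[S_T]\cap\ker S^*$ is decomposed as $g=\phi+v$ via \eqref{eq:decomposition_of_form_domains_Tversion}, then $\phi=g-v\in\mathcal{D}[S_F]\cap\ker S^*=\{0\}$, whence $g=v\in\mathcal{D}[T]$. The ordering \eqref{eq:extension_ordering_Tversion} is then read off \eqref{eq:decomposition_of_form_domains_Tversion}, since for matching decompositions $S_{T_1}[\phi+v]-S_{T_2}[\phi+v]=T_1[v]-T_2[v]$, which forces the inclusion of form domains and the sign of the difference to be dictated by $T_1$ versus $T_2$. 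Finally, $T\geqslant S_T$ follows from taking $\phi=0$ in \eqref{eq:decomposition_of_form_domains_Tversion}, which yields the equality $S_T[v]=T[v]$ on $\mathcal{D}[T]\subset\mathcal{D}[S_T]$.

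The main obstacle I expect is the closure argument in the second step: one must justify rigorously that the direct-sum splitting valid on $\mathcal{D}(S_T)$ propagates to $\mathcal{D}[S_T]$ in the form-norm limit, and that the mixed term $2\lambda\,\mathrm{Re}\,\langle\phi,v\rangle$ arising after the shift is controlled uniformly by the shifted form norms of the two summands. This hinges on a form-orthogonality of the $\mathcal{D}[S_F]$- and $\ker S^*$-components which is a descendant of the calculation $\langle v,S_F\phi\rangle=T[v]$ (no genuine cross-term), but transferring it from the operator-core identity to the full form domain, and checking that the closed form so produced is precisely that of $S_T$ and not merely a lower bound, is the delicate point.
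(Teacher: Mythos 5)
Note first that the paper itself contains no proof of this theorem: Appendix \ref{app:KVB} states it and defers all proofs to \cite{M-KVB2015,Flamand-Cargese1965,Alonso-Simon-1980}, so your proposal can only be assessed on its own merits. Its algebraic part is sound: for $g=\phi+v\in\mathcal{D}(S_T)$ with $\phi=f+S_F^{-1}(Tv+w)$ as in \eqref{eq:ST}, the identity $\langle g,S_Tg\rangle=S_F[\phi]+T[v]$ is correct, your proof that $\mathcal{D}[S_F]\cap\ker S^*=\{0\}$ is correct, and \eqref{eq:D[SB]_Tversion}, \eqref{eq:extension_ordering_Tversion} and $T\geqslant S_T$ do follow once \eqref{eq:decomposition_of_form_domains_Tversion} is in hand.

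The genuine gap is the step you yourself flag and then leave open: the passage from the identity on the operator core to the closed form. Saying that ``the identity passes to the closure in the $(S_T+\lambda)$-form norm'' assumes what must be proved, because \eqref{eq:decomposition_of_form_domains_Tversion} requires precisely that a sequence $g_n=\phi_n+v_n\in\mathcal{D}(S_T)$ which is Cauchy in the $(S_T+\lambda)$-form norm has components $\phi_n$, $v_n$ separately Cauchy in the $S_F$- and $T$-form norms (and, conversely, that every pair in $\mathcal{D}[S_F]\times\mathcal{D}[T]$ is attained in the limit, and that the limit decomposition is consistent). From $(S_T+\lambda)[g_n-g_m]=S_F[\phi_n-\phi_m]+T[v_n-v_m]+\lambda\|(\phi_n-\phi_m)+(v_n-v_m)\|^2\to 0$ one cannot split the three terms: when $S_T$ (hence $T$) is not positive the middle term may be large and negative, and since $\mathcal{D}[S_F]$ is dense in $\cH$ there is no positive angle between $\mathcal{D}[S_F]$ and $\ker S^*$ in $\cH$ that would control the cross term $2\lambda\,\mathrm{Re}\,\langle\phi,v\rangle$. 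A simple absorption using $S_F[\phi]\geqslant m(S)\|\phi\|^2$ and $T[v]\geqslant m(T)\|v\|^2$ works only when $m(T)>-m(S)$; in the general semi-bounded case covered by the theorem one must exploit the sharper lower bound on $T$ encoded in \eqref{eq:SBsmbb-iff-invBsmbb_Tversion}, namely $\langle v,Tv\rangle\geqslant\alpha\|v\|^2+\alpha^2\langle v,(S_F-\alpha\mathbbm{1})^{-1}v\rangle$, and establish a two-sided comparability between the $(S_T+\lambda)$-form norm of $\phi+v$ and the sum of the shifted form norms of $\phi$ and $v$ (equivalently, continuity of the two component projections in the form topology); the treatments in \cite{M-KVB2015,Alonso-Simon-1980} obtain this by structural comparison with $S_F$ and the Kre{\u\i}n--von Neumann extension rather than by a bare closure argument. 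That estimate is the actual substance of the theorem and is missing from your write-up, so what you have is a correct reduction of the statement to its hardest part, complete only in the special regime $T\geqslant 0$ (or more generally $m(T)>-m(S)$).
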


\begin{theorem}{Proposition}\label{prop:parametrisation_SF_SN_Tversion}\emph{(Parametrisation of $S_F$ and $S_N$.)}
Let $S$ be a densely defined symmetric operator on a Hilbert space $\cH$ with positive bottom and let $S_T$ be a \emph{positive} self-adjoint extension of $S$, parametrised by $T$  according to Theorems \ref{thm:VB-representaton-theorem_Tversion} and \ref{thm:semibdd_exts_form_formulation_Tversion}.
\begin{itemize}
 \item[(i)] $S_T$ is the \emph{Friedrichs extension} when $\mathcal{D}[T]=\{0\}$ (`` $T=\infty$'').
 \item[(ii)] $S_T$ is the \emph{Kre{\u\i}n-von Neumann extension} when $\mathcal{D}(T)=\mathcal{D}[T]=\ker S^*$ and $Tu=0$ $\forall u\in\ker S^*$ (	$T=\mathbb{O}$).
\end{itemize}
\end{theorem}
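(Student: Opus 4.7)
The plan is to deduce both identifications directly from the form-version of the classification, specifically the decomposition formula \eqref{eq:decomposition_of_form_domains_Tversion} and the monotonicity \eqref{eq:extension_ordering_Tversion} provided by Theorem \ref{thm:semibdd_exts_form_formulation_Tversion}, plus well-known abstract characterisations of $S_F$ and $S_N$ as, respectively, the largest and the smallest non-negative self-adjoint extension of $S$.

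For part (i), I would argue as follows. If $\mathcal{D}[T]=\{0\}$ then \eqref{eq:decomposition_of_form_domains_Tversion} forces
\[
\mathcal{D}[S_T]\;=\;\mathcal{D}[S_F]\dotplus\{0\}\;=\;\mathcal{D}[S_F]\,,\qquad S_T[f,f']\;=\;S_F[f,f']
\]
for all $f,f'\in\mathcal{D}[S_F]$. Two closed semi-bounded quadratic forms that coincide as forms are represented by the same self-adjoint operator, hence $S_T=S_F$. Equivalently, one can invoke \eqref{eq:extension_ordering_Tversion}: the formal choice ``$T=\infty$'' corresponds to the largest element in the ordering of admissible parametrising operators, and therefore produces the largest positive self-adjoint extension, namely $S_F$. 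I expect this step to be essentially immediate once \eqref{eq:decomposition_of_form_domains_Tversion} is granted.

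For part (ii), the natural route is twofold. First, by \eqref{eq:extension_ordering_Tversion}, $T\mapsto S_T$ is monotone, so the \emph{smallest} non-negative $T$ among the self-adjoint operators on subspaces of $\ker S^*$, namely the zero operator $T=\mathbb{O}$ defined on the whole $\ker S^*$, corresponds to the smallest non-negative self-adjoint extension of $S$; by definition this is the Kre{\u\i}n-von Neumann extension $S_N$. Alternatively, one can verify this on the nose by inserting $T=\mathbb{O}$, $\mathcal{D}(T)=\ker S^*$, into the operator domain formula \eqref{eq:ST}: then $\ker S^*\cap\mathcal{D}(T)^\perp=\{0\}$ and $Tv=0$, so
\[
\mathcal{D}(S_T)\;=\;\bigl\{\,f+v\,\bigm|\,f\in\mathcal{D}(\overline S),\;v\in\ker S^*\bigr\}\;=\;\mathcal{D}(\overline S)\dotplus\ker S^*\,,
\]
with $S_T(f+v)=S^*(f+v)=\overline S f$, which is precisely the standard characterisation of $S_N$ for an $S$ with positive bottom (so that $S_F$ is boundedly invertible and the domain decomposition \eqref{eq:DomSF} makes sense).

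The only point requiring some care is the coherence between the two descriptions of $S_N$ — as the smallest positive self-adjoint extension (used to import monotonicity) and as the operator with domain $\mathcal{D}(\overline S)\dotplus\ker S^*$ (emerging from \eqref{eq:ST}). This equivalence is classical for semi-bounded $S$ with $m(S)>0$ and can be taken from the background references \cite{M-KVB2015,Alonso-Simon-1980}; together with \eqref{eq:extension_ordering_Tversion} and \eqref{eq:decomposition_of_form_domains_Tversion}, it closes the argument without any further calculation.
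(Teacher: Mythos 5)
Your argument is correct, but note that the paper itself offers no proof of Proposition \ref{prop:parametrisation_SF_SN_Tversion}: Appendix \ref{app:KVB} only collects the Kre{\u\i}n-Vi\v{s}ik-Birman statements and defers all proofs to \cite{M-KVB2015,Flamand-Cargese1965,Alonso-Simon-1980}, so there is no in-paper argument to compare against. What you do is a legitimate self-contained derivation from the stated theorems: for (i), the degenerate case $\mathcal{D}[T]=\{0\}$ in \eqref{eq:decomposition_of_form_domains_Tversion} gives coincidence of the closed semi-bounded forms of $S_T$ and $S_F$, hence $S_T=S_F$ by the representation theorem for such forms; for (ii), inserting $T=\mathbb{O}$ with $\mathcal{D}(T)=\ker S^*$ into \eqref{eq:ST} (so that $\ker S^*\cap\mathcal{D}(T)^\perp=\{0\}$ and $S_F^{-1}(Tv+w)=0$) yields $\mathcal{D}(S_T)=\mathcal{D}(\overline{S})\dotplus\ker S^*$ with $S_T(f+v)=\overline{S}f$, which is the classical description of the Kre{\u\i}n-von Neumann extension when $m(S)>0$. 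The only external inputs are exactly the ones you flag: the extremal characterisations of $S_F$ (largest) and $S_N$ (smallest non-negative) and/or the equivalence of the latter with the domain formula $\mathcal{D}(\overline{S})\dotplus\ker S^*$; since the paper never defines $S_N$ intrinsically, some such classical fact from \cite{M-KVB2015,Alonso-Simon-1980} is unavoidable, and importing it is consistent with how the appendix treats everything else. A small remark: your monotonicity route via \eqref{eq:extension_ordering_Tversion} is the less clean of your two alternatives, because comparing parameters $T$ defined on different subspaces of $\ker S^*$ must be understood in the form sense (the zero form on all of $\ker S^*$ being minimal among non-negative ones); the direct verifications through \eqref{eq:decomposition_of_form_domains_Tversion} and \eqref{eq:ST} that you also give are complete on their own and are the preferable formulation.
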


\begin{theorem}{Proposition}\label{cor:finite_deficiency_index}\emph{(Finite deficiency index.)}
If $S$ is a semi-bounded and densely defined symmetric operator on a Hilbert space $\cH$ with finite deficiency index, then 
\begin{itemize}
 \item[(i)] the semi-boundedness of $S_T$ is equivalent to the semi-boundedness of $T$;
 \item[(ii)] any self-adjoint extension of $S$ is bounded below.
\end{itemize}
\end{theorem}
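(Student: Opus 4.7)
The plan is to reduce everything to the finite-dimensionality of $\ker S^*$ together with the operator-version characterisation \eqref{eq:SBsmbb-iff-invBsmbb_Tversion}. Without loss of generality, after a harmless constant shift, assume $m(S)>0$; the finite deficiency index hypothesis then gives $\dim\ker S^*<\infty$, so every Hilbert subspace $\mathfrak{h}\subseteq\ker S^*$ on which a self-adjoint $T$ is parametrised (Theorem \ref{thm:VB-representaton-theorem_Tversion}) is finite-dimensional. In particular every self-adjoint $T$ appearing in the parametrisation is \emph{automatically bounded}, hence trivially semi-bounded below. This already makes (ii) a direct consequence of (i). For (i), the direction ``$S_T$ semi-bounded $\Rightarrow\,T$ semi-bounded'' is precisely the inequality $m(T)\geqslant m(S_T)$ recorded in Theorem \ref{thm:semibdd_exts_operator_formulation_Tversion}. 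It remains to prove the converse, which is the crux.

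Given such a bounded $T$, I plan to apply \eqref{eq:SBsmbb-iff-invBsmbb_Tversion} and reduce the semi-boundedness of $S_T$ to exhibiting $\alpha<0$ with
\[
 M(\alpha)\;:=\;\alpha\mathbbm{1}+\alpha^{2}(S_F-\alpha\mathbbm{1})^{-1}\;\leqslant\;m(T)\,\mathbbm{1}\qquad\textrm{on }\mathfrak{h}\,.
\]
The spectral theorem for $S_F$ rewrites the quadratic form of $M(\alpha)$, for $\alpha<0$, as
\[
 \langle v,M(\alpha)v\rangle\;=\;-\!\int_{m(S)}^{\infty}\!\frac{|\alpha|\,\lambda}{\lambda+|\alpha|}\,\ud\mu_v(\lambda)\,,
\]
where $\mu_v$ is the spectral measure of $S_F$ at $v$. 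The key ingredient is the standard Kre{\u\i}n-Vi\v{s}ik-Birman fact
\[
 \mathcal{D}[S_F]\cap\ker S^{*}\;=\;\{0\}\,,
\]
which I would verify by a form-norm approximation argument: if $v\in\ker S^{*}\cap\mathcal{D}[S_F]$ and $f_n\in\mathcal{D}(S)$ satisfy $f_n\to v$ in the form norm of $S_F$, then $\langle Sf_n,v\rangle=\langle f_n,S^*v\rangle=0$ for every $n$, while $\langle Sf_n,v\rangle\to S_F[v]$ by closedness; hence $S_F[v]=0$, and $v=0$ since $m(S_F)>0$.

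As a consequence, for every $v\in\mathfrak{h}\setminus\{0\}$ one has $\int\lambda\,\ud\mu_v=+\infty$, and the monotone convergence of $|\alpha|\lambda/(\lambda+|\alpha|)\nearrow\lambda$ forces $\langle v,M(\alpha)v\rangle\to-\infty$ as $|\alpha|\to+\infty$. A Dini-type argument -- legitimate because the unit sphere of the finite-dimensional $\mathfrak{h}$ is compact and the map $|\alpha|\mapsto\langle v,M(\alpha)v\rangle$ is monotonically non-increasing at every $v$ -- upgrades this pointwise divergence to the uniform divergence $\sup_{v\in\mathfrak{h},\,\|v\|=1}\langle v,M(\alpha)v\rangle\to-\infty$. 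Therefore, for $|\alpha|$ sufficiently large, the sought inequality $M(\alpha)\leqslant m(T)\mathbbm{1}$ holds on $\mathfrak{h}$, whence \eqref{eq:SBsmbb-iff-invBsmbb_Tversion} yields $m(S_T)\geqslant\alpha>-\infty$, proving (i); and then (ii) follows from the reduction above. The main obstacle I anticipate is the clean combination of the two ingredients $\mathcal{D}[S_F]\cap\ker S^*=\{0\}$ and the compactness-plus-monotonicity passage from pointwise to uniform divergence: neither is individually difficult, but both depend essentially on the finite-dimensionality of $\ker S^*$ and must be handled simultaneously.
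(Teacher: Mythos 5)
Your argument is correct. Note that the paper itself states this Proposition in Appendix \ref{app:KVB} without proof, deferring to the cited reference \cite{M-KVB2015}, so there is no in-paper proof to compare against; your route is, however, essentially the classical Birman-type argument that such references use: reduce to the characterisation \eqref{eq:SBsmbb-iff-invBsmbb_Tversion}, observe that every parametrising $T$ is automatically bounded because $\dim\ker S^*<\infty$, and show that the form of $\alpha\mathbbm{1}+\alpha^2(S_F-\alpha\mathbbm{1})^{-1}$ diverges to $-\infty$ uniformly on the unit sphere of $\ker S^*$ as $\alpha\to-\infty$, which hinges exactly on the two ingredients you isolate, namely $\mathcal{D}[S_F]\cap\ker S^*=\{0\}$ (so that $\int\lambda\,\ud\mu_v=+\infty$ for every non-zero $v\in\ker S^*$) and compactness of the unit sphere together with monotonicity in $|\alpha|$. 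All the individual steps check out: the identity $\alpha+\alpha^2(\lambda-\alpha)^{-1}=-|\alpha|\lambda/(\lambda+|\alpha|)$ for $\alpha<0$, the form-norm approximation proving $\mathcal{D}[S_F]\cap\ker S^*=\{0\}$ (your phrase ``by closedness'' really means continuity of the closed form in the form norm via Cauchy--Schwarz, but the argument is sound), the monotone-convergence step, and the Dini-type upgrade to uniform divergence. Within the paper's own toolkit, part (ii) is also covered directly by Proposition \ref{cor:finite-dimensional}, and your derivation in effect proves that statement as well.
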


\begin{theorem}{Proposition}\label{cor:finite-dimensional}\emph{(``Finite-dimensional'' extensions are always semi-bounded.)}
Given a semi-bounded and densely defined symmetric operator $S$ on a Hilbert space $\cH$, whose bottom is positive, all the self-adjoint extensions of $S_T$ of $S$ for which the parameter $T$, in the parametrisation \eqref{eq:ST} of Theorem \ref{thm:VB-representaton-theorem_Tversion}, is a self-adjoint operator acting on a \emph{finite-dimensional} subspace of $\ker S^*$ are semi-bounded. For the occurrence of unbounded below self-adjoint extensions it is necessary that $\dim\overline{\mathcal{D}(T)}=\infty$.
\end{theorem}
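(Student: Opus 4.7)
The plan is to invoke the quantitative semi-boundedness criterion \eqref{eq:SBsmbb-iff-invBsmbb_Tversion} of Theorem \ref{thm:semibdd_exts_operator_formulation_Tversion}. Since $T$ is self-adjoint on a finite-dimensional subspace $\mathfrak{V}:=\overline{\mathcal{D}(T)}\subset\ker S^*$, it is automatically everywhere defined and bounded on $\mathfrak{V}$, so its bottom $m(T)$ is finite. It thus suffices to exhibit some $\alpha<m(S)$ for which
\[
\langle v,Tv\rangle \;\geqslant\; \alpha\|v\|^2 + \alpha^2 \langle v,(S_F-\alpha)^{-1}v\rangle \qquad \forall v\in\mathfrak{V}.
\]
By the spectral calculus for the positive self-adjoint operator $S_F$, the right-hand side equals $\langle v,M_\alpha v\rangle$, where $M_\alpha:=\alpha\mathbbm{1}+\alpha^2(S_F-\alpha)^{-1}=\alpha S_F(S_F-\alpha)^{-1}$ is, for $\alpha<0$, bounded with spectrum in $[\alpha,0]$, and
\[
-\langle v,M_\alpha v\rangle \;=\; \int_{m(S)}^{\infty} \frac{|\alpha|\,\lambda}{|\alpha|+\lambda}\,d\mu_v(\lambda),
\]
where $\mu_v$ denotes the spectral measure of $S_F$ at $v$.

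The crucial preparatory fact is the auxiliary lemma $\mathcal{D}[S_F]\cap\ker S^*=\{0\}$: for $v$ in this intersection, approximating $v$ in the $S_F$-form norm by $v_n\in\mathcal{D}(S)$ gives
\[
S_F[v_n,v]\;=\;\langle Sv_n,v\rangle\;=\;\langle v_n,S^*v\rangle\;=\;0,
\]
so passing to the limit (by continuity of the form in its arguments in the form norm) yields $S_F[v]=0$, and the positivity $S_F[v]\geqslant m(S)\|v\|^2$ forces $v=0$. As a consequence, every nonzero $v\in\mathfrak{V}$ lies outside $\mathcal{D}[S_F]$, i.e.~$\int\lambda\,d\mu_v(\lambda)=+\infty$.

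Since $|\alpha|\lambda/(|\alpha|+\lambda)$ is monotone non-decreasing in $|\alpha|$ and tends pointwise to $\lambda$ as $|\alpha|\to\infty$, monotone convergence gives $-\langle v,M_\alpha v\rangle\uparrow +\infty$ for every nonzero $v\in\mathfrak{V}$. Finite-dimensionality of $\mathfrak{V}$ renders its unit sphere compact, so a standard Dini-type argument (exploiting the monotonicity in $|\alpha|$ and the continuity in $v$ of the quadratic forms) upgrades this pointwise divergence to the uniform statement
\[
\inf_{v\in\mathfrak{V},\,\|v\|=1}\!\big(-\langle v,M_\alpha v\rangle\big)\;\longrightarrow\;+\infty\qquad\text{as}\qquad|\alpha|\to\infty.
\]
Choosing $|\alpha|$ large enough that this infimum exceeds $-m(T)$ yields the operator inequality $M_\alpha\leqslant m(T)\mathbbm{1}_\mathfrak{V}\leqslant T$ on $\mathfrak{V}$, and \eqref{eq:SBsmbb-iff-invBsmbb_Tversion} then delivers $\langle g,S_T g\rangle\geqslant\alpha\|g\|^2$ for every $g\in\mathcal{D}(S_T)$. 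The second assertion of the proposition is precisely the contrapositive of what was just proved. The main obstacle I anticipate is the preliminary lemma $\mathcal{D}[S_F]\cap\ker S^*=\{0\}$ together with the Dini-type uniformization on the sphere of $\mathfrak{V}$; neither is deep, but on them rests the whole mechanism that, via \eqref{eq:SBsmbb-iff-invBsmbb_Tversion}, absorbs any finite bottom of $T$, including the regime $m(T)\leqslant-m(S)$ not covered by the explicit bound \eqref{eq:bounds_mS_mB_Tversion}.
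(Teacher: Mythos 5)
Your argument is correct in all its steps: the reduction to the criterion \eqref{eq:SBsmbb-iff-invBsmbb_Tversion}, the identity $\alpha+\alpha^{2}(\lambda-\alpha)^{-1}=\alpha\lambda(\lambda-\alpha)^{-1}$, the auxiliary lemma $\mathcal{D}[S_F]\cap\ker S^{*}=\{0\}$ (valid because $m(S_F)=m(S)>0$), the monotone-convergence divergence of $-\langle v,M_\alpha v\rangle$ for each nonzero $v\in\ker S^{*}$, the Dini-type uniformization on the compact unit sphere of the finite-dimensional space $\overline{\mathcal{D}(T)}$, and the contrapositive yielding the second assertion. Note that the paper states Proposition \ref{cor:finite-dimensional} without proof, deferring to \cite{M-KVB2015} and \cite{Alonso-Simon-1980}; your proof is essentially the standard Kre{\u\i}n--Vi\v{s}ik--Birman argument found there, so there is no substantive divergence to report.
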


\section{Proof of the identity \eqref{eq:H20approx}}\label{app:approx}

We prove here that the $H^2$-closure of $C^\infty_0(\mathbb{R}^3\!\setminus\!\{0\})$, which is by definition the space $H^2_0(\mathbb{R}^3\!\setminus\!\{0\})$, coincides with the space $\{ f\in H^2(\mathbb{R}^3)\,|\, f(0)=0\}$, thus obtaining the identity \eqref{eq:H20approx}. Clearly 
\begin{equation*}
\overline{\,C^\infty_0(\mathbb{R}^3\!\setminus\!\{0\})\,}^{\|\,\|_{H^2}}\;\subset\;\big\{ f\in H^2(\mathbb{R}^3)\,\big|\, f(0)=0\big\}
\end{equation*}
because the $H^2$-convergence implies the point-wise convergence of continuous functions. Thus, given $f\in H^2(\mathbb{R}^3)$ with $f(0)=0$, we only need to find for arbitrary $\varepsilon>0$ a function $f_\varepsilon\in C^\infty_0(\mathbb{R}^3\!\setminus\!\{0\})$ such that
\begin{equation}\label{eq:approx_to_prove}
\|f-f_\varepsilon\|_{H^2}\;\leqslant\;\varepsilon\,.
\end{equation}

Given a cut-off function $\chi\in C^\infty([0,+\infty))$ such that 
\[
\begin{split}
\chi(r)\;=\;0\qquad & \textrm{ for }\;r\in[0,1]  \\
\chi(r)\;=\;1\qquad & \textrm{ for }\;r\in[2,+\infty)\,,
\end{split}
\]
set
\[
\phi_n(x)\;:=\;\chi(n|x|)\,,\qquad\quad n\in\mathbb{N}\,,\qquad x\in\mathbb{R}^3\,.
\]
Then, for any $n\in\mathbb{N}$, $\phi_n\in C^\infty (\mathbb{R}^3)$ and
\begin{equation}\label{eq:properties_phi-0}
\begin{split}
\phi_n(x)\;=\;0\qquad\qquad & \textrm{ for }\;|x|\;\leqslant\;\frac{1}{n} \\
\phi_n(x)\;=\;1\qquad\qquad & \textrm{ for }\;|x|\;\geqslant\;\frac{2}{n} \\
|\phi_n(x)|\;\leqslant\;c_\chi \qquad\quad\;\; & \\
|\nabla\phi_n(x)|\;\leqslant\;n \,c_\chi\qquad\;\;\; & \;\;\forall x\in\mathbb{R}^3 \\
|\Delta\phi_n(x)|\;\leqslant\;n^2 c_\chi\qquad\;\;& 
\end{split}
\end{equation}
where $c_\chi$ depends only on $\|\chi\|_{\sup}$, $\|\chi'\|_{\sup}$, and $\|\chi''\|_{\sup}$.

Correspondingly, each function $\phi_nf$
belongs to $H^2(\mathbb{R}^3)$ and vanishes when $|x|\leqslant n^{-1}$. Furthermore, we now show that
\begin{equation}\label{eq:H2_approx_fphi_n}
\|\phi_n\,f-f\|_{H^2}\;\xrightarrow[]{\;\; n\to +\infty\;\;}\;0\,.
\end{equation}
Indeed, $\|\phi_n\,f-f\|_{L^2}\to 0$ follows immediately by dominated convergence and \eqref{eq:properties_phi-0}, whereas $\|\Delta(\phi_n\,f)-\Delta f\|_{L^2}\to 0$ is obtained with the following argument.
First, one estimates
\begin{equation*}
\begin{split}
\|\Delta(\phi_n\,f)-\Delta f\|_{L^2}\;\leqslant\;\|(\phi_n-1)\Delta f\|_{L^2}+2\|(\nabla\phi_n)\cdot(\nabla f)\|_{L^2}+\|f\Delta\phi_n\|_{L^2}\,.
\end{split}
\end{equation*}
For the first summand in the r.h.s.~of the inequality above one has $\|(\phi_n-1)\Delta f\|_2\to 0$ as $n\to\infty$ again by dominated convergence and \eqref{eq:properties_phi-0}. 
For the second summand,
\[
\begin{split}
\|(\nabla\phi_n)\cdot(\nabla f)\|^2_{L^2}\;&\leqslant\;n^2\,c_\chi^2\int_{|x|\leqslant 2 n^{-1}}|\nabla f(x)|^2\,\ud x \\
&\leqslant\;n^2\,c_\chi^2\,\Big(\frac{4\pi\,2^3}{3\,n^3}\Big)^{2/3}\|\nabla f\|^2_{L^6(B_{2/n})} \\
&\lesssim\;\|f\|^2_{H^2(B_{2/n})}\;\xrightarrow[]{\;\; n\to +\infty\;\;}\;0
\end{split}
\]
where we used \eqref{eq:properties_phi-0}  in the first step, a H\"{o}lder inequality in the second, the Sobolev embedding in the third, and dominated convergence in the last, $B_r$ denoting the closed ball of $\mathbb{R}^3$ of radius $r$ centred at the origin.
For the third summand,
\[
\begin{split}
\|f\Delta\phi_n\|_{L^2(\mathbb{R}^3)}^2\;&\leqslant\;n^4\,c_\chi^2\int_{\mathrm{supp}(\Delta\phi_n)}|f(x)|^2\,\ud x \\
&\lesssim\;n^3\,\| f\|^2_{H^2(B_{2/n})}\,|B_{2/n}|
\\
&\lesssim\;\|f\|^2_{H^2(B_{2/n})}\;\xrightarrow[]{\;\; n\to +\infty\;\;}\;0
\end{split}
\]
where in the first step we used \eqref{eq:properties_phi-0}, in the second we used the estimate
\[
|f(x)|\;=\;|f(x)-f(0)|\;\leqslant\;\frac{C}{\sqrt{n\,}\,}\,\|\nabla f\|_{L^6(B_{2/n})}\;\lesssim\;\frac{1}{\sqrt{n\,}\,}\| f\|_{H^2(B_{2/n})}
\]
(where the constant $C$ does not depend on $n$) that follows from Morrey's inequality and the Sobolev embedding (see \cite[Section 5.6.2]{Evans-pde},
and in the last step we used dominated convergence.
Thus, \eqref{eq:H2_approx_fphi_n} is proved.

As a consequence of \eqref{eq:H2_approx_fphi_n} above, for the arbitrary $\varepsilon>0$ fixed at the beginning there is $N_\varepsilon\in\mathbb{N}$ and $\delta:=N_\varepsilon^{-1}$ such that $g_\varepsilon:=\phi_{N_\varepsilon}f$ is a $H^2$-function satisfying
\begin{equation}\label{eq:fapprox-p1}
g_\varepsilon\equiv 0\;\textrm{ on }\;B_\delta\qquad\textrm{and}\qquad\|f-g_\varepsilon\|_{H^2}\;\leqslant\;\frac{\varepsilon}{3}\,.
\end{equation}

We consider now a standard mollification $j_n*g_\varepsilon$ of $g_\varepsilon$ for some $j\in C^\infty_0(\mathbb{R}^3)$ with $\int_{\mathbb{R}^3} j\,\ud x =1$ and $j_n(x):=n^{-3}j(nx)$, $n\in\mathbb{N}$, $x\in\mathbb{R}^3$. Then $j_n*g_\varepsilon\in C^\infty(\mathbb{R}^3)\cap H^2(\mathbb{R}^3)$ and $j_n*g_\varepsilon\to g_\varepsilon$ in $H^2$ as $n\to\infty$. Therefore, there is $n_\varepsilon\in\mathbb{N}$ large enough so that $h_\varepsilon:=j_{n_\varepsilon}*g_\varepsilon$ satisfies
\begin{equation*}
\|g_\varepsilon-h_\varepsilon\|_{H^2}\;\leqslant\;\frac{\varepsilon}{3}
\end{equation*}
\emph{and}  $\mathrm{supp}(j_{n_\varepsilon})\subset B_{\delta/2}$ (where $\delta$ is the radius of the ball that contains the support of $g_\varepsilon$, see \eqref{eq:fapprox-p1} above). As a consequence, $h_\varepsilon$ vanishes in $B_{\delta/2}$: indeed, if $x\in B_{\delta/2}$ and $y\in\mathrm{supp}(j_{n_\varepsilon})$, that is, $|x|\leqslant\frac{\delta}{2}$ and $y\leqslant\frac{\delta}{2}$, one has $|x-y|\leqslant\delta$ and hence
\[
h_\varepsilon(x)\;=\;(j_{n_\varepsilon}*g_\varepsilon)(x)\;=\;\int_{\mathrm{supp}(j_{n_\varepsilon})}g_\varepsilon(x-y)\,j_{n_\varepsilon}(y)\,\ud y\;=\;0
\]
because $g_\varepsilon$ vanishes in $B_\delta$. Summarising, we have found $h_\varepsilon\in C^\infty(\mathbb{R}^3)\cap H^2(\mathbb{R}^3)$ such that
\begin{equation}\label{eq:fapprox-p2}
h_\varepsilon\equiv 0\;\textrm{ on }\;B_{\delta/2}\qquad\textrm{and}\qquad\|g_\varepsilon-h_\varepsilon\|_{H^2}\;\leqslant\;\frac{\varepsilon}{3}\,.
\end{equation}

Last, we consider a cut-off function $\zeta\in C^\infty_0([0,+\infty))$ such that 
\[
\begin{split}
\zeta(r)\;=\;1\qquad & \textrm{ for }\;r\in[0,1]  \\
\zeta(r)\;=\;0\qquad & \textrm{ for }\;r\in[2,+\infty)\,,
\end{split}
\]
and set 
\[
h_{\varepsilon,n}:=\;\zeta(n^{-1}x)h_\varepsilon(x)\qquad\quad n\in\mathbb{N}\,,\qquad x\in\mathbb{R}^3\,.
\]
Then, in complete analogy to the reasoning above, we see that for $n$ large enough each $h_{\varepsilon,n}$ belongs to $C^\infty_0(\mathbb{R}^3\!\setminus\!B_{\delta/2})\subset C^\infty_0(\mathbb{R}^3\!\setminus\!\{0\})$ and $h_{\varepsilon,n}\to h_\varepsilon$ in $H^2$ as $n\to\infty$. This implies the existence of $M_\varepsilon\in\mathbb{N}$ such that the function $f_\varepsilon:=h_{\varepsilon,M_\varepsilon}$ satisfies
\begin{equation}\label{eq:fapprox-p3}
f_\varepsilon\;\in\;C^\infty_0(\mathbb{R}^3\!\setminus\!\{0\})\qquad\textrm{and}\qquad\|h_\varepsilon-f_\varepsilon\|_{H^2}\;\leqslant\;\frac{\varepsilon}{3}\,.
\end{equation}

Using \eqref{eq:fapprox-p1}, \eqref{eq:fapprox-p2}, and \eqref{eq:fapprox-p3} above in a triangular inequality we finally conclude
\[
\|f-f_\varepsilon\|_{H^2}\;\leqslant\;\|f-g_\varepsilon\|_{H^2}+\|g_\varepsilon-h_\varepsilon\|_{H^2}+\|h_\varepsilon-f_\varepsilon\|_{H^2}\;\leqslant\;\varepsilon
\]
for a function $f_\varepsilon\;\in\;C^\infty_0(\mathbb{R}^3\!\setminus\!\{0\})$, which completes the proof of \eqref{eq:approx_to_prove}.

\section{Proof of the inclusions and the identities in \eqref{useful_inclusion} and \eqref{useful_inclusion_2+1}}\label{app:inclusions}

We prove here \eqref{useful_inclusion_2+1}: the proof applies straightforwardly also to obtain  \eqref{useful_inclusion}.

Let $f\in H^2(\mathbb{R}^3\times\mathbb{R}^3)$. We want to show that $f$ belongs to $H_0^1((\mathbb{R}^3\times\mathbb{R}^3)\!\setminus\!(\Gamma_1\cup\Gamma_2))$. Since
\[
H_0^1((\mathbb{R}^3\times\mathbb{R}^3)\!\setminus\!(\Gamma_1\cup\Gamma_2))\;=\;\overline{\,C_0^\infty((\mathbb{R}^3\times\mathbb{R}^3)\!\setminus\!(\Gamma_1\cup\Gamma_2))\,}^{\,\|\;\|_{H^1}}\,,
\]
it is enough to find, for arbitrary $\varepsilon>0$, a function $f_\varepsilon\in C_0^\infty((\mathbb{R}^3\times\mathbb{R}^3)\!\setminus\!(\Gamma_1\cup\Gamma_2))$ such that 
\begin{equation}\label{eq:triang}
\|f-f_\varepsilon\|_{H^1}\;\leqslant\;\varepsilon\,.
\end{equation}

First, since $f\in H^2(\mathbb{R}^3\times\mathbb{R}^3)\subset H^1(\mathbb{R}^3\times\mathbb{R}^3)$ and $C_0^\infty(\mathbb{R}^3\times\mathbb{R}^3)$ is dense in $H^1(\mathbb{R}^3\times\mathbb{R}^3)$, there exists $g_\varepsilon\in C_0^\infty(\mathbb{R}^3\times\mathbb{R}^3)$ such that 
\begin{equation}\label{eq:triang1}
\|f-g_\varepsilon\|_{H^1}\;\leqslant\;\frac{\varepsilon}{2}\,.
\end{equation}

Given a cut-off function $\chi\in C^\infty([0,+\infty))$ such that 
\[
\begin{split}
\chi(r)\;=\;0\qquad & \textrm{ for }\;r\in[0,1]  \\
\chi(r)\;=\;1\qquad & \textrm{ for }\;r\in[2,+\infty)\,,
\end{split}
\]
set
\[
\phi_n(x,y)\;:=\;\chi(n|x|)\chi(n|y|)\,,\qquad\quad n\in\mathbb{N}\,,\qquad (x,y)\in\mathbb{R}^3\times\mathbb{R}^3\,.
\]
Then, for any $n\in\mathbb{N}$, $\phi_n\in C^\infty (\mathbb{R}^3\times\mathbb{R}^3)$ and
\begin{equation}\label{eq:properties_phi}
\begin{split}
\phi_n(x,y)\;=\;0\qquad\qquad & \textrm{ for }\;|x|\;\leqslant\;\frac{1}{n}\textrm{ or }|y|\;\leqslant\;\frac{1}{n} \\
\phi_n(x,y)\;=\;1\qquad\qquad & \textrm{ for }\;|x|\;\geqslant\;\frac{2}{n}\textrm{ and }|y|\;\geqslant\;\frac{2}{n} \\
|\phi_n(x,y)|\;\leqslant\;c_\chi \qquad\quad\;\; & \;\;\forall (x,y)\in\mathbb{R}^3\times\mathbb{R}^3 \\
|\nabla\phi_n(x,y)|\;\leqslant\;n\,c_\chi\qquad\;\;\: & \;\;\forall (x,y)\in\mathbb{R}^3\times\mathbb{R}^3 \,,
\end{split}
\end{equation}
where here and henceforth $\nabla$ denotes the $6$-dimensional gradient and $c_\chi$ depends only on $\|\chi\|_{\sup}$ and $\|\chi'\|_{\sup}$.

Correspondingly, each function
\[
g_{\varepsilon,n}\;:=\;\phi_n \,g_\varepsilon
\]
belongs to $C^\infty_0((\mathbb{R}^3\times\mathbb{R}^3)\!\setminus\!(\Gamma_1\cup\Gamma_2))$ and we now show that
\begin{equation}\label{eq:H1approx}
\|g_{\varepsilon,n}-g_\varepsilon\|_{H^1}\;\xrightarrow[]{\;\; n\to +\infty\;\;}\;0\,.
\end{equation}
Indeed, $\|g_{\varepsilon,n}-g_\varepsilon\|_{L^2}\to 0$ follows immediately by dominated convergence and \eqref{eq:properties_phi}, whereas $\|\nabla g_{\varepsilon,n}-\nabla g_\varepsilon\|_{L^2}\to 0$ follows from the vanishing of both summands in the r.h.s.~of the  inequality
\[
\|\nabla g_{\varepsilon,n}-\nabla g_\varepsilon\|_{L^2}\;\leqslant\;\|\phi_n\nabla g_{\varepsilon}-\nabla g_\varepsilon\|_{L^2}+\|g_{\varepsilon}\nabla\phi_n\|_{L^2}\,.
\]
Explicitly, $\|\phi_n\nabla g_{\varepsilon}-\nabla g_\varepsilon\|_{L^2}\to 0$ by dominated convergence, owing to \eqref{eq:properties_phi}, whereas
\begin{equation*}
\begin{split}
\|g_{\varepsilon}\nabla\phi_n\|_{L^2(\mathbb{R}^3)}^2\;&\leqslant\;n^2\,c_\chi^2\iint_{\mathrm{supp}(\nabla\phi_n)\cap\mathrm{supp}(g_\varepsilon)} |g_\varepsilon|^2\,\ud x\,\ud y \\
&\leqslant\;n^2\,c_\chi^2\Big(\frac{\,C_{g_\varepsilon}}{n^3}\Big)^{\!\frac{2}{3}}\|g_\varepsilon\|^2_{L^6(B_{\varepsilon,n})} \\
&\lesssim\;\|g_\varepsilon\|^2_{H^2(B_{\varepsilon,n})}\;\xrightarrow[]{\;\; n\to +\infty\;\;}\;0\,,
\end{split}
\end{equation*}
where we used \eqref{eq:properties_phi} in the first step, 
a H\"{o}lder inequality and the estimate
\[
\begin{split}
B_{\varepsilon,n}\;:=&\;\;\mathrm{supp}(\nabla\phi_n)\cap\mathrm{supp}(g_\varepsilon) \\
|B_{\varepsilon,n}|\;\leqslant&\;\;\big|\big(\{|x|\leqslant 2/n\}\cup\{|y|\leqslant 2/n\}\big)\cap\mathrm{supp}(g_\varepsilon)\big|\;\leqslant C_{g_\varepsilon} n^{-3}
\end{split}
\]
in the second step, where $C_{g_\varepsilon}$ depends only on the radius of $\mathrm{supp}(g_\varepsilon)$, the continuous embedding $H^2(B_{\varepsilon,n})\subset L^6(B_{\varepsilon,n})$ in the third step, and dominated convergence in the last step.

As a consequence of \eqref{eq:H1approx} above, for the arbitrary $\varepsilon>0$ fixed at the beginning there is $N_\varepsilon\in\mathbb{N}$ such that $f_\varepsilon:=g_{\varepsilon,N_{\varepsilon}}\in C^\infty_0((\mathbb{R}^3\times\mathbb{R}^3)\!\setminus\!(\Gamma_1\cup\Gamma_2))$ satisfies
\begin{equation}\label{eq:triang2}
\|g_\varepsilon-f_\varepsilon\|_{H^1}\;\leqslant\;\frac{\varepsilon}{2}\,.
\end{equation}
A triangular inequality based on \eqref{eq:triang1} and \eqref{eq:triang2} then yields \eqref{eq:triang}, thus completing the proof of \eqref{useful_inclusion_2+1}.

Clearly, \eqref{eq:H1approx} also shows that to any function in $C^\infty_0(\mathbb{R}^3\times\mathbb{R}^3)$ there is a function in $C^\infty_0((\mathbb{R}^3\times\mathbb{R}^3)\!\setminus\!(\Gamma_1\cup\Gamma_2))$ arbitrarily close in the $H^1$-norm, which implies that the spaces $H^1(\mathbb{R}^3\times\mathbb{R}^3)$ and $H^1_0((\mathbb{R}^3\times\mathbb{R}^3)\!\setminus\!(\Gamma_1\cup\Gamma_2))$ coincide. This yields the identity in \eqref{useful_inclusion_2+1}.

The arguments above 
apply virtually unchanged both for the inclusion $H^2(\mathbb{R}^3)\subset H^1_0(\mathbb{R}^3\!\setminus\!\{0\})$ and for the identity $H^1_0(\mathbb{R}^3\!\setminus\!\{0\})=H^1(\mathbb{R}^3)$, with the obvious removal of one variable, thus proving \eqref{useful_inclusion}.


\section*{Acknowledgements}

\noindent We warmly thank G.~Dell'Antonio and L.~Guerini for enlightening discussions on the subject.


\def\cprime{$'$}

\end{document}